\providecommand{\tabularnewline}{\\}
\theoremstyle{plain}
\newtheorem{prop}{\protect\propositionname}
\theoremstyle{definition}
\newtheorem{defn}{\protect\definitionname}
\theoremstyle{plain}
\newtheorem{cor}{\protect\corollaryname}
\theoremstyle{plain}
\newtheorem{thm}{\protect\theoremname}
\theoremstyle{plain}
\newtheorem{lem}{\protect\lemmaname}
\theoremstyle{definition}
\newtheorem{example}{\protect\examplename}
\newtheorem{lyxalgorithm}{\protect\algorithmname}
\providecommand{\corollaryname}{Corollary}
\providecommand{\definitionname}{Definition}
\providecommand{\lemmaname}{Lemma}
\providecommand{\propositionname}{Proposition}
\providecommand{\examplename}{Example}
\providecommand{\theoremname}{Theorem}
\providecommand{\algorithmname}{Algorithm}
\begin{document}
\title{Aggregative Efficiency of Bayesian Learning in Networks\thanks{We thank Nageeb Ali, Simon Board, Aislinn Bohren, Matt Elliott, Mira
Frick, Drew Fudenberg, Ben Golub, Sanjeev Goyal, Faruk Gul, Rachel Kranton, Luis Henrique Ribeiro Linhares, George Mailath, Suraj
Malladi, Moritz Meyer-ter-Vehn, Pooya Molavi, Juan Ortner, Mallesh Pai, Andy Postlewaite,
Matthew Rabin, Evan Sadler, Philipp Strack, Omer Tamuz, Fernando Vega-Redondo,
Leeat Yariv, and various conference and seminar participants for helpful
comments. Kevin He thanks the California Institute of Technology for
hospitality when some of the work on this paper was completed.}}
\author{Krishna Dasaratha\thanks{Boston University. Email: \texttt{\protect\href{mailto:krishnadasaratha@gmail.com}{krishnadasaratha@gmail.com}}}
\and Kevin He\thanks{University of Pennsylvania. Email: \texttt{\protect\href{mailto:hesichao@gmail.com}{hesichao@gmail.com}}}}
\date{February 18, 2026}
\maketitle
\begin{abstract}
{\normalsize{}\thispagestyle{empty}
\setcounter{page}{0}}{\normalsize\par}

When individuals in a social network learn about an unknown state
from private signals and neighbors' actions, the network structure
often causes information loss. We consider rational agents and Gaussian
signals in the canonical sequential social-learning problem and ask
how the network changes the efficiency of signal aggregation. Rational
actions in our model are log-linear functions of observations and
admit a signal-counting interpretation of accuracy. Networks where agents observe multiple neighbors
but not their common predecessors \emph{confound} information, and
even a small amount of confounding can lead to much lower accuracy. In  a class of  networks where agents move in generations and observe the previous generations, we quantify the information loss with an \emph{aggregative
efficiency} index.  Aggregative efficiency is a simple function of network
parameters: increasing in observations and decreasing in confounding.
Later generations contribute little additional information, even when generations are arbitrarily large and agents observe arbitrarily far into the past. 

\bigskip{}

\noindent \textbf{Keywords}: social networks, sequential social learning,
Bayesian learning, confounding
\end{abstract}
\begin{flushleft}
{\small{}\newpage}{\small\par}
\par\end{flushleft}

\interfootnotelinepenalty=10000
\renewcommand*\&{and}

\section{Introduction}

In many settings, people learn from both their own private information and others' past choices. Such learning can be challenging when people do not observe everyone's  actions, but only those of their neighbors in a network. For instance, suppose an agent observes multiple neighbors whose choices have all been influenced by one person's past action. If the agent observed this  person's action, she  would rationally ``anti-imitate'' it to subtract out its duplicate effect and infer  her other neighbors' private information \citep{eyster2014extensive}. When the agent does not observe the shared source, however,  the observation network generates an obstruction to learning which we term \emph{information confounding}: it is impossible to fully incorporate the private information of these
neighbors without over-weighting the private information of their common influence.


This paper shows that networks generating information confounding can severely obstruct Bayesian social learning. In standard models of rational social learning, comparing learning across networks is challenging. We work with the canonical sequential
social-learning model, which features a binary state, but make two assumptions to make our analysis
tractable. First,  agents have Gaussian
private signals about this binary state. Second, agents have
sufficiently informative actions so that their behavior fully reveal
their beliefs.\footnote{The simplest example is that agents choose actions equal to their
posterior beliefs given their information. Our analysis also applies
more generally to any decision problem where actions fully communicate beliefs.} This \emph{rich-signals, rich-actions} world removes some other
obstructions to efficient learning and isolates the role of the social network. Our main results apply this framework to classes of networks where agents arrive in generations and observe some or all members of recent generations. In these networks, we show that information confounding can make learning arbitrarily slow.

To formalize these findings, we first establish several general properties
of social learning in our model. The unique rational strategy profile of the social-learning
game has a log-linear form. We characterize the  strategy
profile that solves agents' signal-extraction problems and give a
procedure to efficiently compute every agent's accuracy in any network. In our model, it turns out the 
action of each rational agent is distributed as if she saw some (possibly non-integer)
number of independent private signals. This signal-counting interpretation gives a simple measure of accuracy in the binary-state setting  studied in \cite{banerjee1992simple}, \cite*{bikhchandani1992theory}, and much of the subsequent sequential social-learning literature. We can quantify each agent's learning outcome in any network in units of private signals. These properties allow the numerical computation of learning outcomes in large networks, as we demonstrate on a dataset of collegiate Facebook friendship networks among students at various universities.

Certain network structures aggregate information quite poorly, and we first illustrate this with several examples in finite networks. The leading example considers a network where an agent has many neighbors who in turn share some common predecessors that the agent does not observe. We show that observing any number of neighbors who share just one common predecessor is less informative than observing four independent signals. So even a small amount of confounding can lead to arbitrarily large information loss. The basic finding that some networks lead to substantial information loss also holds in our dataset of Facebook networks. There is considerable variation in information loss across schools, with losses ranging from 25\% to 74\% of available signals, and much of this variation can be explained by a standard network statistic. To study learning outcomes beyond these (analytic and numerical) examples, we define a measure of the rate of social learning on a network which we call \textit{aggregative efficiency}. This measure relies on the signal-counting interpretation of accuracy to determine what fraction of  private signals are incorporated into agents' beliefs and what fraction are effectively lost. For example, aggregative efficiency is $\frac13$ if later agents learn about as well as if they saw $\frac13$ of their predecessors' private signals (and no other information).

Nearly complete information loss is not particular to our finite examples, and our main results show that aggregative efficiency can be arbitrarily small in several classes of \emph{generations
networks}. Agents are arranged into generations of size $K$, and each
agent in generation $t$ observes some subset of her predecessors from generation $t-1$ and earlier. This network structure could correspond to actual generations
in families or countries, or successive cohorts in organizations like
firms or universities. A broad insight is that these networks often cannot
sustain much learning: even if generation sizes are large, later
generations contribute little additional information.

This information loss is easiest to see in the special case of \emph{symmetric} observation
structures between generations: all agents observe the same number of neighbors from the immediate previous generation 
and all pairs of distinct agents in the same generation share the
same number of common neighbors. In these networks, we obtain an exact expression for aggregative efficiency. No matter the size of the generations,
social learning accumulates no more than two signals per
generation asymptotically. Therefore, aggregative efficiency is arbitrarily
close to zero when generations are large. A large number of endogenously correlated observations, such as the actions of all predecessors from the previous generation, can be less informative than a small number of independent signals. This conclusion holds even for networks where one's neighbors have large and almost non-overlapping observation sets, such as when they see many distinct predecessors and each pair of neighbors only shares one predecessor in common. 

Our expression for aggregative efficiency in symmetric generations networks also lets us compare learning across networks. Aggregative efficiency is higher when agents have more observations but lower when pairs of agents have more common neighbors, thus quantifying the trade-offs as we change the
network. For instance, increasing the density of the observation network has two
countervailing effects on learning: it speeds up the per-generation
learning rate by adding more social observations, but also slows it
down by lowering the informational content of each observation through
extra confounding.

A series of results show the basic forces obstructing information aggregation extend beyond symmetric generations networks. First, we show it is not essential that only the previous generation is observed. If agents observe all predecessors in the past $L \geq 1$ generations, the number of signals aggregated per generation asymptotically is bounded above by $4$. Notably, this bound holds for any generation size $K$ and any number of previous generations observed $L$. Second, we show there is substantial information loss beyond our strong notion of symmetry. We analyze a class of random networks with a regular structure (all agents have the same number of observations) and show that in a specific early generation, information confounds prevent agents from aggregating almost all newly available signals. Numerical results suggest these forces are broader still: in a variety of random networks, the number of signals aggregated per generation always drops below $2$ after early generations. Finally, confounding limits the rate of learning but can also cause almost total information loss in early periods. We show this for networks where agents observe all predecessors from the previous generation (and no other agents).



These results raise the question of whether any networks with cohort structures can sustain efficient learning. We discuss two organizational changes that can improve outcomes. First, we demonstrate that opening up new channels of communication
in an organization, such as starting a mentorship program where seniors
share their private signals with newcomers, can have large benefits
for organizational learning. We show this type of mentorship can essentially fully correct large learning inefficiencies. Second, we show that \emph{information
silos} \textemdash{} partitioning some employees into insular groups
that do not communicate with each other \textemdash{} improve executives'
information aggregation at the expense of workers' learning. Asymmetric networks can improve learning in some parts of organizations by degrading learning in other parts, which may be beneficial if certain agents make particularly consequential decisions.

Our main results characterize the aggregative efficiency of learning in generations networks, and our final result shows these characterizations have implications for welfare. If signals are very precise or the planner is very impatient, welfare will depend primarily on the learning outcomes of the first few agents. We relate aggregative efficiency to welfare outside of these cases: networks with higher aggregative efficiency lead to higher welfare when signals are not too precise and the social welfare function is sufficiently patient. We also give an example showing the arbitrarily large information loss in generations networks can have arbitrarily large welfare consequences.

\section{\label{sec:Model}Model}

There are two equally likely states of the world, $\omega\in\{0,1\}$.
An infinite sequence of agents indexed by $i\in\mathbf{\mathbb{N}_{+}}$
move in order, each acting once. (In some examples, we work with a finite subset $\{1,\hdots,n\}$ of this infinite sequence.) On her turn, agent $i$ observes
a\emph{ private signal }$s_{i}\in\mathbb{R}$ and the actions of her
\emph{neighbors}, $N(i)\subseteq\{1,...i-1\}.$ Agent $i$ then chooses
an \emph{action} $a_{i}\in(0,1)$ to maximize the expectation of $u_{i}(a_{i},\omega)\coloneqq-(a_{i}-\omega)^{2}$
given all of her information. So her action is equal to her belief about the probability that $\omega=1$.\footnote{The quadratic-loss form of the utility functions is not crucial, and
our results on learning remain unchanged if actions are ``rich''
enough to fully reflect beliefs (see \citet{ali2018role} for details).}

We consider a Gaussian information structure where private signals
$(s_{i})$ are conditionally i.i.d. given the state. We have $s_{i}\sim\mathcal{N}(1,\sigma^{2})$
when $\omega=1$ and $s_{i}\sim\mathcal{N}(-1,\sigma^{2})$ when $\omega=0,$
where $\mathcal{N}(a,b^{2})$ is the Gaussian distribution with mean
$a$ and variance $b^{2},$ and $0<1/\sigma^{2}<\infty$ is the private-signal precision.

Neighborhoods of different agents define a deterministic network $M$,
where there is a directed link $j \rightarrow i$ if and only if $j\in N(i)$.
We also view $M$ as the adjacency matrix, with $M_{i,j}=1$ if $j\in N(i)$
and $M_{i,j}=0$ otherwise. Since $N(i)\subseteq\{1,...,i-1\},$ $M$
is lower triangular. The network $M$ is common knowledge. The goal
of this paper is to explore to what extent network structures can hinder efficient information aggregation.

With the network $M$ fixed, let $d_{i}:=|N(i)|$ denote the number
of $i$'s neighbors. A \emph{strategy} for agent $i$ is a function
$A_{i}:(0,1)^{N(i)}\times\mathbb{R}\to(0,1)$, where $A_{i}(a_{j(1)},...,a_{j(d_{i})},s_{i})$
specifies $i$'s play after observing actions $a_{j(1)},...,a_{j(d_{i})}$
from neighbors\footnote{We use $j(k)$ to indicate the $k$-th neighbor of $i$ and suppress
the dependence of $j$ on $i$ when no confusion arises.} $N(i)=\{j(1),...,j(d_{i})\}$ and when her own private signal is $s_{i}$.\footnote{It is without loss  to focus on pure strategies,
since every belief about the state induces a unique optimal action
for each agent.} There is a unique 
 strategy profile $(A_{i}^{*})_{i\in\mathbb{N}_{+}}$ consistent with common knowledge of rationality at the interim stage: for all
$i$ and for all observations of $i,$ $A_{i}^{*}$ maximizes 
Bayesian expected utility given the posterior belief about $\omega.$\footnote{We will see that in the rational strategy profile, $s_{i}\mapsto A_{i}^{*}(a_{j(1)},...,a_{j(d_{i})},s_{i})$
is a surjective function onto $(0,1)$ for all $i$ and $a_{j(1)},...,a_{j(d_{i})}$.
So all observations are on-path and the
posterior beliefs are well-defined.} Uniqueness of this profile follows from the sequential nature of the social-learning game and the existence
of a unique optimal action at any belief. Agent 1 has no social observations, so there is a unique rational strategy
$A_{1}^{*}(s_{1})$. Given unique rational best responses for agents $1,\hdots,i-1$, there is a unique rational best response $A_{i}^{*}$, as we have fixed the behavior of the preceding agents. Proceeding by induction, there is
a unique  strategy profile $(A_{i}^{*})_{i\in\mathbb{N}_{+}}$ consistent with common knowledge of rationality at the interim stage, which we abbreviate as ``rational.''

\section{\label{sec:Linearity-of-Equilibrium}Basic Results about Beliefs
and Learning}

In this section, we show that rational actions are log-linear and satisfy a signal-counting interpretation. We then use these properties to demonstrate information confounding in several examples. The final subsection gives a condition for long-run learning and defines an asymptotic measure of how efficiently information is aggregated.

\subsection{Optimal Actions Are Log-Linear}

As is common in analyzing social-learning problems, we will find it
convenient to work with the following log-transformations of variables:
$\lambda_{i}\coloneqq\ln\left(\frac{\mathbb{P}[\omega=1|s_{i}]}{\mathbb{P}[\omega=0|s_{i}]}\right)$,
$\ell_{i}\coloneqq\ln\left(\frac{a_{i}}{1-a_{i}}\right)$. We call
$\lambda_{i}$ the \emph{log-signal} of $i$ and $\ell_{i}$ the \emph{log-action
}of $i.$ These changes are bijective, so it is without loss to use
the log versions. Write ${S}_{i}^{*}(\ell_{j(1)},...,\ell_{j(d_{i})},\lambda_{i})$
for $i$'s rational log-strategy: the (unique) rational map
from $i$'s neighbors' log-actions  and $i$'s  log-signal
to $i$'s log-action. In this section, we show that every ${S}_{i}^{*}$
is a linear function of its arguments, with coefficients that only
depend on the network $M$ and not on the precision of private signals.

The following result shows the optimal aggregation is linear in log-signals
and log-actions (\emph{log-linear}) and gives an explicit expression
for the coefficients. All proofs are in the Appendix.
\begin{prop}
\label{prop:linear}For each agent $i$ with $N(i)=\{j(1),...,j(d_{i})\},$
there exist constants $(\beta_{i,j(k)})_{k=1}^{d_{i}}$ so that $i$'s
rational log-strategy is given by
\[
{S}_{i}^{*}(\ell_{j(1)},...,\ell_{j(d_{i})},\lambda_{i})=\lambda_{i}+\sum_{k=1}^{d_{i}}\beta_{i,j(k)}\ell_{j(k)}.
\]
The vector of coefficients $\vec{\beta}_{i}$ is given by 
\[
\vec{\beta}_{i}=2\left(\mathbb{E}[(\ell_{j(1)},...,\ell_{j(d_{i})})\mid\omega=1]\times\textsc{\emph{Cov}}[\ell_{j(1)},...,\ell_{j(d_{i})}\mid\omega=1]^{-1}\right),
\]
 where $\textsc{\emph{Cov}}[\ell_{j(1)},...,\ell_{j(d_{i})}\mid\omega=1]^{-1}$
is the inverse of the conditional covariance matrix. These coefficients do not depend on the signal variance $\sigma^{2}.$
\end{prop}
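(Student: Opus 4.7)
The plan is to proceed by strong induction on $i$, maintaining the joint hypothesis that for every $j<i$, the equilibrium log-action $\ell_j$ is a linear combination of the log-signals $(\lambda_1,\dots,\lambda_j)$ with coefficients depending only on $M$, and that $\ell_j$ is state-symmetric in the sense $\ell_j\mid\omega=0\stackrel{d}{=}-\ell_j\mid\omega=1$. The base case $i=1$ is immediate: with a flat prior, the posterior log-likelihood ratio equals the log-signal, so $\mathcal{L}_1^*(\lambda_1)=\lambda_1$, and symmetry follows from $\lambda_i=2s_i/\sigma^2$ together with the $\pm 1$ symmetry of the Gaussian signal model.

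For the inductive step, fix $i$ and write $\vec{\ell}:=(\ell_{j(1)},\dots,\ell_{j(d_i)})$. By the inductive hypothesis, each component of $\vec{\ell}$ is a linear combination of $(\lambda_1,\dots,\lambda_{i-1})$, which are conditionally independent Gaussians given $\omega$; hence $\vec{\ell}$ is jointly Gaussian conditional on the state, and independent of $\lambda_i$ given $\omega$ since $s_i$ is drawn independently of all earlier signals. Writing $\mu:=\mathbb{E}[\vec{\ell}\mid\omega=1]$ and $\Sigma:=\text{Cov}[\vec{\ell}\mid\omega=1]$, symmetry yields $\vec{\ell}\mid\omega=1\sim\mathcal{N}(\mu,\Sigma)$ and $\vec{\ell}\mid\omega=0\sim\mathcal{N}(-\mu,\Sigma)$.

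Now I would do the standard Gaussian location-family log-likelihood computation: because the two conditional densities for $\vec{\ell}$ differ only in their mean vectors $\pm\mu$ with common covariance $\Sigma$, the quadratic terms cancel and the log-likelihood ratio from observing $\vec{\ell}$ equals $2\mu^{\top}\Sigma^{-1}\vec{\ell}$. Adding the independent contribution $\lambda_i$ from the private signal (which on its own is the log-likelihood ratio given a flat prior) yields the posterior log-likelihood ratio, which is precisely $i$'s optimal log-action. This delivers the claimed linear form with $\vec{\beta}_i=2\mu^{\top}\Sigma^{-1}$. To close the induction, I note $\ell_i=\lambda_i+\vec{\beta}_i\vec{\ell}$ is linear in log-signals, and its sign flips when all log-signals flip sign, preserving the symmetry hypothesis.

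For the final claim that $\vec{\beta}_i$ does not depend on $1/\sigma^2$, I would argue by homogeneity: since $\lambda_k=2s_k/\sigma^2$, both the mean and variance of each $\lambda_k$ (and by the IH and linearity, of each log-action) scale as $1/\sigma^2$. Therefore $\mu$ scales as $1/\sigma^2$ while $\Sigma^{-1}$ scales as $\sigma^2$, making $\vec{\beta}_i=2\mu^{\top}\Sigma^{-1}$ invariant. The only non-routine step is handling the first part cleanly --- in particular ensuring that the inductive hypothesis carries the \emph{symmetric} Gaussian location structure forward, since this symmetry (rather than mere joint Gaussianity) is what makes the log-likelihood ratio linear rather than quadratic in $\vec{\ell}$. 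Everything else is bookkeeping of conditional means and covariances.
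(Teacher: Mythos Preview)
Your proposal is correct and follows essentially the same route as the paper: induction on $i$, using Lemma~\ref{lem:trick} to get the symmetric Gaussian location structure $\mathcal{N}(\pm\vec{\mu},\Sigma)$ for the neighbors' log-actions, then reading off the linear log-likelihood ratio $2\vec{\mu}^{\prime}\Sigma^{-1}\vec{\ell}$. The only cosmetic difference is in the $\sigma^{2}$-independence step, where the paper passes through an explicit formula $\vec{\beta}_i=\vec{1}^{\prime}\hat{W}^{\prime}(\hat{W}\hat{W}^{\prime})^{-1}$ in terms of the signal-weight matrix $\hat{W}$ (their Lemma~\ref{lem:beta_and_W}) before invoking the inductive hypothesis on $W$, whereas your homogeneity argument cancels the $1/\sigma^{2}$ factors in $\mu$ and $\Sigma$ directly; both amount to the same induction.
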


For general private-signal distributions, models of Bayesian updating
in networks have tractability issues, as \citet{golub_sadler_2016}
point out. The key lemma to proving Proposition \ref{prop:linear}
shows that given our Gaussian information structure, agent $i$'s
observations have a jointly Gaussian distribution conditional on $\omega$.
This permits us to study optimal inference in closed form. The interpretation
of the inverse covariance matrix that appears in the coefficients
$\vec{\beta}_{i}$ is that $i$ rationally discounts the actions of
two neighbors $j(1)$ and $j(2)$ if their actions are conditionally
correlated.

The proposition (along with the measure of accuracy in the following subsection) can be used to efficiently compute action distributions numerically. We describe an algorithm in detail in Appendix~\ref{a:computation}, and note this algorithm has polynomial runtime in any network but is most efficient in sparse networks. We implement our algorithm in a dataset of collegiate Facebook friendship networks, where some of the networks have more than 20,000 nodes.

\subsection{Measure of Accuracy}

We would like to evaluate networks in terms of their social-learning
accuracy so that we can compare the rates of Bayesian learning in different
networks. Towards a measure of accuracy, imagine that agent $i$'s
only information about $\omega$ consists of $n\in\mathbb{N}_{+}$
conditionally independent private signals. Then, the Bayesian $i$
would play the log-action equal to the sum of the $n$ log-signals,
and it turns out (by Lemma \ref{lem:trick} in the Appendix) her behavior
would follow the conditional distributions $\ell_{i}\sim\mathcal{N}\left(\pm n\cdot\frac{2}{\sigma^{2}},n\cdot\frac{4}{\sigma^{2}}\right)$,
with the positive and negative means in states $\omega=1$ and $\omega=0$
respectively. We quantify learning accuracy using distributions of
this form that allow for non-integer $n$, thus denominating accuracy
in the units of private signals.
\begin{defn}
\label{def:Social-learning-aggregates}Social learning \emph{aggregates
$r\in\mathbb{R}_{+}$ signals by agent $i$} if the rational log-action
$\ell_{i}$ has the conditional distributions $\mathcal{N}\left(\pm r\cdot\frac{2}{\sigma^{2}},r\cdot\frac{4}{\sigma^{2}}\right)$
in the two states. If this holds for some $r\in\mathbb{R}_{+}$, then
we say $i$'s behavior has a \emph{signal-counting interpretation}.
\end{defn}
Having a signal-counting interpretation imposes a meaningful restriction on $i$'s action profile, even if $i$'s log-action is a linear function of her log-signal and neighbors' log-actions. Indeed, if an action profile results in $i$ putting certain 
weights $(w_{i,j})_{j\le i}$ on log-signals $(\lambda_{j})_{j\le i},$
then $\ell_{i}$ has a signal-counting interpretation if and only
if $\sum_{j=1}^{i}w_{i,j}=\sum_{j=1}^{i}w_{i,j}^{2}.$ But as the next result shows, the rational log-actions always admit
a signal-counting interpretation in any network.
\begin{prop}
\label{prop:signal_counting}There exist $(r_{i})_{i\ge1}$ so that
social learning aggregates $r_{i}$ signals by agent $i.$ These $(r_{i})_{i\ge1}$
depend on the network $M,$ but not on private-signal precision.
\end{prop}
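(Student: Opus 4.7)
The plan is to exploit the explicit formula $\vec{\beta}_i = 2\,\vec{m}_i^\top\Sigma_i^{-1}$ from Proposition \ref{prop:linear} to derive the algebraic identity $\textsc{Var}[\ell_i\mid\omega=1] = 2\,\mathbb{E}[\ell_i\mid\omega=1]$, which, combined with conditional Gaussianity of $\ell_i$, will yield the signal-counting interpretation with $r_i := \tfrac{\sigma^2}{2}\mathbb{E}[\ell_i\mid\omega=1]$.

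I first establish joint conditional Gaussianity. Applying Proposition \ref{prop:linear} to every agent $j\le i$, each $\ell_j$ is a linear combination of $(\lambda_1,\ldots,\lambda_j)$; together with Lemma \ref{lem:trick}, this implies $(\lambda_i,\vec{\ell}_{N(i)})$ is jointly Gaussian conditional on $\omega$, and furthermore (a) its conditional covariance matrix is state-invariant, because $\textsc{Var}[\lambda_j\mid\omega]=4/\sigma^2$ regardless of $\omega$, and (b) its conditional mean in $\omega=0$ is the negative of its conditional mean in $\omega=1$, because $\mathbb{E}[\lambda_j\mid\omega=0]=-\mathbb{E}[\lambda_j\mid\omega=1]$ by Lemma \ref{lem:trick}. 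Writing $\vec{m}_i := \mathbb{E}[\vec{\ell}_{N(i)}\mid\omega=1]$ and $\Sigma_i := \textsc{Cov}[\vec{\ell}_{N(i)}\mid\omega=1]$, and using that $\lambda_i$ is independent of $\vec{\ell}_{N(i)}$, I then compute
\[
\mathbb{E}[\ell_i\mid\omega=1]=\tfrac{2}{\sigma^2}+\vec{\beta}_i\vec{m}_i=\tfrac{2}{\sigma^2}+2\,\vec{m}_i^\top\Sigma_i^{-1}\vec{m}_i,
\]
\[
\textsc{Var}[\ell_i\mid\omega=1]=\tfrac{4}{\sigma^2}+\vec{\beta}_i\Sigma_i\vec{\beta}_i^\top=\tfrac{4}{\sigma^2}+4\,\vec{m}_i^\top\Sigma_i^{-1}\vec{m}_i,
\]
so the variance is exactly twice the mean. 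Setting $r_i := \tfrac{\sigma^2}{2}\mathbb{E}[\ell_i\mid\omega=1]$ gives $\ell_i\mid\omega=1 \sim \mathcal{N}(r_i\cdot\tfrac{2}{\sigma^2},\,r_i\cdot\tfrac{4}{\sigma^2})$, and the $\omega=0$ statement follows from (a) and (b). Positivity $r_i\ge 1$ holds because $\Sigma_i^{-1}$ is positive semidefinite, and $r_i$ is $\sigma^2$-independent because Proposition \ref{prop:linear} guarantees the $\vec{\beta}_i$ (hence the signal weights $w_{i,j}$, and hence $r_i=\sum_j w_{i,j}$) do not depend on $\sigma^2$.

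I do not foresee a substantive obstacle: the proof reduces to checking that the specific coefficient formula of Proposition \ref{prop:linear} makes the quadratic form $\vec{m}_i^\top\Sigma_i^{-1}\vec{m}_i$ appear with weight $2$ in the mean and weight $4$ in the variance, matching the ``$\tfrac{2}{\sigma^2}$ vs.\ $\tfrac{4}{\sigma^2}$'' baseline ratio for a single log-signal given by Lemma \ref{lem:trick}. The one delicate point is justifying that the conditional covariance matrix used to define $\vec{\beta}_i$ is state-invariant, so that the signal-counting property transfers uniformly to the $\omega=0$ state; this is exactly observation (a) above.
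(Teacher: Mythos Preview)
Your proposal is correct and follows essentially the same approach as the paper: both use the explicit coefficient formula $\vec{\beta}_i=2\vec{m}_i^\top\Sigma_i^{-1}$ from Proposition~\ref{prop:linear} to compute $\mathbb{E}[\ell_i\mid\omega=1]=\tfrac{2}{\sigma^2}+2\vec{m}_i^\top\Sigma_i^{-1}\vec{m}_i$ and $\textsc{Var}[\ell_i\mid\omega=1]=\tfrac{4}{\sigma^2}+4\vec{m}_i^\top\Sigma_i^{-1}\vec{m}_i$, yielding the key identity that the conditional variance equals twice the conditional mean. Your write-up is in fact slightly more complete than the paper's, since you spell out the $\omega=0$ case via the sign-flip of means and state-invariance of covariances, and you give the clean argument for $\sigma^2$-independence via $r_i=\sum_j w_{i,j}$, both of which the paper leaves implicit.
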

The signal-counting interpretation gives a way to compare agents' accuracy and welfare across different networks or positions in a given network in a binary-state setting. Rather than comparing the full distributions of beliefs, we can compare the summary statistics $r_i$. A consequence is that agents' beliefs, which \emph{a priori} are multi-dimensional objects, are in fact ranked in the standard Blackwell ordering: a higher value of $r_i$ implies a weakly higher expected utility for any decision problem.

Such comparisons are straightforward in a different framework with a Gaussian state and Gaussian signals (e.g., \cite{morris2002social} and, in the context of social learning, \cite*{dasaratha2020learning}). In these models, Bayesian agents' beliefs are ranked by their precisions and the analogous number of signals aggregated is  simply proportional to precision. We use Proposition~\ref{prop:signal_counting} to study the binary-state model used in most of the sequential social-learning literature (beginning with \cite{banerjee1992simple} and \cite*{bikhchandani1992theory}). But we expect information confounding to have similar effects, including severely obstructing learning in some networks, with a Gaussian state.\footnote{Indeed, one can show the example networks in Section~\ref{sec:Examples} give the same results in either model.}

\subsection{\label{sec:Examples}Examples of Information Confounding in Networks}

We say the network causes \emph{information confounding} if there are multiple paths between $i$ and $j$ in the network but the later agent $j$ does not observe $i$ directly. We begin with several examples of social learning in finite networks that cause information confounding. These examples illustrate how information confounding obstructs social learning, highlight the extent of possible information
loss due to confounding, and provide intuition for our main results
on generations networks.

\begin{example}[The Shield and the Diamond]
\label{ex:shield_diamond}

Consider two network structures with four agents, as shown in Figure
\ref{fig:diamond_shield}. In a \emph{shield network}, agent $4$ observes agents $1,2,$ and $3$ while agents $2$ and $3$ observe agent $1$. In a \emph{diamond network}, agent $4$ observes
agents $2$ and $3$ while agents $2$ and $3$ observe agent $1$.\footnote{Our terminology follows \citet{eyster2014extensive}, who focus on
rational learning in networks without diamonds.}

\begin{figure}
\begin{centering}
\includegraphics[scale=0.25]{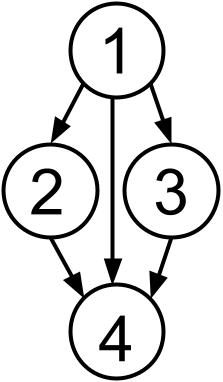}\qquad{}\includegraphics[scale=0.25]{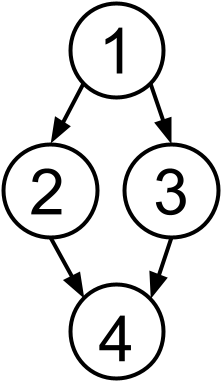}
\par\end{centering}
\caption{\label{fig:diamond_shield}\textbf{Left}: The shield network with
four agents. \textbf{Right}: The diamond network with four agents.}
\end{figure}

In a shield network, agent $4$ observes
all predecessors and can compute the private signals of all agents. To see this, note that $\ell_1 = \lambda_1$, $\ell_2=\lambda_1+\lambda_2$, and $\ell_3 = \lambda_1+\lambda_3$. So $$\ell_4=\lambda_4+\ell_2+\ell_3-\ell_1 = \sum_{j=1}^4 \lambda_j$$
is the optimal action given her private signal and those of her three predecessors, and $r_4=4$. As in \cite{eyster2014extensive}, the optimal action involves anti-imitating, or placing a negative weight on, agent $1$'s action.

In a diamond network, however, agent $4$ observes the actions of agents $2$
and $3$ that combine their private signals with agent $1$'s signal,
which agent $4$ does not observe. Agent $4$ faces an unavoidable
tradeoff between overweighting agent $1$'s signal and underweighting
agents $2$ and $3$'s signals. As in the shield network, we have $\ell_1 = \lambda_1$, $\ell_2=\lambda_1+\lambda_2$, and $\ell_3 = \lambda_1+\lambda_3$. Using Proposition~\ref{prop:linear}, we
can calculate that now $$\ell_4 = \lambda_4 + \frac{2}{3}\ell_2 + \frac{2}{3}\ell_3 = \frac{4}{3} \lambda_1 + \frac{2}{3}\lambda_2+ \frac{2}{3}\lambda_3 + \lambda_4,$$and therefore $r_{4}=\frac{11}{3}<4$. Even though agent $4$
is Bayesian and optimally adjusts for the confounding signal that
she does not observe, some information is lost. This information loss
is not too severe here, but the next example shows it can be much
worse with more agents.

\end{example}

\begin{example}\label{ex:Many-Neighbors-with}

To see that confounding can lead to more severe information loss, we next generalize the diamond network to allow more agents. Consider a network with agents in three generations, shown in Figure
\ref{fig:A-three-generation-network}.\footnote{In Section \ref{sec:The-Generations-Network} we will study generations
of equal size.} In the first generation, agents $1,2,\hdots,K_{1}$ have no neighbors.
In the second generation, agents $K_{1}+1,K_{1}+2,\hdots,K_{1}+K_{2}$
observe all agents in the first generation. Finally, the third generation
consists of a single agent who observes all agents in the second generation
but does not observe the first generation. The purpose of this example
is to study the beliefs of an agent with many neighbors who all observe
a common confound.\footnote{We could equivalently relax the assumption of i.i.d.
signals and replace the first generation with a single agent with
a (potentially) more precise signal.}

\begin{figure}
\begin{centering}
\includegraphics[scale=0.6]{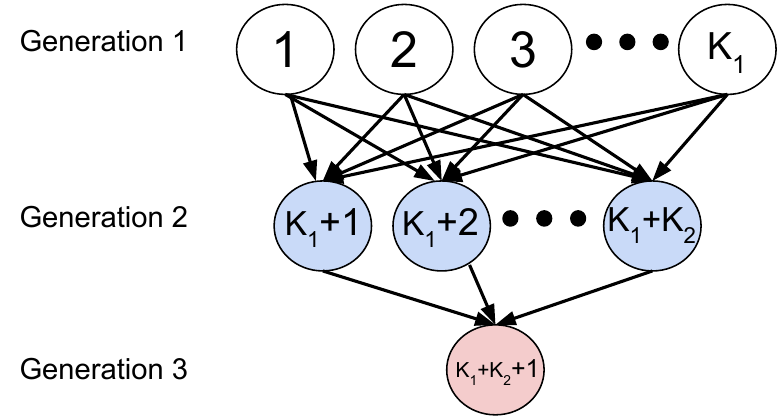}
\par\end{centering}
\caption{\label{fig:A-three-generation-network}A three-generation network
with $K_{1}$ agents in generation 1, $K_{2}$ agents in generation
2, and one agent in generation 3.}
\end{figure}

The agent in generation 3 rationally calculates the log-likelihood
of state $\omega=1$ by taking a weighted sum of the log-actions of generation $2$ agents and  her own signal, where the weights depend on $K_{1}$ and $K_{2}$. As in Example \ref{ex:shield_diamond}, the final agent faces an unavoidable tradeoff between overweighting generation $1$'s private signals and underweighting generation $2$'s private signals. 
Using Proposition~\ref{prop:linear}, we
can compute that the optimal action places weight $\frac{1+K_{1}}{1+K_{1}K_{2}}$ on each  generation 2 action
 (see Appendix \ref{subsec:details_ex} for details of the calculations in this example). When generation $2$ is large, this
weight is close to $1/K_{2}$: it is optimal for the final agent to severely underweight
the private signals of generation $2$.

We can also show that the
actions of the agents in generation $2$ are distributed as if they
see $1+K_{1}$ conditionally independent private signals, while the
action of the final agent is distributed as if she sees $1+\frac{K_{2}+K_{1}K_{2}}{1+K_{1}K_{2}}\cdot(1+K_{1})$
such signals. The difference between the accuracy of generation $2$
and $3$'s actions is just

\[
1+\frac{K_{2}+K_{1}K_{2}}{1+K_{1}K_{2}}\cdot(1+K_{1})-(1+K_{1})=1+\frac{(K_{2}-1)(K_{1}+1)}{K_{1}K_{2}+1}<3
\]
private signals, for any values of $K_{1}$ and $K_{2}$. So there
is always very little learning between generations $2$ and $3$,
even when the size of generation $2$ is large and many private signals
arrive in that generation. The idea is that confounding significantly
limits how much information the final agent can extract from arbitrarily
many neighbors' actions.

We emphasize that the first generation can generate substantial confounding
even when it is small. For example, if there is a single agent in
the first generation ($K_{1}=1$), then the action of the agent in generation $3$
will be less accurate than that of someone who saw just 5 independent
private signals. But if generation $1$ were empty, then the action
of the generation 3 agent  would be equivalent to $K_{2}+1$ private
signals. So even a small confound can prevent almost all information
aggregation. Also, a simple calculation shows that the difference
between the accuracy of generation 2 and generation 3 strictly decreases
in $K_{1}$ and strictly increases in $K_{2},$ provided $K_{2}\ge2.$
That is, the incremental amount of learning in the final generation
decreases with confounding (a larger $K_{1})$ but increases with
the number of observations (a larger $K_{2}).$ We will later see
that the same comparative statics hold for a class of infinite networks
where agents move in generations.

\end{example}

\subsection{Long-Run Learning and Quantifying Information Loss}

We now return to studying infinite networks. We begin with a benchmark result providing necessary and sufficient conditions for long-run
learning. These conditions turn out to be similar to those in the
existing literature, which shows our model is comparable to the standard
models on this dimension. A key contribution of our model is ranking networks where agents learn in the long run based on the rate of this learning,  and this section concludes by explaining how we can use $(r_i)_{i \in \mathbb{N}_+}$ for this purpose.

\begin{defn}
    Society \emph{learns completely in the long run} if $(a_{i})_{i \in \mathbb{N}_+}$
converges to $\omega$ in probability.
\end{defn}
For a given network $M,$ write
$\mathcal{PL}(i)\in\mathbb{N}$ to refer to the length of the longest
path in $M$ terminating at $i$ (this length is 0 if $N(i)=\varnothing$).
\begin{prop}
\label{prop:long_run_conditions} The following are equivalent: (1)
$\lim\limits _{{\it i\to\infty}}\mathcal{PL}(i)=\infty$; (2) $\lim\limits _{{\it i\to\infty}}\left[\max_{j\in N(i)}j\right]=\infty$;
(3) $\lim_{i\to\infty}r_{i}=\infty$; (4) society learns completely
in the long run.
\end{prop}
Condition (2) is the analog of \citet*{acemoglu2011bayesian}'s \emph{expanding
observations} property for a deterministic network. It says if we
consider the most recent neighbor observed by each agent, then this
sequence of most recent neighbors tends to infinity. It is straightforward to see the expanding observations condition is necessary for long-run learning, and \citet{acemoglu2011bayesian}
show it is also sufficient in a random-networks model with unboundedly informative signals
and binary actions. With continuous actions, Proposition \ref{prop:long_run_conditions} states the same result. The intuition is that each agent learns at least as well as if she optimally combined her most accurate social observation with her private signal.

The key takeaway message from Proposition \ref{prop:long_run_conditions}
is that whether society learns in the long run is not a useful criterion
for comparing different networks in this setting, as the conditions
(1) and (2) that guarantee long-run learning are quite mild. It is
of course possible that agents learn completely but do so very slowly, and we can compare how well agents learn on different networks by comparing the $(r_{i})_{i \in \mathbb{N}_+}$ sequences  associated with these networks. It is easy to see that a network achieves zero information loss ($r_i = i$ for every agent $i$) if and only if $i-1 \in N(i)$ for every $i$. An agent $i$ who does not observe $i-1$ cannot access $i-1$'s private information, so $r_i \le i-1$. Conversely, an inductive argument shows that any network with $i-1 \in N(i)$ for all $i \le n$ ensures that if $n \in N(n+1)$, then  agent $n+1$ can perfect aggregate all signals up to $n+1$ by combining agent $n$'s action with the signal $s_{n+1}$. 

On networks supporting long-run learning with positive information loss, it is useful to have a measure of the efficiency of learning. Our main measure normalizes the number of signals aggregated by each agent $i$ by the total number of signals that have arrived:
\begin{defn}
If $\lim_{i\to\infty}(r_{i}/i)$ exists, it is called the\emph{ aggregative
efficiency} of the network.
\end{defn}
 Aggregative efficiency measures the fraction of signals in the entire society that individuals manage to aggregate under social learning. Networks with higher levels of aggregative efficiency induce faster social learning in the long run.

In Section \ref{sec:The-Generations-Network},  we study generations networks where agents move in cohorts and observe some members of the previous cohorts. The limit defining aggregative efficiency exists for all the networks we focus in this next section, and we will use aggregative efficiency to compare learning across these networks.\footnote{In infinite networks where $\lim_{i\to\infty}(r_{i}/i)$ does not exist because there are multiple subgroups of agents and $\lim_{i\to\infty}(r_{i}/i)$ is different in each subgroup (such as the information silos network in Section \ref{subsec:Application-2:-Information}), we can compute this limit separately for each subgroup. More generally, $\liminf_{i\to\infty}(r_{i}/i)$ and $\limsup_{i\to\infty}(r_{i}/i)$ exist in every infinite network and provide long-run bounds on the rate of signal aggregation.} In general, comparisons of aggregative efficiency also translate into welfare comparisons, as Section \ref{sec:Welfare} will show.

Appendix \ref{subsec:facebook} calculates a finite analog of aggregative efficiency for Facebook friendship networks from $100$ college campuses. We find the fraction of signals aggregated ranges from 26\% to 75\% across campuses and that much of this variation can be explained based on the amount of clustering in the network. This suggests aggregative efficiency can capture meaningful differences across real-world networks and that the corresponding losses can be substantial.

While we focus on aggregative efficiency, we note that other normalizations of $r_i$ may also be useful. As one example, in Appendix \ref{subsec:facebook}, we also report an alternate measure dividing each $r_i$ by the size of $i$'s indirect neighborhood (containing all predecessors that have a path of any length to $i$). In less connected networks, this measure isolates information loss due to aggregation failures.

\section{\label{sec:The-Generations-Network}Generations Networks}

In this section, we study information confounding in generations networks, where agents are arranged into sequential generations and observe some members from the previous generations. We find these network structures can lead to almost complete information loss. We begin by deriving the aggregative efficiencies of a class of networks with symmetric observations between generations, and find these networks aggregate at most 2 signals of information per generation asymptotically. We then show the basic force can also extend to settings where agents observe many previous generations, to networks with randomly drawn observations, and to early generations.

  Before beginning our analysis, we briefly comment on the relevance of networks with cohort structure.  In various empirical settings and theoretical models in the social sciences, people primarily learn from the recent past in ways that resemble the generations networks we study in this section. For example, the early years of medical residency typically involve the junior doctors rotating with different units in the hospital and learning from various attending physicians (i.e., the agents from the previous generation). In creating rotation schedules, hospitals try to aim for balanced assignments across residents and attending physicians.\footnote{Several producers of specialized software for scheduling medical residents discuss this as a software feature. Lightning Bolt Scheduling's blog says that ``Creating balanced schedules is crucial for promoting
resident wellness and avoiding perceptions of favoritism'' (see \href{https://www.lightning-bolt.com/blog/residency-schedule-challenges/}{https://www.lightning-bolt.com/blog/residency-schedule-challenges/}).
 Intrigma Scheduler advertises  ``Workload Balance and Fairness'' as
one of the key features of their software on their homepage (see \href{https://www.intrigma.com/specialties/residency}{https://www.intrigma.com/specialties/residency}).} Similar rotation programs where newcomers learn from incumbents are common in laboratory-based doctoral programs, government agencies, and large companies. On a broader societal scale, generation structures appear in leading models of cultural transmission. In anthropology, theoretical models such as conformist transmission and prestige bias (\cite{boyd1988culture}; \cite{richerson2008not}) can be viewed as behavioral models of social learning on generations networks. Unlike our analysis, these models assume mechanical learning rules such as following a majority of observed predecessors.

\subsection{Symmetric Generations Networks}

Consider generations networks where agents only observe predecessors from the previous generation.\footnote{This class of networks follows a strand of social-learning literature
where agents move in generations, for instance \citet{wolitzky2018learning},
\citet{banerjee2004word}, \citet{burguet2000social}, and \citet*{dasaratha2020learning}.} 
Agents are sequentially arranged into generations of size $K$, with
agents within each generation placed into \emph{positions} 1 through
$K.$ Agents in the first generation (i.e., $i=1,...,K$) have no
neighbors. A collection of \emph{observation sets} $\Psi_{k}\subseteq\{1,...,K\}$
for $k=1,...,K$ define the network $M$ among the agents.
The agent in position $k$ in generation $t\ge2$ observes agents
in positions $\Psi_{k}$ from generation $t-1$ (and no agents from
any other generation). That is, for $i=(t-1)K+k$ where $t\ge2$ and
$1\le k\le K,$ network $M$ has $N(i)=\{(t-2)K+\psi:\psi\in\Psi_{k}\}.$\footnote{\citet*{stolarczyk2017loss} study a related model where only the
first generation observes private signals. Their main results characterize
when no information gets lost between generations, i.e., social learning
is completely efficient.} Figure \ref{fig:d2c1} shows an example with $K=3$.
\begin{figure}[h]
\begin{centering}
\includegraphics[scale=0.2]{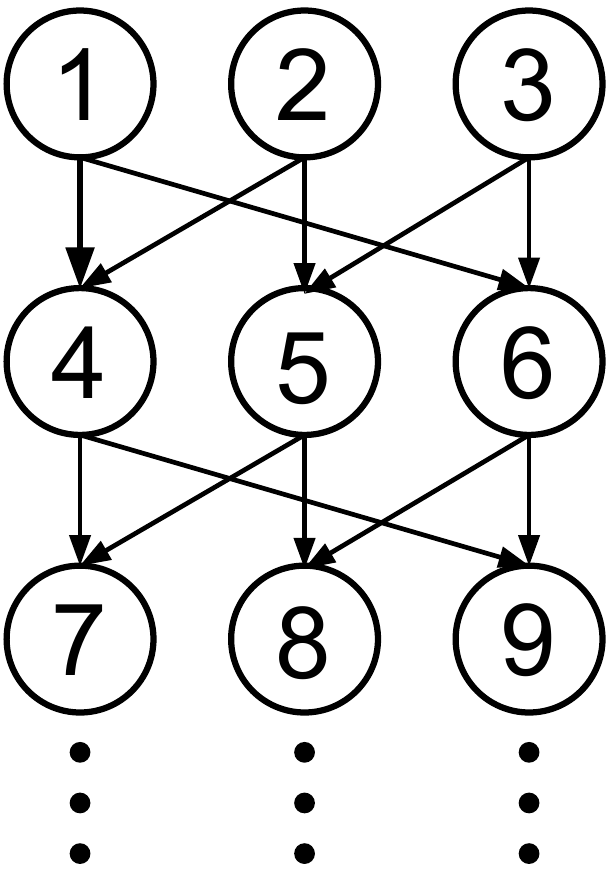}
\par\end{centering}
\caption{\label{fig:d2c1}A generations network with $K=3$ agents per generation
and the observation sets $\Psi_{1}=\{1,2\},$ $\Psi_{2}=\{2,3\},$
and $\Psi_{3}=\{1,3\}$.}
\end{figure}

We focus on observation sets $(\Psi_{k})_{k}$ satisfying a symmetry condition:
\begin{defn}
The observation sets are \emph{symmetric} if all agents observe $d\ge1$
neighbors and all pairs of agents in the same generation share $c$
common neighbors, i.e. $|\Psi_{k}|=d$ for every $1\le k\le K$ and
$|\Psi_{k_{1}}\cap\Psi_{k_{2}}|=c$ for distinct positions $1\leq k_{1}<k_{2}\leq K$.
\end{defn}
A generations network defined by symmetric observation sets is called
a \emph{symmetric network}. To give a class of examples of symmetric
networks, fix any non-empty subset $E\subseteq\{1,...,K\},$ and let
$(\Psi_{k})_{k}$ be such that for all $1\le k\le K,$ $\Psi_{k}=E.$
Here we have $d=c=|E|.$ To interpret, the set $E$ represents the prominent
positions in the society, and agents only observe predecessors in
these prominent positions from the past generation. We call the special
case of $E=\{1,...,K\}$ the \emph{maximal generations network}, where
agents in generation $t$ for $t\ge2$ have all agents in generation
$t-1$ as their neighbors. For another class of examples, suppose $K\ge2$ and each agent observes
a different subset of $K-1$ predecessors from the previous generation.
Specifically, $\Psi_{k}=\{1,...,K\}\backslash\{k-1\}$ for $2\le k\le K,$
and $\Psi_{1}=\{1,...,K-1\}.$ This network is symmetric with $d=K-1$
and $c=K-2.$ (The network in Figure \ref{fig:d2c1} has this structure,
with $d=2$ and $c=1$.) 

There are also a large variety of other symmetric
networks that are not covered by these two classes of examples: one
enumeration shows there are at least 103 pairs of feasible $(d,c)$
parameters in the range of $3\le d\le41$ and $1\le c\le d-2$ that
correspond to at least one symmetric network, typically with multiple
non-isomorphic networks for each feasible parameter pair \citep{MATHON1985275}.

We provide an exact expression for the aggregative efficiency in symmetric
generations networks to quantify the information loss due to confounding.

\begin{thm}
\label{thm:efficiency_dc}Given any symmetric observation sets $(\Psi_{k})_{k}$
where every agent observes $d$ neighbors and every pair of agents
in the same generation share $c$ common neighbors, aggregative efficiency
is\footnote{In the case $d=1$ and $c=0$, we adopt the convention $0/0=0$.}
\[
\lim_{i\to\infty}(r_{i}/i)=\left(1+\frac{d^{2}-d}{d^{2}-d+c}\right)\frac{1}{K}.
\]
The number of signals aggregated per generation asymptotically $(\lim_{i\to\infty}r_{i+K}-r_{i})$ is less than $2$.
For $c\ge1,$ this number is strictly
increasing in $d$ and strictly decreasing in $c$.
\end{thm}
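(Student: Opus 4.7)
The plan is to use the network's symmetry to collapse the equilibrium dynamics to a two-variable recursion, then extract its long-run growth rate. First, I would show by induction on the generation index $t$ that given $\omega=1$, every agent in generation $t$ has the same log-action mean $m_t$ and variance $V_t$, and any two distinct agents share a common pairwise covariance $C_t$. The inductive step uses that every agent in generation $t$ faces the same local network (a neighborhood of size $d$ whose members have common variance $V_{t-1}$ and common pairwise covariance $C_{t-1}$). Consequently, for any agent $i$ in generation $t$, her neighbors' log-action covariance matrix is the equi-correlated matrix $\Sigma_{t-1} = (V_{t-1}-C_{t-1})I + C_{t-1}J$ with mean vector $m_{t-1}\vec{1}$. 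Since $\vec{1}$ is an eigenvector of $\Sigma_{t-1}$ with eigenvalue $V_{t-1} + (d-1)C_{t-1}$, Proposition~\ref{prop:linear}, combined with the identity $V_{t-1} = 2 m_{t-1}$ from Proposition~\ref{prop:signal_counting}, shows all of $i$'s equilibrium coefficients collapse to the common value $b_t := V_{t-1}/[V_{t-1} + (d-1)C_{t-1}]$.

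Next, from $\ell_i = \lambda_i + b_t \sum_{j \in N(i)} \ell_j$ together with $\lambda_i \perp (\ell_j)_{j<i}$ and $|N(i) \cap N(i')| = c$, I derive the bivariate recursion
\begin{align*}
V_t &= \tfrac{4}{\sigma^2} + b_t d\, V_{t-1}, \\
C_t &= b_t^2 \bigl[c\, V_{t-1} + (d^2 - c)\, C_{t-1}\bigr].
\end{align*}
Subtracting these yields a much cleaner identity for $D_t := V_t - C_t$:
$$D_t \;=\; \tfrac{4}{\sigma^2} + b_t^2 (d - c)\, D_{t-1}, \qquad b_t = \frac{V_{t-1}}{d V_{t-1} - (d-1) D_{t-1}}.$$

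The main obstacle is proving that $D_t$ converges to the finite limit $D^{\ast} := \tfrac{4}{\sigma^2} \cdot \tfrac{d^2}{d^2 - d + c}$ for $d \ge 2$. (The $d=1$ case forces $b_t \equiv 1$ and is handled directly: $V_t = V_{t-1} + 4/\sigma^2$, matching $\Delta r = 1$.) Proposition~\ref{prop:long_run_conditions} gives $V_t \to \infty$ for free, since every agent in generation $t$ has an observation path of length at least $t-1$. One then bootstraps: the uniform bounds $b_t \in [1/d,1]$ (which follow from $0 \le D_{t-1} \le V_{t-1}$) feed into a forced-contraction argument for the ratio $y_t := D_t/V_t$, whose asymptotic contraction factor is $(d-c)/d^2 < 1$, yielding $y_t \to 0$. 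Once $y_t \to 0$, we get $b_t \to 1/d$, so the $D_t$ recursion becomes asymptotically affine with unique fixed point $D^{\ast}$. Care is needed because the contraction factor itself depends on $y_{t-1}$; the clean way is to show $D_t$ stays bounded (using $V_t \to \infty$ to dominate the denominator of $b_t$) and then invoke standard stability for forced linear recursions with eventually-constant coefficients.

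Substituting back into the $V_t$ recursion and computing $(b_t d - 1)V_{t-1} = (d-1)D_{t-1}/[d - (d-1)y_{t-1}]$ gives
$$V_t - V_{t-1} \;\longrightarrow\; \tfrac{4}{\sigma^2} + \tfrac{d-1}{d}\,D^{\ast} \;=\; \tfrac{4}{\sigma^2}\Bigl(1 + \tfrac{d^2 - d}{d^2 - d + c}\Bigr).$$
Since $r_t = V_t \sigma^2/4$ by Proposition~\ref{prop:signal_counting} and each generation contains $K$ agents, aggregative efficiency is $\tfrac{1}{K}\bigl(1 + \tfrac{d^2 - d}{d^2 - d + c}\bigr)$. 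The comparative statics are immediate from the rewriting $\tfrac{d^2 - d}{d^2 - d + c} = 1 - \tfrac{c}{d^2 - d + c}$: its partial derivative in $c$ has sign $-\tfrac{d^2 - d}{(d^2-d+c)^2} < 0$ and its partial derivative in $d$ has sign $\tfrac{c(2d-1)}{(d^2-d+c)^2} > 0$ whenever $c \ge 1$ and $d \ge 2$.
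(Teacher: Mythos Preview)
Your proposal is correct and follows the paper's broad outline (symmetry reduction to common $V_t, C_t, b_t$; recursion for $D_t = V_t - C_t$; contraction once $b_t \to 1/d$), but takes a genuinely different route at the crucial step of proving $b_t \to 1/d$. The paper establishes this via a Markov chain on network positions (its Lemmas~\ref{lem:limit_distr} and~\ref{lem:beta_1/d}): it shows the chain mixes to a unique stationary distribution, so the signal weights that any two same-generation agents place on distant ancestors coincide asymptotically, which forces $\textsc{Var}_t/\textsc{Cov}_t \to 1$. You instead argue directly from the bivariate recursion: combining $V_t \to \infty$ (from Proposition~\ref{prop:long_run_conditions}, since $\mathcal{PL}(i)=t-1$) with the crude bound $b_t \le 1$ yields $y_t \le \tfrac{d-c}{d}\,y_{t-1} + O(1/V_{t-1})$, a genuine contraction for $c \ge 1$, hence $y_t \to 0$. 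Your route is more elementary---no Markov-chain machinery is needed---while the paper's approach explains structurally \emph{why} within-generation correlations tend to one. Two small imprecisions in your write-up: the contraction factor you can actually invoke before knowing $b_t \to 1/d$ is $(d-c)/d$ (from $b_t\le 1$), not the asymptotic $(d-c)/d^2$; and your phrase ``the clean way is to show $D_t$ stays bounded'' reads as if boundedness precedes $y_t \to 0$, which would be circular---the correct order, implicit in your preceding sentence, is $y_t \to 0 \Rightarrow b_t \to 1/d \Rightarrow b_t^2(d-c)$ eventually below $1 \Rightarrow D_t$ bounded and $D_t \to D^\ast$.
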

Theorem \ref{thm:efficiency_dc} calculates the aggregative efficiency in any symmetric generations network in terms of the parameters $d$ and $c$. The expression on the right-hand side normalizes
by the size of the generation $K,$ so the term in the parentheses
provides a uniform learning-rate upper bound of two signals per generation
across all symmetric networks (as $\frac{d^{2}-d}{d^{2}-d+c}\le1$).

The interpretation of the comparative statics result in $d$ and $c$
is that more observations speed up the rate of learning per generation
but more common neighbors slow it down by worsening confounding, all else equal. This result lets
us compare learning dynamics across different symmetric networks characterized
by different $(d,c)$ parameter pairs. Changing from one network to
another, we can change both $d$ and $c$ (along with the generation
size $K$). Theorem \ref{thm:efficiency_dc} decomposes the repercussions
of such changes on the \emph{per-generation} learning rate  into their effects on the two primitive
network parameters that have monotonic influences on said rate.

The main content of the theorem is the uniform bound on the learning rate, which implies learning is very inefficient in large symmetric generations networks. The proof of Proposition \ref{prop:long_run_conditions} provides
a lower bound of one signal aggregated per generation,
since agents could always optimally combine their private signal with one observed action.
Theorem \ref{thm:efficiency_dc} shows this lower bound is not too
far from the actual learning rate, which is fewer than two signals per generation.

For maximal generations networks (i.e., agents observe all predecessors from the previous generation), the basic intuition for this bound is similar to
Example \ref{ex:Many-Neighbors-with}, which tells us that when any number
of agents observe one or more common signals in addition to their
private signals,  a successor who observes all of these agents cannot
improve on their accuracy by more than three signals worth of information.
The successor must balance overweighting the common confound and underweighting
her neighbors' private signals, and the optimal weights severely underweight recent signals.

Extending this intuition beyond maximal generations networks is more
subtle, because different agents in a generation  may observe different
predecessors whose actions may be less correlated. This can alleviate
the information confounding in early generations, but we show the
benefits  are limited: even if agents in the same generation have almost disjoint observation sets, actions become highly correlated in later generations. To prove this, we  use a mixing argument to show that
the actions of two agents in the same generation are influenced in
very similar ways by the signal realizations of their common ancestors
from many generations ago. An implication is that recent signals are severely underweighted, as in the maximal generations case: the \emph{total} weight an agent places on private signals from the previous generation converges to one, while in the absence of network-based confounds the agent would place a weight of one on \emph{each} signal.

Perhaps surprisingly, an implication of Theorem \ref{thm:efficiency_dc} is that  aggregative efficiency in symmetric networks  only
depends  on the generation size. Compare the symmetric
network from Figure \ref{fig:d2c1} with $d=2,c=1,K=3$ with the maximal
generations network with $K=3$. Theorem \ref{thm:efficiency_dc}
implies they have the same aggregative efficiency. The extra social
observations in the second network exactly cancel out the reduced
informational content of each observation, due to the more severe
information confounds. Our next result shows that more generally,
any symmetric network with parameters $(d,c,K)$ where $d\ge2$, $c<d$
has the same aggregative efficiency as the maximal generations network
with the same generation size $K$.
\begin{cor}
\label{cor:cancellation} In any symmetric network with $d\ge2$ and $c<d$ and in the maximal generations network,
aggregative efficiency is $\lim_{i\to\infty}(r_{i}/i)=(2-(1/K))\cdot\frac{1}{K}.$ In particular, aggregative efficiency is strictly decreasing in $K$, while the number of signals eventually aggregated per generation $2-(1/K)$ is strictly increasing in $K$. 
\end{cor}
This corollary follows from the fact that the symmetry condition imposes
some combinatorial constraints on the feasible $(d,c,K)$ parameter
triplets. It turns out
these constraints allow us to simplify the expression in Theorem \ref{thm:efficiency_dc}
when we know the generation size. While Corollary \ref{cor:cancellation}
gives a simple expression of aggregative efficiency that just depends
on $K$, Theorem \ref{thm:efficiency_dc} gives an expression for the number of additional signals  per generation that depends only on $d$ and $c$. This let us compare networks
that differ in $d$ and $c$ more transparently (allowing $K$ adjust to maintain feasibility).

In particular, these results imply that the aggregative efficiency of maximal generations network with $K$ agents per generation is $\lim_{i\to\infty}(r_{i}/i)=\frac{(2K-1)}{K^{2}}$, which decreases with $K$.
Indeed, if $K=1,$ then every agent perfectly incorporates all past
private signals and the speed of social learning is the highest possible.
Not only does this result about the aggregative efficiency imply that
asymptotically fewer signals are aggregated by the same agent in networks
with larger $K$, but the same comparative statics  also hold numerically
for all agents $i\ge16$ when comparing among $K\in\{2,3,4,5\}$,
as shown in Figure \ref{fig:Number-of-signals}.  

\begin{figure}
\vspace{-5bp}

\begin{centering}
\includegraphics[scale=0.75]{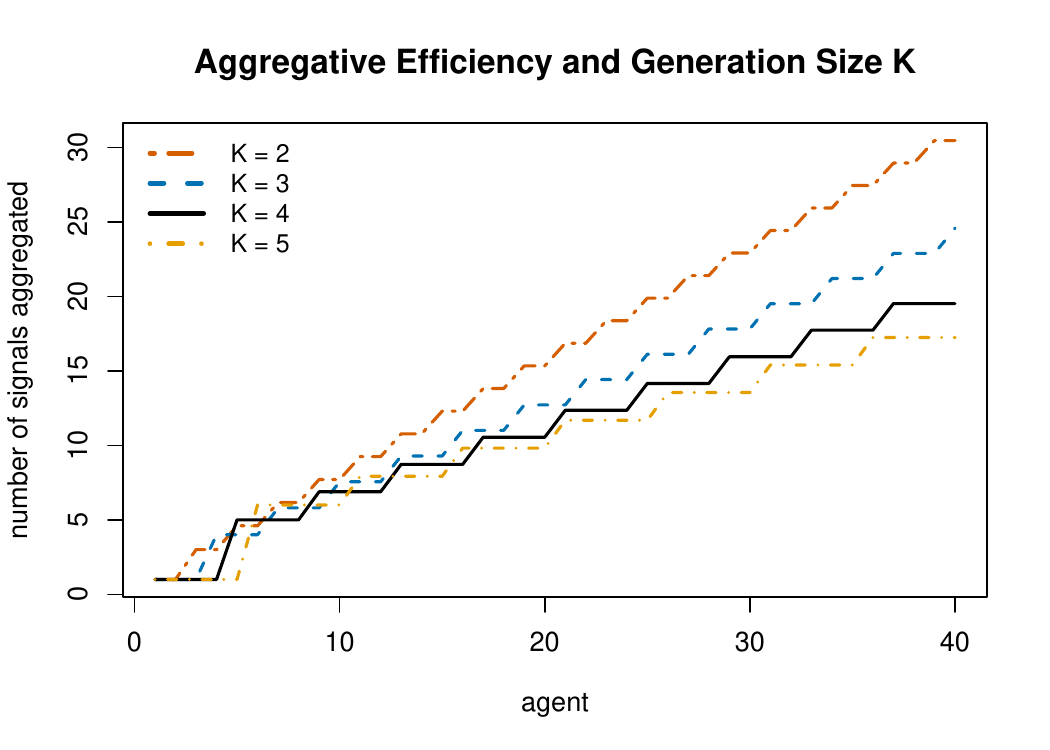}
\par\end{centering}
\vspace{-10bp}

\caption{\label{fig:Number-of-signals}Number of signals aggregated by social
learning in maximal generations networks with different generation
sizes, $K\in\{2,3,4,5\}.$}
\end{figure}

We next discuss the role of the symmetric observation  sets in  the bound of two additional signals per generation. We have implicitly imposed three restrictions on the network, beyond the basic generations structure: the observation sets are the same across generations, all generations have the same size, and observations are symmetric within generations. We assume the same observation structure across generations primarily for simplicity of exposition, and could bound aggregative efficiency with the same techniques while allowing different symmetric observation structures in different generations. The key step in extending the proof is the Markov chain mixing result, which must be replaced by a mixing result for non-homogeneous Markov chains (for examples of such results, see \cite{blackwell1945finite} and \cite{tahbaz2006consensus}). We could also allow different sized generations and obtain bounds on aggregative efficiency. For example, the logic of Example \ref{ex:Many-Neighbors-with} would extend to maximal generations networks with generations of varying sizes.

The assumption of symmetry within generations is more substantive, and our bound of two signals of additional information per generation does not always apply to non-symmetric generations networks. For example, suppose $\Psi_k = \{k\}$ for some position $k$. The accuracy of agents in this position will only increase by one additional signal per generation, but these agents can aggregate independent information that can be valuable to agents in other positions (see Section~\ref{subsec:Application-2:-Information} for details). An unresolved question is whether asymmetric observation sets could let \emph{all} agents aggregate more than two signals per generation.

\subsection{Multi-Generation Observation Networks}

In symmetric generations networks, agents only observe members of the most recent previous generation. This implies they do not see any of their neighbors' social information, creating confounding if multiple neighbors share common observations. 

Now consider a generations network where agents can see all predecessors from the previous \emph{two} generations (agents in generation 1 still have no observations and agents in generation 2 see all members of generation 1). This cures information confounding for generation 3, who can now fully account for  generation 2' social observations and perfectly aggregate all signals from the first two generations. Unfortunately, generation 3's additional observations create new confounds down the line for generation 4. Generation 4 observes generation 3 and generation 2, which would be enough to disentangle generation 3 agents' private signals if they only observed generation 2 (as in a maximal generations network). But, because generation 3 additionally observes generation 1 in this new network, these extra observations that helped de-confound generation 3 end up confounding generation 4. 

More generally, consider multi-generation observation networks where agents observe all predecessors from the past $L$ generations, where $1 \leq L < \infty$.  Agents in generations $t \leq L$  observe predecessors from all previous generations. The next result characterizes aggregative efficiency in these networks in terms of the largest eigenvalue of the $L \times L$ tridiagonal matrix \[
\left(\begin{array}{ccccccc}
K & K-1 & 0 & 0 & 0 & \cdots & 0\\
K-1 & K-1 & K-1 & 0 & 0 & \ddots & \vdots\\
0 & K-1 & K-1 & K-1 & 0 & \ddots & 0\\
\vdots & \ddots & \ddots & \ddots & \ddots & \ddots & K-1\\
0 & \cdots & 0 & 0 & 0 & K-1 & K-1
\end{array}\right),
\]
which we call $\rho(K,L).$

\begin{thm} \label{thm:L_gen}
Suppose generations have size $K \ge 2$ and agents in generation $t$ 
observe all agents from the previous $L$ generations. 
Then, $\lim_{i\to\infty}r_{i}/i=\frac{1}{K}g(K,L),$ where $g(K,L)=\frac{\rho(K,L)+K-1}{K}$. This implies: 
\begin{itemize}
\item For any $L,$ $g(K,L)\le4-4/K<4.$
\item $g(K,L)$ is strictly increasing in $K$ for any $L\ge1$ and strictly
increasing in $L$ for any $K\ge2$. 
\end{itemize}
\end{thm}

In the case $L=1$, we saw in the previous section that each generation adds  fewer than  than $2$ signals of information asymptotically. Theorem \ref{thm:L_gen} gives a similar finite upper bound for each $L>1$, and moreover gives a small finite upper bound that is independent of $L$: the number of signals $g(K,L)$ eventually aggregated per generation is bounded by $4$ for \emph{any} $L$. The intuition is that, as the example above with $L=2$ suggests, the benefit of observing predecessors' social information is offset by the cost of creating additional confounds for successors. (If $K=1$, then there is no confounding and signal aggregation is perfect for any $L$.) We also find that increasing the generation size $K$ and the number of past generations observed $L$ both increase the number of signals aggregated per generation, but the benefits are capped by the uniform upper bound of $4$.

Unlike in symmetric generations networks, agents do not treat all of their observations symmetrically in multi-generation observation networks with $L \ge 2$. For example, in generation 3, agents put negative weights on generation 1's log-actions (as in \cite{eyster2014extensive}) to subtract off the common social information observed by generation 2. The proof of Theorem \ref{thm:L_gen} requires us to keep track of the conditional covariance matrix for the previous $L$ generations' log-actions, which in turn determines next generation's rational weights on previous generations' log-actions and hence how this next generation correlates with the previous ones. It turns out this is equivalent to keeping track of the conditional covariance matrix for the \emph{differences} between successive generations' log-actions, and the limit of this covariance iteration has an eigenvalue characterization. The key step uses a series of recursions to express these differences as a continued fraction and then relates this same continued fraction to the eigenvalue $\rho(K,L)$.

\subsection{Random Regular Networks}

We next consider generations networks for which the network between generations is a regular graph. While a full characterization of aggregative efficiency is considerably more difficult in this setting, we show a class of regular networks also cause substantial confounding. This confounding can again lead to the loss of almost all of a large generation's information.

To generate a $d$-regular directed graph, define observation sets $\Psi_k$ by sampling $d$ slots $j \in \{1,\hdots,K\}$ uniformly at random (without replacement), independently for different $1 \le k \le K$. The realized observation sets $\Psi_k$ are common knowledge for the agents and are used for all generations. We suppose $d=\lfloor K^{\alpha} \rfloor$, where $0 < \alpha < 1/4$  and  $\alpha \ne 1/n$ for any integer $n$. So in the random regular networks, all agents observe $d$ neighbors from the previous generation, but in general different pairs of neighbors will be correlated in different ways. The observed neighbors form an arbitrarily small fraction of the generation as $K\to \infty$. 

Let $t^* = \lceil \frac{1}{\alpha} \rceil.$  We analyze learning outcomes around generation $t^* + 1$ as the generation size $K \rightarrow \infty$.

\begin{thm}\label{thm:d_regular}
    Suppose $d=\lfloor K^{\alpha} \rfloor$ for some $\alpha < \frac14$. Then there exists $\epsilon>0$ such that $$r_{i}-r_j \leq K^{1-\epsilon}$$
    for each $i$ in generation $t^*+2$ and $j$ in generation $t^*+1$ a.a.s.
\end{thm}

At distances less than $t^*$, the observation network largely resembles a tree: most nodes reachable from a node $i$ can only be reached by  a unique path. But at distance $t^*+1$, there are many paths between any two nodes. So generation $t^*+1$ is the first to have (indirect) access to the signals of a full generation. Generation $t^*+2$, in turn, gains (indirect) access to the signals of two full generations. But Theorem \ref{thm:d_regular} tells us almost all of the additional generation's signals are lost due to confounding: the number of additional signals aggregated is at most a $K^{-\epsilon}$ fraction of the generation size, which approaches 0 as $K \to \infty$.

A difference from Theorem~\ref{thm:efficiency_dc} is that most pairs of  agents in the same generation have no common neighbors. But we can still show a similar mixing result: all agents in generation $t^*+1$ put very similar weights on agents in generation $1$. As in Theorem~\ref{thm:efficiency_dc}, this is sufficient to produce confounding that prevents almost all aggregation of the subsequent generation's signals. The proof is considerably more technically involved because agents no longer weight their observations symmetrically. A careful graph-theoretic analysis establishes a mixing result despite these deviations.

The theorem is stated probabilistically, but there is an equivalent deterministic interpretation. Let $\mathcal{G}(d,K)$ be the set of $d$-regular directed graphs with $K$ nodes. Given an element $G$ of $\mathcal{G}(d,K)$, we can define a generations network with generation size $K$ by setting observation set $\Psi_k$ equal to the neighborhood of $k$ in the graph $G$ for each node $k$. Then the bound in Theorem~\ref{thm:d_regular} holds for all but a vanishing fraction of graphs in $\mathcal{G}(d,K)$ as $K$ grows large. 

The results shows generations networks that do not have symmetric observation sets can still create  severe confounding. The class of random regular networks requires agents to have many observations, but is less restrictive in several respects. First, the notion of symmetry is weaker: we only require that agents have the same number of observations. (The proof also allows some deviation in agents' degrees as long as they are close to equal.) This could arise from various organizational structures specifying, e.g., a fixed number of mentors or a fixed number of rotations for newcomers. Second, symmetric generations networks require that all pairs of agents in the same generation have  $c$ common neighbors. We show that this is not necessary for the network to generate confounding, as long as pairs of agents are indirectly connected to a common ancestor at some distance (which can be large if $\alpha$ is small).

Simulations suggest severe confounding also appears in generations networks with much smaller degrees and with substantial variation in degrees across agents. We first simulated random regular networks with $K=100$ and $d \in \{2, 4, 8, 16\}$. We conducted 1000 repetitions of the simulation, redrawing the observation sets $(\Psi_k)_{k=1}^{K}$ in each repetition and calculating each agent's $r_i$ up to generation 100. Figure \ref{fig:random_d} shows the maximum increase in $r_i$ in one generation across all 100 positions. We find that across all parameter values, the maximum one-period gain in signals quickly converges to 2 by around generation 30. Indeed, in none of the 1000 simulations with different random observation sets did we find any agent position that aggregated more than 2 additional signals in any generation $t\ge 40$ for any parameter value. This convergence is more rapid for higher $d$, analogous to a smaller $t^*$ in Theorem \ref{thm:d_regular} when $\alpha$ (and hence $d$) is larger. Figure \ref{fig:random_d1d2} shows the same numerical evidence for generations networks with $K=100$ where half of the positions have observation sets with size $d_1$ and the other half have observation sets with size $d_2 \ne d_1$, again with all observation sets randomly drawn across  1000 repetitions of the simulation. Here, we found no position that aggregated more than 2 additional signals in any generation $t\ge 29$ for any  parameter value and any of the random observation sets.

\begin{figure}[htbp]
  \centering
  \begin{subfigure}{0.47\textwidth}
    \centering
    \includegraphics[width=\linewidth]{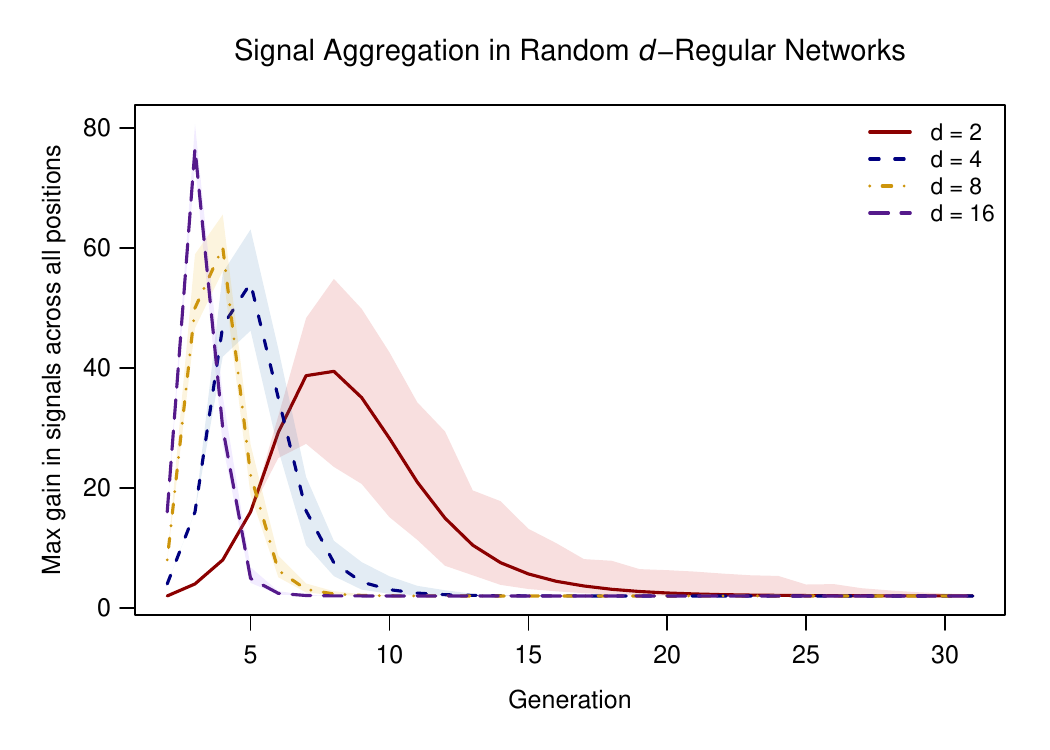}
    \caption{Random $d$-regular networks.}
    \label{fig:random_d}
  \end{subfigure}
  \hfill
  \begin{subfigure}{0.47\textwidth}
    \centering
    \includegraphics[width=\linewidth]{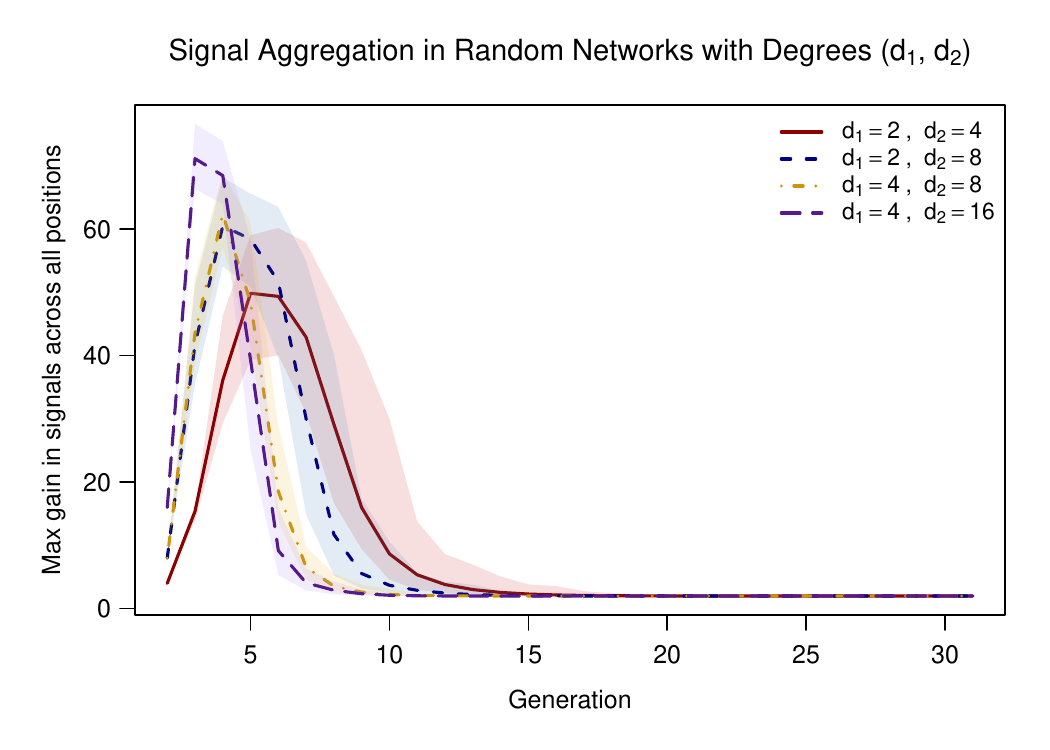}
    \caption{Random networks with degrees $(d_1,d_2)$.}
    \label{fig:random_d1d2}
  \end{subfigure}

  \caption{Signal aggregation by generation in networks with $K=100$ agents per generation. In both plots, the solid lines show the largest one-generation increase in $r_i$ across agents in all 100 positions, averaged across 1000 repetitions of the simulation. The shaded areas show the range of this increase across the 1000 repetitions. In each repetition, the observation sets for the 100 positions are randomly and independently drawn, but kept fixed for all generations.   
  Subfigure (a): All positions have $d$ neighbors each. Subfigure (b): Positions 1 through 50 have $d_1$ neighbors each and positions 51 through 100 have $d_2$ neighbors each.}
  \label{fig:example}
\end{figure}

If networks are sufficiently sparse and do not have generations structure, however, confounding can disappear. In tree networks, for example, each agent perfectly aggregates their private signal and the signals of all predecessors in the tree. The same result holds for almost all agents in large random graphs which are locally tree-like, and this structure arises if the average degree is less than $1$. So very sparsely connected networks can have little or no confounding---but do have other constraints on learning because most or all agents cannot access most of society's signals.

\subsection{Finite-Population Learning}

Our framework not only allows us to study the asymptotic rate of learning,
but also lets us derive finite-population bounds that apply from early generations.
Theorem \ref{thm:efficiency_dc} tells us social learning aggregates
fewer than two signals per generation asymptotically. There is also
a short-run version of this result in the maximal generations network:
starting with generation 3, fewer than three signals are aggregated
per generation for any $K$. 
\begin{prop}
\label{prop:starting_3}In any maximal generations network, for any
agents $i,i^{'}$ in generation $t$ and $t-1$ with $t\ge3,$ $r_{i}-r_{i^{'}}\le3$.
\end{prop}
We find an even starker bound on $r_{i}$ if we consider a modified
version of the maximal generations network: there is a zeroth generation
with only one agent, and all subsequent generations contain $K$ agents
each. Agents in generation $t\ge1$ observe all predecessors from
generation $t-1.$ 
\begin{prop}
\label{prop:zeroth_gen}In this modified maximal generations network,
$r_{i}\le3t-1$ for every $i$ in every generation $t\ge1$. 
\end{prop}
Similar to Example \ref{ex:Many-Neighbors-with},
the single agent before the first generation causes significant information
confounding. With this additional agent, there is a uniform bound on \emph{every} generation's
accuracy across all generation sizes $K$. 

Finally, the proof of Theorem \ref{thm:L_gen} implies a lower bound on information aggregation that applies to  every generation in multi-generation observation networks. 

\begin{prop}
\label{prop:lower_bound}
    In any multi-generation observation  network, for any
agents $i,i^{'}$ in generation $t$ and $t-1$ with $t\ge2,$ $r_{i}-r_{i^{'}}\ge 2- 1/K$.
\end{prop}

In particular, this implies that in maximum generations networks, every generation aggregates more than the asymptotic number of $2-1/K$ signals per generation.

\section{Organizational Applications}

Section~\ref{sec:The-Generations-Network} suggests that sustaining substantial social learning can be challenging in settings with cohort structures. We now describe two organizational changes that can generate better outcomes for at least some agents. First, if new workers are paired with mentors who can share their private information or relevant organizational context, learning becomes much more efficient. Second, asymmetric network structures dividing some agents into isolated \emph{information silos} can help other agents learn better.

\subsection{Value of Mentorship\label{subsec:AppMentorship}}

We provide an economic application of our results in terms of the
value of mentors who share their private signals with mentees in the
next generation.

Many organizations with cohort structures, such as universities and
firms, have mentorship programs that pair newcomers with members of
a previous cohort. Our results suggest that one benefit of such programs
is that mentors provide information that helps newcomers interpret
others' actions, thus increasing the speed of learning within the
organization.

Formally, we model a mentor as someone who shares her private signal
with a mentee in the subsequent generation. Equivalently, the mentor
could share a sufficient statistic describing her best estimate of
the state based on her social observations. If we begin with the maximal
generations network and add mentorship relationships in this way,
learning is nearly efficient.
\begin{cor}
\label{cor:mentor}Suppose each agent observes the actions of all
members of the previous generation and the private signal of one member
of the previous generation. Then social learning aggregates more than
$i-K$ signals by every agent $i,$ so aggregative efficiency is 1.
\end{cor}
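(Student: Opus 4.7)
The plan is to prove Corollary \ref{cor:mentor} by induction on the generation index, exploiting the fact that in the maximal generations network all agents in the same generation $t$ use the identical equilibrium log-strategy $\ell_j = \lambda_j + C_t$ for a ``common social term'' $C_t$ depending only on signals from generations $1,\ldots,t-1$. The mentor's signal then dissolves the network's informational confound: once an agent in generation $t+1$ learns her mentor's signal $\lambda_{j'(m)}$, she extracts $C_t = \ell_{j'(m)} - \lambda_{j'(m)}$ and decodes every other gen-$t$ signal via $\lambda_{j'(k)} = \ell_{j'(k)} - C_t$. This reduces the learning problem at each generation to Bayesian updating from a collection of conditionally independent Gaussian observations.

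First I would formalize the inductive hypothesis that every $j$ in generation $t$ plays $\ell_j = \lambda_j + C_t$ with $C_1 \equiv 0$. In the inductive step, agent $j$ in generation $t+1$ observes $(\ell_{j'(k)})_{k=1}^K$, $\lambda_{j'(m)}$, and own $\lambda_j$; by the decoding just described, this is informationally equivalent to the tuple $(\lambda_j, \lambda_{j'(1)},\ldots,\lambda_{j'(K)}, C_t)$, whose components are conditionally independent given $\omega$. The Bayesian posterior log-odds is then the sum of the individual log-likelihood ratios. By Lemma \ref{lem:trick}, each $\lambda$ term contributes itself, and a direct Gaussian computation shows that $C_t$ (whose conditional mean-to-variance ratio $1/2$ in the two states matches that of any independent-signal sum) also contributes itself with coefficient exactly $1$. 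This yields $\ell_j = \lambda_j + \sum_k \lambda_{j'(k)} + C_t$, so setting $C_{t+1} := L_t + C_t$ with $L_t := \sum_{j'\in \text{gen } t}\lambda_{j'}$ preserves the ansatz. Symmetry across positions within generation $t+1$ is automatic because the decoded information tuple has the same joint distribution regardless of the mentor's identity.

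For the signal count, $\lambda_j$ and $C_t$ are conditionally independent with aggregated signal counts $1$ and $R_t - 1$ respectively, where $R_t$ denotes the common signal count of any agent in generation $t$; the second equality follows from additivity of signal counts under independent Gaussian sums applied to $\ell_j = \lambda_j + C_t$, together with Proposition \ref{prop:signal_counting}. The gen-$(t+1)$ decomposition then gives $R_{t+1} = 1 + K + (R_t - 1) = R_t + K$; with $R_1 = 1$, this solves to $R_t = (t-1)K + 1$. For agent $i = (t-1)K + k$ in generation $t$ with $1 \le k \le K$, we obtain $r_i = R_t = i - k + 1 > i - K$, as claimed. Aggregative efficiency follows because $|r_i/i - 1| \le (K-1)/i \to 0$. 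The main subtlety is the coefficient-$1$ claim in the Bayesian update step, but this is a straightforward Gaussian calculation that relies only on the specific variance-to-mean structure identified in Lemma \ref{lem:trick}.
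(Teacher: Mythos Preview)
Your proposal is correct and follows essentially the same inductive argument as the paper: both show that every agent $i$ in generation $t$ plays $\ell_i=\lambda_i+\sum_{j'\le (t-1)K}\lambda_{j'}$ by using the mentor's signal to recover the common social term and hence every previous-generation private signal. Your slight elaboration on why the aggregated term $C_t$ enters with coefficient~$1$ (via the mean--variance relation from Lemma~\ref{lem:trick}) is a point the paper leaves implicit, but the structure and signal-count computation $R_t=(t-1)K+1$ are identical.
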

If an agent observes the actions of the previous generation along
with one of their private signals, she can calculate the common confounding
information and fully compensate for this confound. In networks with
large $K$, showing each agent just one extra signal (of someone from
the previous generation) increases aggregative efficiency from nearly
0 to 1.

In the context of the application, incumbents in the organization
act based on private information and shared organizational knowledge.
A newcomer ignorant of the organizational knowledge cannot fully separate
these two forces that shape others' behavior. But by describing her
perspective, a mentor can help a newcomer interpret everyone else's
behavior. This removes the informational confound facing the newcomer
and lets her extract the private information underlying these predecessors'
actions. A related force is described in management literature:
\begin{quote}
\textquotedblleft Mentors can be powerful socializing agents as an
individual adjusts to a new job or organization. As prot\'{e}g\'{e}s learn
about their roles within the organization, mentors can help them correctly
interpret their experiences within the organization\textquoteright s
expectations and culture.\textquotedblright{} \textendash{} \citet{ragins2007handbook}
\end{quote}
Our result formalizes this intuition in a social-learning environment.
Our stylized model of mentorship abstracts away from many of its other
benefits (e.g., the expertise of the mentor in terms of being able
to generate more precise signals than the mentee), and shows how the
``interpretive'' value of mentorship improves learning within the
organization. Potentially consistent with this interpretation, \cite*{sandvik2020workplace} find that encouraging meetings where employees seek advice from peers led to large performance gains in a sales firm. The idea that observing predecessor's private information can be quite beneficial also appears in \cite*{bloom2013does}, who find that implementing record-keeping practices to document and analyze the context behind past behavior (such as the reasons behind machine failures and product defects) improves firm performance.

\subsection{Information Silos\label{subsec:Application-2:-Information}}

Within some organizations, information is fragmented among various
subgroups (departments, product divisions, trading desks) that fail
to communicate with each other, creating \emph{information silos}.\footnote{We thank Suraj Malladi for suggesting this application.}
These silos have a number of causes: compensation structures that
discourage collaboration between teams, different subunits storing
information in mutually incompatible databases, or technical language
barriers that stop ideas from flowing between specialties.\footnote{\citet{sethi2016communication} show that a silo-like information
segregation may become endogenously entrenched in an organization
as each agent learns the subjective perspectives of the people she
talks to most often. This encourages the agent to keep consulting
the same people's opinions in the future, as she can better account
for their subjective biases and extract more precise information from
their opinions.} \citet{tett2015silo} documents the prevalence of information silos
in government bureaucracies, technology firms, and banks, noting that
many of these silos persist for many decades. She joins a number of
other authors and management consultants in arguing that information
silos are a necessary evil for running a complex workforce, but they
hurt the organization by obstructing internal information exchange.\footnote{Arguments in favor of eliminating information silos are common in
the popular press: see for example, \citet{gleeson2013silo} and \citet*{edmondson2019cross}.}

\begin{figure}[h]
\begin{centering}
\includegraphics[scale=0.3]{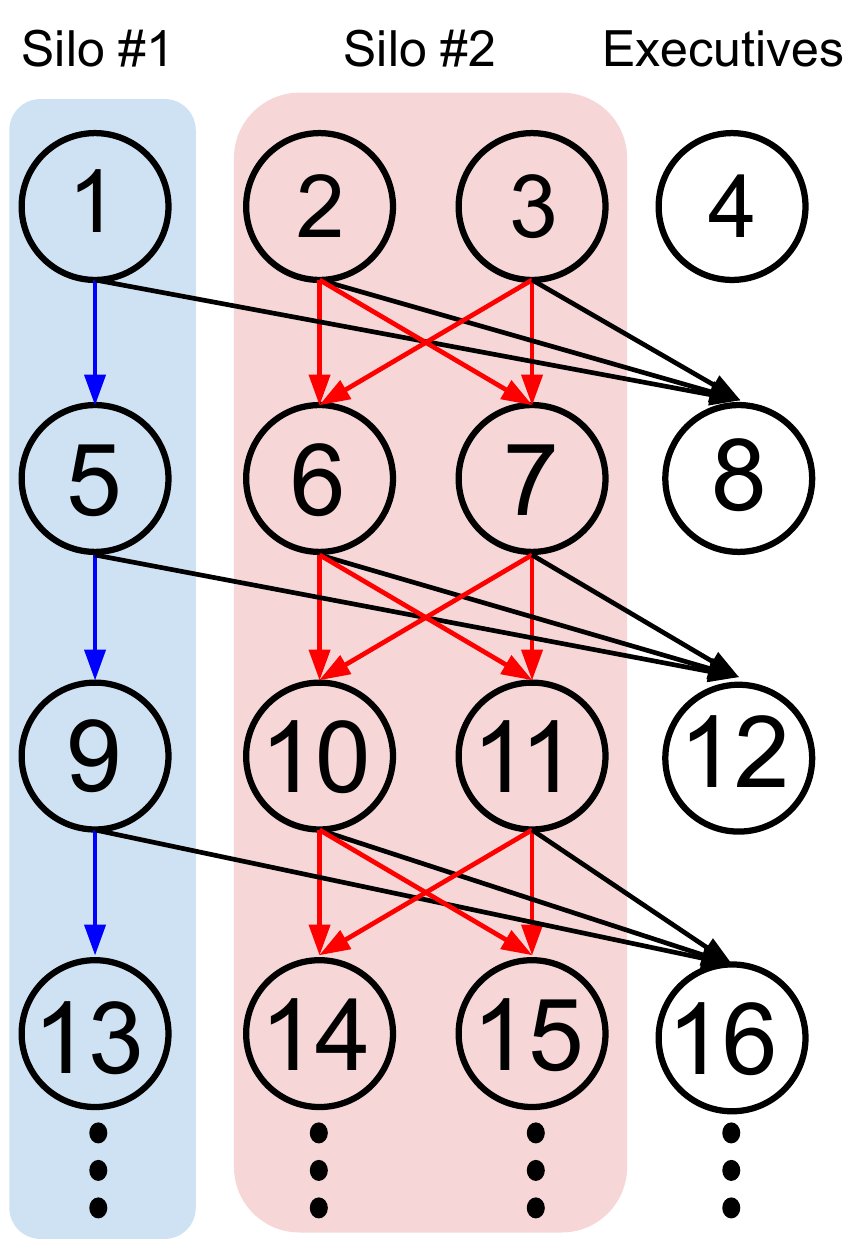}
\par\end{centering}
\caption{\label{fig:infosilos}A generations network with executives in 
 position 4 and two information silos, $S_{1}=\{1\}$ and $S_{2}=\{2,3\}.$}
\end{figure}

We use a generations network to show that information silos may benefit
the organization compared with fully transparent data sharing,\footnote{Similar network structures can also improve social learning in experimentation
settings. In a model where a sequence of short-lived, behavioral agents
take turns interacting with a multi-armed bandit, \citet*{immorlica2020incentivizing}
show that an observation structure featuring many information silos
ensures at least one silo produces a large amount of information about
the payoffs of each bandit arm, thus improving the welfare of later
agents who observe all the information generated in every silo.} when the organization's success primarily depends on the actions
of a few executives who can observe and process the behavior in all
the silos. Consider a generations network with $K+1\ge2$ agents per
generation. Suppose positions $\{1,...,K\}$ are the workers partitioned into
$N\ge1$ silos $S_{1},...,S_{N}$ so that each position $k$ only
observe predecessors in the same silo, $\Psi_{k}=S_{n}$ for $k\in S_{n}$,
while the agents in position $K+1$ are executives who can observe all of the silos, $\Psi_{K+1}=\{1,...,K\}.$ Figure \ref{fig:infosilos}
shows an example with two information silos that contain one and two
workers respectively in each generation. As a new cohort of workers join the organization,
each newcomer learns by observing their seniors from the same silo,
and information does not flow across different silos.

\begin{cor}
\label{cor:silo}
Agents in  position $K+1$ (who can observe all silos) eventually aggregate $$\lim_{t\to\infty}\frac{r_{(t-1)(K+1)+(K+1)}}{t}=\sum_{n=1}^{N}2 - \frac{1}{|S_{n}|}$$
signals per generation. If $k\in S_{n}$, then agents in position $k$
eventually aggregate $$\lim_{t\to\infty}\frac{r_{(t-1)(K+1)+k}}{t}=2 - \frac{1}{|S_{n}|}$$
signals per generation.
\end{cor}
Corollary
\ref{cor:silo} shows that executives can aggregate up to $K$ signals
per generation, depending on the sizes of the silos.  So, social learning can aggregate
more than three signals per generation for the executives when there
are silos. By contrast, the executives' information improves by no
more than three signals per generation starting from generation 3
in the maximal generations network (Proposition \ref{prop:starting_3}),
which represents an organization with full data transparency.

If the organization's payoff is closely identified with its executive's information aggregation, then information silos
can help the organization. Such an organization structure
provides less confounded information to the key decision-makers by
sacrificing the rate of learning within silos. Indeed, behavior in
different silos are conditionally independent of each other. If the
organization is instead one where every member's information aggregation significantly
contributes to its payoff, then information silos are detrimental
to the organization. Newcomers could learn better by observing all
incumbents in the organization, instead of only those in the same
silo.

In Appendix~\ref{a:silo_design}, we formalize these tradeoffs by asking how to optimally partition agents into silos. We find that if an executive's learning is at least as valuable as the learning of one worker, then it is optimal to partition agents into small silos. Moreover, the optimal silo number of silos is higher when the objective function places more weight on the executive's learning.


A large literature in organizational economics describes advantages to decentralized network structures, including for information aggregation (e.g., \cite{radner1993organization} and subsequent work on information processing and \cite{migrow2021designing} on cheap talk communication). Perhaps closest to our result, \cite{wolinsky2002eliciting} shows information silos can help a designer learn from information sources with misaligned incentives. The mechanism is that insulating agents in silos prevents them from collusively managing their pooled information when they communicate with the designer. We find that information silos can have benefits when information transmission instead happens via observational learning. Within the social learning literature, a number of papers point out that isolating individuals can benefit others by providing more independent information. For example, \cite{sgroi2002optimizing} discusses the benefits of ``guinea pigs'' who provide clean information to successors at the cost of their own learning and solves for an optimal number of such guinea pigs numerically. In our  organizational setting, related design problems have straightforward analytic solutions.

\section{Aggregative Efficiency and Welfare \label{sec:Welfare}}

In this section, we relate aggregative efficiency comparisons to welfare
comparisons. When signals are precise enough for agents to learn well,
welfare will depend largely on learning outcomes of finite networks
(such as the examples in Section \ref{sec:Examples}) rather than
asymptotic quantities. Higher aggregative efficiency implies higher
welfare, however, when signals are sufficiently imprecise.

Let $v_{i}^M:=\mathbb{E}[u_{i}(a_{i}^{*},\omega)]$ denote the expected
welfare of agent $i$ in network $M$, and recall that $-0.25<v_{i}^M<0$
for every $i$ in any network and with any private signal precision
$0<1/\sigma^{2}<\infty.$ It turns out that whenever the aggregative
efficiency of a network $M$ is strictly positive, $v_{i}^M\to0$ and this
convergence happens at an exponential rate. This implies the undiscounted
sum of utilities of all agents, $\sum_{i}v_{i}^M$, is convergent.

We show that if two networks are ranked by aggregative efficiency,
then the undiscounted sums of all agents' utilities on these networks
follow the same ranking whenever private signals are sufficiently
imprecise. The same result also applies to the discounted sums of
utilities, provided the discounting does not weigh the welfare of
the earliest agents too heavily.
\begin{prop}
\label{prop:summing_welfare}Suppose networks $M$ and $M'$ have aggregative
efficiencies $AE_{M}>AE_{M'}>0.$ Then there exists some $\underline{\sigma}^{2}>0$
such that for any signal variance $\sigma^{2}\ge\underline{\sigma}^{2}$,
we have $\sum_{i}\delta^{i-1}v_{i}^{M}>\sum_{i}\delta^{i-1}v_{i}^{M'}$
for $\delta$ sufficiently close to or equal to $1$.
\end{prop}
This result provides a foundation for the aggregative efficiency measure
in terms of a conventional social welfare function: the (un)discounted
sum of utilities. The result applies to arbitrary networks, and does not require the generations structure from Section \ref{sec:The-Generations-Network}.

The arbitrarily large information loss we have highlighted in Section \ref{sec:The-Generations-Network} can have large welfare consequences. To illustrate this, we give an example comparing complete networks to maximal generations networks with large generations.

\begin{example}\label{ex:welfare_loss}
Let $M$ be the maximal generations network with generation size $K$. We will let $K$ grow large and set signal variance $\sigma^2 = K\sigma_0^2$ for a constant $\sigma_0^2>0$. That is, we increase the generation size but fix the total informativeness of a single generation's private signals. Let $M'$ be the complete network (or any other network with $r_i = i$ for all $i$). Then 
$$\lim_{K \rightarrow \infty} \frac{\sum_{i}v_{i}^{M}}{\sum_{i}v_{i}^{M'}} = \infty.$$
(We provide the details in the Appendix.) Since utilities are negative for all agents, the limit implies that the total disutility in the maximal generations network is unboundedly larger than when agents can extract all previous private signals.
\end{example}

The results in this section do not rely heavily on agents' utility functions: Proposition \ref{prop:summing_welfare} and Example \ref{ex:welfare_loss} continue to hold for any decision problem satisfying mild conditions as long as the social-learning process is unchanged. For example, these results go through if agents could directly observe their neighbors' beliefs (as in the main model) but  choose a welfare-relevant binary action. More generally, suppose each agent
$i$ directly observes the posterior belief of each of her neighbors
in $N(i)$ along with the private $s_{i}$, so agents' beliefs remain the same as in the main model. Agents also choose an action from a compact set $A\subseteq\mathbb{R}$, and the utility $u(a,\omega)$ from this action is the welfare-relevant object. Suppose the utilities $u(a,0)$
and $u(a,1)$ are continuous in $a$ and no action is weakly dominated, and normalize the expected utility under full information of the state to be zero. Then Proposition \ref{prop:summing_welfare} and Example \ref{ex:welfare_loss} continue to hold as stated, and the appendix gives proofs in this more general setting.



\section{Related Literature}\label{s:literature}

We study rational social learning in a sequential model (as first introduced by \cite{banerjee1992simple}
and \cite*{bikhchandani1992theory}) where agents  only observe some predecessors. 
Our contribution is to quantify how network structure affects
the rate and short-run accuracy of learning through information confounding. This leads us
to the new  conclusion that a small amount of  confounding can generate arbitrarily
inefficient social learning, even when agents perfectly observe their neighbors' beliefs. The remainder of our literature review focuses on fully rational learning, but much more is known about how network structure affects learning in settings where agents are misspecified about neighbors' social information  (including DeGroot learning \citep{demarzo2003persuasion} and the experimentation model of \cite{bala1998learning}, along with the large subsequent literatures).

With unboundedly informative signals, complete long-run learning holds on the complete network \citep{smith2000pathological} and more generally on all networks satisfying mild connectivity conditions \citep{acemoglu2011bayesian,lobel2015information}. \citet{rosenberg2017efficiency} study networks that exclude confounding and reach a similar conclusion that ``the nature of the feedback on previous choices matters little'' under their learning criterion. Our results instead show that networks with different levels of information confounding can exhibit large differences in short-run accuracy and learning rates.\footnote{A precedent to our comparison of learning rates across networks    is \citet*{lobel2009rate},
who examine two particular networks, both involving each agent seeing
exactly one neighbor. Our results allow us to compare networks that vary along richer dimensions, including the number of neighbors that agents have.}

The most closely related network-based obstructions to learning appear in \citet{eyster2014extensive}, who mention the possibility of such
confounds but restrict their analysis to networks where rational agents
can fully correct for correlations in observations via anti-imitation.
They note that relaxing this restriction to allow confounds would
lead to ``distributional complications''; our framework and results
resolve these complications and study the implications of the confounds. Related obstructions arise in \cite*{dasaratha2020learning}, who study generation-like networks in a changing-state environment  but have no formal results about how learning differs across networks.  They focus instead on how private signal precisions and state evolution affect learning, arguing that the network structure matters much less than the state and information structures in their setting. By contrast, we show that in a standard fixed-state environment the network structure can trace out a wide range of learning efficiencies,  including nearly total information loss.


The previous two paragraphs discussed settings with unboundedly informative signals, but sequential models with boundedly informative signals can provide an alternative
setting for asking how network structure affects learning. Complete long-run learning fails when signals are boundedly informative \citep{smith2000pathological}, and 
several papers show that incomplete networks can improve learning outcomes relative to the complete
network \citep*{sgroi2002optimizing,acemoglu2011bayesian,arieli2019multidimensional}.
These papers  compare  specific classes of incomplete networks with
the complete network, but  do not allow analytic comparisons of different incomplete
networks. We instead focus on  a framework
with unboundedly informative Gaussian signals, where the log-linearity of actions yields a tractable measure of learning efficiency.

In a paper combining diffusion and social learning literatures, \citet{board2018learning} study a product adoption model where agents arrive at random times, observe network neighbors' adoption choices, 
and can pay  for a fully revealing private signal. Networks matter through different channels in their setting than the confounding mechanism that we focus on: indeed, networks that cause information confounding do not appear (or have vanishing probability) in \citet{board2018learning}. The main force is instead that agents infer from the absence of product adoptions, and this inference can depend on network structure.

Finally, a different strand of the literature 
examines different obstructions to efficient social learning in settings without network-based information loss. \citet*{harel2018rational} study
a social-learning environment with coarse communication and find,
as in our generations network, that agents learn at the same rate
as they would if they perfectly observed an arbitrarily small fraction
of private signals. The mechanism behind their result (``rational
groupthink'') is not related to an observation network preventing some agents from seeing others' social information, but rather relies on agents' finite action spaces obscuring all information about their private signals for many periods.\footnote{\cite*{huang2021learning} extend the results of \cite{harel2018rational} to give a uniform bound on the rate of learning across strongly connected networks. The obstruction to learning continues to be coarse actions and not network structure, however. Indeed, \cite{huang2021learning}'s result on general networks (which may introduce additional confounding) allows faster learning than the bound on the complete network from \cite{harel2018rational}.} Another group of papers study socially inefficient information acquisiton when signals are endogenous \citep{burguet2000social,mueller2016social,ali2018herding,lomys2019collective,liang2019complementary}.
We assume rich action spaces and exogenous signals to abstract from
these obstructions and focus on the role of the network.

\section{Conclusion}

This paper presents a tractable model of sequential social learning
that lets us compare social-learning dynamics across different observation
networks. In our environment, rational actions  are a log-linear function of observations and admit a signal-counting interpretation. Thus, we can measure the efficiency of learning in terms of the fraction of available signals incorporated
into beliefs asymptotically (``aggregative efficiency'') and make precise comparisons about the
rate of learning and welfare across different networks. 

The network causes information confounding   when an agent does not see an early predecessor whose action influenced several of the agent's neighbors. We show that confounding can be a powerful obstacle to social learning:   even little confounding can cause almost total information loss. For a class of symmetric networks where agents move in generations,
we derive a simple expression for aggregative efficiency. For any network
in this class, social learning aggregates no more than two signals
per generation in the long run, even for arbitrarily large generations. We also compute comparative statics of learning with respect to network parameters, finding that additional observations speed
up learning but extra confounding slows it down.

We have focused on how the network structure affects social learning
and abstracted away from many other sources of learning-rate inefficiency.
These other sources may realistically co-exist with the informational-confounding
issues discussed here and complicate the analysis. For instance, even
though the complete network allows agents to exactly infer every predecessor's
private signal, it could lead to worse informational free-riding incentives
in settings where agents must pay for the precision of their private
signals (compared to networks where agents have fewer observations).
Studying the trade-offs and/or interactions between network-based
information confounding and other obstructions to fast learning could
lead to fruitful future work.

\bibliographystyle{ecta}
\bibliography{rational_sequential}

\newpage
\appendix
\begin{center}
\textbf{\Large{}Appendix}{\Large\par}
\par\end{center}

\section{Proofs of Results in the Main Text}\label{app:proofs}

\renewcommand{\theprop}{A.\arabic{prop}}  
\renewcommand{\thelem}{A.\arabic{lem}} 
\setcounter{prop}{0}
\setcounter{lem}{0}

\subsection{Details on Example \ref{ex:Many-Neighbors-with}}\label{subsec:details_ex}

We show that for the network in Figure \ref{fig:A-three-generation-network},
agent $K_{1}+K_{2}+1$'s rational log-strategy puts weight $\frac{1+K_{1}}{1+K_{1}K_{2}}$
on each neighbor's log-action, and hence $r_{K_{1}+K_{2}+1}=1+\frac{K_{2}+K_{1}K_{2}}{1+K_{1}K_{2}}\cdot(1+K_{1}).$

For $1\le j\le K_{2},$ we have $\ell_{K_{1}+j}=\sum_{i=1}^{K_{1}}\lambda_{i}+\lambda_{K_{1}+j}$.
So, $\mathbb{E}[\ell_{K_{1}+j}\mid\omega=1]=(K_{1}+1)\cdot\frac{2}{\sigma^{2}}$,
$\text{Var}[\ell_{K_{1}+j}\mid\omega=1]=\frac{4}{\sigma^{2}}(K_{1}+1)$,
while $\text{Cov}[\ell_{K_{1}+j},\ell_{K_{1}+j'}\mid\omega=1]=\frac{4}{\sigma^{2}}K_{1}$
for $1\le j<j'\le K_{2}$. By Proposition \ref{prop:linear}, the
vector of weights that the final agent's rational log-strategy
puts on neighbors' log actions is given by 
\[
2\cdot\frac{2}{\sigma^{2}}\cdot\left[\begin{array}{ccc}
K_{1}+1 & K_{1}+1 & \cdots\end{array}\right]\cdot\frac{\sigma^{2}}{4} \cdot \begin{array}{c}
\underset{K_{2}\text{ by }K_{2}}{\underbrace{\left[\begin{array}{cccc}
K_{1}+1 & K_{1} & \cdots & K_{1}\\
K_{1} & K_{1}+1 & \cdots & K_{1}\\
\vdots & \vdots & \ddots & \vdots\\
K_{1} & K_{1} & \cdots & K_{1}+1
\end{array}\right]}}\end{array}^{-1}.
\]

The matrix inverse is equal to 
\[
\frac{1}{K_{1}K_{2}+1}\underset{K_{2}\text{ by }K_{2}}{\underbrace{\left[\begin{array}{cccc}
(K_{2}-1)K_{1}+1 & -K_{1} & \cdots & -K_{1}\\
-K_{1} & (K_{2}-1)K_{1}+1 & \cdots & -K_{1}\\
\vdots & \vdots & \ddots & \vdots\\
-K_{1} & -K_{1} & \cdots & (K_{2}-1)K_{1}+1
\end{array}\right]}}.
\]
Therefore, weight on each neighbor is $\frac{1+K_{1}}{1+K_{1}K_{2}}.$
Also, since $\mathbb{E}[\ell_{K_{1}+j}\mid\omega=1]=(K_{1}+1)\cdot\frac{2}{\sigma^{2}}$
for each neighbor $K_{1}+j$ and there are $K_{2}$ neighbors, we
get $\mathbb{E}[\ell_{K_{1}+K_{2}+1}\mid\omega=1]=[1+\frac{K_{2}+K_{1}K_{2}}{1+K_{1}K_{2}}\cdot(1+K_{1})]\cdot\frac{2}{\sigma^{2}}$.
By the signal counting interpretation, $r_{K_{1}+K_{2}+1}=1+\frac{K_{2}+K_{1}K_{2}}{1+K_{1}K_{2}}\cdot(1+K_{1}).$

\subsection{Proof of Proposition \ref{prop:linear}}

We first prove a lemma about the conditional distributions of the
log-signals.
\begin{lem}
\label{lem:trick} For each $i,$ the log-signal $\lambda_{i}$ has
a Gaussian distribution conditional on $\omega$, with $\mathbb{E}[\lambda_{i}\mid\omega=0]=-2/\sigma^{2},$
$\mathbb{E}[\lambda_{i}\mid\omega=1]=2/\sigma^{2},$ and $\textsc{\emph{Var}}[\lambda_{i}\mid\omega=0]=\textsc{\emph{Var}}[\lambda_{i}\mid\omega=1]=4/\sigma^{2}.$
\begin{proof}
We show that $\lambda_{i}=\frac{2}{\sigma^{2}}s_{i}.$ This is because
\begin{align*}
\lambda_{i} & =\ln\left(\frac{\mathbb{P}[\omega=1|s_{i}]}{\mathbb{P}[\omega=0|s_{i}]}\right)=\ln\left(\frac{\mathbb{P}[s_{i}|\omega=1]}{\mathbb{P}[s_{i}|\omega=0]}\right)=\ln\left(\frac{\exp\left(\frac{-(s_{i}-1)^{2}}{2\sigma^{2}}\right)}{\exp\left(\frac{-(s_{i}+1)^{2}}{2\sigma^{2}}\right)}\right)\\
 & =\frac{-(s_{i}^{2}-2s_{i}+1)+(s_{i}^{2}+2s_{i}+1)}{2\sigma^{2}}=\frac{2}{\sigma^{2}}s_{i}.
\end{align*}
The result then follows from scaling the conditional distributions
of $s_{i},$ $(s_{i}\mid\omega=1)\sim\mathcal{N}(1,\sigma^{2})$ and
$(s_{i}\mid\omega=0)\sim\mathcal{N}(-1,\sigma^{2})$.
\end{proof}
\end{lem}
Now we prove Proposition \ref{prop:linear}.
\begin{proof}
Agent 1 does not observe any predecessors, so clearly ${S}_{1}^{*}(\lambda_{1})=\lambda_{1}.$
Suppose by way of induction that the rational strategies of all
agents $j\le I-1$ are linear. Then each $\ell_{j}$ for $j\le I-1$
is a linear combination of $(\lambda_{h})_{h=1}^{I-1},$ which by Lemma
\ref{lem:trick} are conditionally Gaussian with conditional means
$\pm2/\sigma^{2}$ in states $\omega=1$ and $\omega=0$ and conditional
variance $4/\sigma^{2}$ in each state. This implies $(\ell_{j(1)},...,\ell_{j(d_I)})$
have a conditional joint Gaussian distribution with $(\ell_{j(1)},...,\ell_{j(d_{I})})\sim\mathcal{N}(\vec{\mu},\Sigma)$
conditional on $\omega=1$, and  $(\ell_{j(1)},...,\ell_{j(d_I)})\sim\mathcal{N}(-\vec{\mu},\Sigma)$
conditional on $\omega=0$, where $\vec{\mu}=\mathbb{E}[(\ell_{j(1)},...,\ell_{j(d_{I})})^{\prime}\mid\omega=1]$
and $\Sigma=\textsc{Cov}[\ell_{j(1)},...,\ell_{j(d_{I})}\mid\omega=1]$.

From the the multivariate Gaussian density, (writing $(\ell_{j(1)},...,\ell_{j(d_{I})})^{\prime}=\vec{a})$,
\begin{align*}
\ln\left(\frac{\mathbb{P}[\ell_{j(1)},...,\ell_{j(d_{I})}\mid\omega=1]}{\mathbb{P}[\ell_{j(1)},...,\ell_{j(d_{I})}\mid\omega=0]}\right) & =\ln\left(\frac{\exp(-\frac{1}{2}(\vec{a}-\vec{\mu})^{\prime}\Sigma^{-1}(\vec{a}-\vec{\mu}))}{\exp(-\frac{1}{2}(\vec{a}+\vec{\mu})^{\prime}\Sigma^{-1}(\vec{a}+\vec{\mu}))}\right)\\
 & =\vec{a}^{\prime}\Sigma^{-1}\vec{\mu}+\vec{\mu}^{\prime}\Sigma^{-1}\vec{a}
\end{align*}
which is $2\left(\vec{\mu}^{\prime}\Sigma^{-1}\right)\cdot(\ell_{j(1)},...,\ell_{j(d_{I})})^{\prime}$
because $\Sigma$ is symmetric. This then shows agent $I$'s rational
strategy must also be linear, completing the inductive step. This
argument also gives the explicit form of $\vec{\beta}_{I,\cdot}$.

For the final statement, we prove another lemma. The argument so far
implies that we may find weights $(w_{i,j})_{j\le i}$ so that the
realizations of rational log-actions are related to the realizations
of log-signals by $\ell_{i}=\sum_{j=1}^{i}w_{i,j}\lambda_{j}$. Let
$W$ be the matrix containing all such weights.
\begin{lem}
\label{lem:beta_and_W} Let $\hat{W}$ be the submatrix of $W$ with
rows $N(i)$ and columns $\{1,...,i-1\}.$Then $\vec{\beta}_{i}=\vec{\boldsymbol{1}}_{(i-1)}^{\prime}\times\hat{W}^{\prime}(\hat{W}\hat{W}^{\prime})^{-1}$
and the $i$-th row of $W$ is $W_{i}=\left((\vec{\beta}_{i,\cdot}^{\prime}\times\hat{W}),1,0,0,...\right)$.
\end{lem}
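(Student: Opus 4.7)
The plan is to recognize that this lemma is essentially a bookkeeping translation: the abstract conditional moments appearing in Proposition \ref{prop:linear} can be rewritten explicitly in terms of the weight matrix $\hat{W}$, and then Proposition \ref{prop:linear} is applied verbatim.

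First I would express each neighbor's log-action as a linear combination of the log-signals of agents $1,\ldots,i-1$, extending row $j(k)$ of $W$ by zeros to length $i-1$. Stacking these yields the matrix identity $(\ell_{j(1)},\ldots,\ell_{j(d_i)})^{\prime} = \hat{W}\,\vec{\lambda}_{(i-1)}$, where $\vec{\lambda}_{(i-1)} = (\lambda_{1},\ldots,\lambda_{i-1})^{\prime}$. Next I would use Lemma \ref{lem:trick} together with the conditional i.i.d.\ structure of the private signals to conclude that conditional on $\omega=1$ the vector $\vec{\lambda}_{(i-1)}$ has mean $(2/\sigma^{2})\vec{\boldsymbol{1}}_{(i-1)}$ and covariance $(4/\sigma^{2})\mathbf{I}_{i-1}$. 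Pushing these through the linear map $\hat{W}$ gives
\[
\mathbb{E}[(\ell_{j(1)},\ldots,\ell_{j(d_i)}) \mid \omega=1] = (2/\sigma^{2})\vec{\boldsymbol{1}}_{(i-1)}^{\prime}\hat{W}^{\prime}, \qquad \textsc{Cov}[\ell_{j(1)},\ldots,\ell_{j(d_i)} \mid \omega=1] = (4/\sigma^{2})\,\hat{W}\hat{W}^{\prime}.
\]

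Substituting these two expressions into the formula for $\vec{\beta}_i$ in Proposition \ref{prop:linear}, the scalars combine as $2\cdot(2/\sigma^{2})\cdot(\sigma^{2}/4)=1$, so the $\sigma^{2}$ dependence drops out and $\vec{\beta}_i = \vec{\boldsymbol{1}}_{(i-1)}^{\prime}\hat{W}^{\prime}(\hat{W}\hat{W}^{\prime})^{-1}$ as claimed. (This also re-confirms the precision-independence of $\vec{\beta}_i$ asserted at the end of Proposition \ref{prop:linear}.) For the second claim, I would plug the matrix identity from Step 1 into the equilibrium log-strategy in Proposition \ref{prop:linear} to get $\ell_i = \lambda_i + \vec{\beta}_i\hat{W}\vec{\lambda}_{(i-1)}$ and read off the coefficient of each $\lambda_j$: the $j$-th entry of the row vector $\vec{\beta}_i\hat{W}$ for $j<i$, one for $j=i$, and zero for $j>i$.

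The only conceptual subtlety, which I expect to be the main obstacle, is ensuring that $\hat{W}\hat{W}^{\prime}$ is invertible so that the stated formula makes sense. But this is immediate: $\hat{W}\hat{W}^{\prime}$ is invertible precisely when the conditional covariance matrix in Proposition \ref{prop:linear} is invertible, since they differ only by the positive scalar $4/\sigma^{2}$. Beyond this, the proof is pure matrix bookkeeping with transposes and dimensions, with no further conceptual content required.
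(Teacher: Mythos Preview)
Your proposal is correct and follows essentially the same route as the paper: compute the conditional mean and covariance of the neighbors' log-actions in terms of $\hat{W}$ via Lemma~\ref{lem:trick}, substitute into the expression for $\vec{\beta}_i$ already derived in the proof of Proposition~\ref{prop:linear}, cancel the $\sigma^2$ factors, and then read off the row $W_i$ from $\ell_i = \lambda_i + \vec{\beta}_i\hat{W}\vec{\lambda}_{(i-1)}$. Your explicit remark on invertibility of $\hat{W}\hat{W}^{\prime}$ is a small addition the paper leaves implicit, but otherwise the arguments coincide.
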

\begin{proof}
Suppose $N(i)=\{j(1),...,j(d_{i})\}$ with $j(1)<...<j(d_{i}).$ By
Lemma \ref{lem:trick} and construction of $\hat{W}$, we have $\mathbb{E}[\ell_{j(k)}\mid\omega=1]=\frac{2}{\sigma^{2}}\sum_{h=1}^{i-1}\hat{W}_{k,h}$.
So, $\mathbb{E}[(\ell_{j(1)},...,\ell_{j(d_{i})})\mid\omega=1]=\frac{2}{\sigma^{2}}(\hat{W}\cdot\boldsymbol{1}_{(i-1)})^{\prime}=\frac{2}{\sigma^{2}}\boldsymbol{1}_{(i-1)}^{\prime}\hat{W}{}^{\prime}.$
Also, again by Lemma \ref{lem:trick} and construction of $\hat{W}$,
we can calculate that for $1\le k_{1}\le k_{2}\le d_{i},$ $\textsc{Cov}[\ell_{j(k_{1})},\ell_{j(k_{2})}\mid\omega=1]=\frac{4}{\sigma^{2}}\sum_{h=1}^{i-1}(\hat{W}_{k_{1},h}\hat{W}_{k_{2},h}),$
meaning $\textsc{Cov}[\ell_{j(1)},...,\ell_{j(d_{i})}\mid\omega=1]=\frac{4}{\sigma^{2}}\hat{W}\hat{W}^{\prime}.$
It then follows from what we have shown above that $\vec{\beta}_{i,\cdot}=2\cdot\frac{2}{\sigma^{2}}\boldsymbol{1}_{(i-1)}^{\prime}\hat{W}{}^{\prime}\times\left[\frac{4}{\sigma^{2}}\hat{W}\hat{W}^{\prime}\right]^{-1}=\vec{\boldsymbol{1}}_{(i-1)}^{\prime}\times\hat{W}^{\prime}(\hat{W}\hat{W}^{\prime})^{-1}.$

Since $i$ puts weight 1 on $\lambda_{i}$ and weights $\vec{\beta}_{i,\cdot}$
on $(\ell_{j(1)},...,\ell_{j(d_{i})})^{\prime}=\hat{W}\times(\lambda_{1},...,\lambda_{i-1})^{\prime},$
this shows the first $i-1$ elements in the row $W_{i}$ must be $\vec{\beta}_{i,\cdot}^{\prime}\cdot\hat{W}$
while the $i$-th element is 1.
\end{proof}
To prove the final statement of Proposition \ref{prop:linear}, first
observe that $W_{1}=(1,0,0,...)$ does not depend on $\sigma^{2}.$
The same applies to $\vec{\beta}_{1,\cdot}$. By way of induction,
suppose rows $W_{i}$ and vectors $\vec{\beta}_{i,\cdot}$ do not
depend on $\sigma^{2}$ for any $i\le I$. If $\hat{W}$ is the submatrix
of $W$ with rows $N(I+1)$, then since $N(I+1)\subseteq\{1,...,I\},$
by the inductive hypothesis $\hat{W}$ must be independent of $\sigma^{2}$.
Thus the same independence also applies to $\vec{\beta}_{I+1,\cdot}$
since this vector only depends on $\hat{W}$ by the result just derived.
In turn, since $W_{I+1}$ is only a function of $\vec{\beta}_{I+1,\cdot}^{\prime}$
and $\hat{W}$, and these terms are independent of $\sigma^{2}$ as
argued before, same goes for $W_{I+1},$ completing the inductive
step.
\end{proof}

\subsection{Proof of Proposition \ref{prop:signal_counting}}
\begin{proof}
It suffices to show that $\mathbb{E}[\ell_{i}\mid\omega=1]=\frac{1}{2}\textsc{Var}\left[\ell_{i}\mid\omega=1\right]$.
By Proposition \ref{prop:linear}, $\ell_{i}=\lambda_{i}+\sum_{k=1}^{d_{i}}\beta_{i,j(k)}\ell_{j(k)}$.
From Lemma \ref{lem:trick}, we have $\mathbb{E}[\lambda_{i}\mid\omega=1]=\frac{1}{2}\textsc{Var}\left[\lambda_{i}\mid\omega=1\right]$.
Furthermore, $\lambda_{i}$ is independent from $\sum_{k=1}^{d_{i}}\beta_{i,j(k)}\ell_{j(k)},$
as the latter term only depends on $\lambda_{1},...,\lambda_{i-1}.$
So we need only show $\mathbb{E}[\sum_{k=1}^{d_{i}}\beta_{i,j(k)}\ell_{j(k)}\mid\omega=1]=\frac{1}{2}\textsc{Var}\left[\sum_{k=1}^{d_{i}}\beta_{i,j(k)}\ell_{j(k)}\mid\omega=1\right]$

Let $\vec{\mu}=\mathbb{E}[(\ell_{j(1)},...,\ell_{j(d_{i})})^{\prime}\mid\omega=1]$
and $\Sigma=\textsc{Cov}[\ell_{j(1)},...,\ell_{j(d_{i})}\mid\omega=1]$.
Using the expression for $\vec{\beta}_{i,\cdot}$ from Proposition
\ref{prop:linear}, $\mathbb{E}\left[\sum_{k=1}^{d_{i}}\beta_{i,j(k)}\ell_{j(k)}\mid\omega=1\right]=2\left(\vec{\mu}^{\prime}\Sigma^{-1}\right)\cdot\vec{\mu}.$
Also, 
\begin{align*}
\textsc{Var}\left[\sum_{k=1}^{d_{i}}\beta_{i,j(k)}\ell_{j(k)}\mid\omega=1\right] & =\left(2\vec{\mu}^{\prime}\Sigma^{-1}\right)\Sigma\left(2\vec{\mu}^{\prime}\Sigma^{-1}\right)^{\prime}=4\vec{\mu}^{\prime}\Sigma^{-1}\vec{\mu}
\end{align*}
 using the fact that $\Sigma$ is a symmetric matrix. This is twice
$\mathbb{E}\left[\sum_{k=1}^{d_{i}}\beta_{i,j(k)}\ell_{j(k)}\mid\omega=1\right]$
as desired.
\end{proof}

\subsection{Proof of Proposition \ref{prop:long_run_conditions}}

We first state and prove an auxiliary lemma.
\begin{lem}
\label{lem:accuracy} For any $0<\epsilon<0.5,$ $\mathbb{P}[a_{i}>1-\epsilon\mid\omega=1]=1-\Phi\left(\frac{\ln\left(\frac{1-\epsilon}{\epsilon}\right)-r_{i}\frac{2}{\sigma^{2}}}{\sqrt{r_{i}}\frac{2}{\sigma}}\right),$where
$\Phi$ is the standard Gaussian distribution function. This expression
is increasing in $r_{i}$ and approaches 1. Also, $\mathbb{P}[a_{i}<\epsilon\mid\omega=0]=\Phi\left(\frac{\ln\left(\frac{1-\epsilon}{\epsilon}\right)+r_{i}\frac{2}{\sigma^{2}}}{\sqrt{r_{i}}\frac{2}{\sigma}}\right).$
This expression is increasing in $r_{i}$ and approaches 1.
\end{lem}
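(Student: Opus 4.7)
The plan is to reduce both claims to one-dimensional computations involving the standard Gaussian CDF $\Phi$, using Proposition \ref{prop:signal_counting}. By that proposition together with Definition \ref{def:Social-learning-aggregates}, in equilibrium $\ell_i \mid \omega = 1 \sim \mathcal{N}(r_i \cdot 2/\sigma^2,\, r_i \cdot 4/\sigma^2)$ and $\ell_i \mid \omega = 0 \sim \mathcal{N}(-r_i \cdot 2/\sigma^2,\, r_i \cdot 4/\sigma^2)$. The map $a \mapsto \ln(a/(1-a))$ is a strictly increasing bijection from $(0,1)$ onto $\mathbb{R}$, so $\{a_i > 1-\epsilon\}$ coincides with $\{\ell_i > A\}$, where $A := \ln((1-\epsilon)/\epsilon) > 0$ since $\epsilon < 1/2$, and $\{a_i < \epsilon\}$ coincides with $\{\ell_i < -A\}$.

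Next, I will standardize each conditional Gaussian. Under $\omega = 1$, $\mathbb{P}[\ell_i > A]$ equals $1 - \Phi\bigl((A - r_i \cdot 2/\sigma^2)/(\sqrt{r_i}\cdot 2/\sigma)\bigr)$, which matches the first stated expression. Under $\omega = 0$, the same standardization turns $\mathbb{P}[\ell_i < -A]$ into the corresponding $\Phi$-term; equivalently, since $-\ell_i \mid \omega = 0$ has the same law as $\ell_i \mid \omega = 1$, symmetry gives $\mathbb{P}[\ell_i < -A \mid \omega = 0] = \mathbb{P}[\ell_i > A \mid \omega = 1]$, so the second expression reduces to the same analysis as the first.

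For the monotonicity and limit claims I will analyze the argument of $\Phi$ as a function of $r > 0$. Writing $f(r) := (A - r \cdot 2/\sigma^2)/(\sqrt{r}\cdot 2/\sigma) = (A\sigma/2)\, r^{-1/2} - (1/\sigma)\, r^{1/2}$, each summand is strictly decreasing in $r$ (the first because $A > 0$, the second because $1/\sigma > 0$), so $f$ is strictly decreasing and $1 - \Phi(f(r))$ is strictly increasing. As $r \to \infty$, $f(r) \to -\infty$, so $1 - \Phi(f(r)) \to 1$. The analogous decomposition of the argument of $\Phi$ in the second expression yields a function whose two summands are both strictly increasing in $r$ and whose sum diverges to $+\infty$, giving monotone convergence to $1$.

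I expect no real conceptual obstacle: the lemma is essentially a bookkeeping consequence of the signal-counting interpretation from Proposition \ref{prop:signal_counting} combined with the change of variable between $a_i$ and $\ell_i$. The only step that demands some care is sign handling in the $\omega = 0$ case, since the conditional mean of $\ell_i$ is already negative; I would handle this most cleanly by symmetrizing via $-\ell_i$ so that the monotonicity and limit claim for $\omega = 0$ reduces to the one already proved for $\omega = 1$.
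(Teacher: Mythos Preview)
Your proposal is correct and follows essentially the same route as the paper: translate the event on $a_i$ into a threshold on $\ell_i$, invoke the signal-counting conditional Gaussian from Proposition~\ref{prop:signal_counting}, standardize, and then establish monotonicity and the limit via elementary calculus (you decompose the argument of $\Phi$ into two monotone summands, whereas the paper computes the sign of the derivative directly; these are equivalent). The paper handles the $\omega=0$ case by ``analogous arguments,'' while your symmetry reduction via $-\ell_i$ is a clean way to accomplish the same thing.
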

\begin{proof}
Note that $a_{i}>1-\epsilon$ if and only if $\ell_{i}>\ln\left(\frac{1-\epsilon}{\epsilon}\right)>0.$
Given that $(\ell_{i}\mid\omega=1)\sim\mathcal{N}\left(r_{i}\cdot\frac{2}{\sigma^{2}},r_{i}\cdot\frac{4}{\sigma^{2}}\right)$
by Proposition \ref{prop:signal_counting}, the expression for $\mathbb{P}[a_{i}>1-\epsilon\mid\omega=1]$
follows. To see that it is increasing in $r_{i}$, observe that $\frac{d}{dr_{i}}\frac{\ln\left(\frac{1-\epsilon}{\epsilon}\right)-r_{i}\frac{2}{\sigma^{2}}}{\sqrt{r_{i}}\frac{2}{\sigma}}$
has the same sign as 
\[
\frac{-2}{\sigma^{2}}(\sqrt{r_{i}}\frac{2}{\sigma})-(\ln\left(\frac{1-\epsilon}{\epsilon}\right)-r_{i}\frac{2}{\sigma^{2}})(\frac{1}{2}r_{i}^{-0.5}\frac{2}{\sigma})=-\frac{2}{\sigma^{3}}\sqrt{r_{i}}-\ln\left(\frac{1-\epsilon}{\epsilon}\right)r_{i}^{-0.5}\frac{1}{\sigma}<0.
\]
 Also, it is clear that $\lim_{r_{i}\to\infty}\frac{\ln\left(\frac{1-\epsilon}{\epsilon}\right)-r_{i}\frac{2}{\sigma^{2}}}{\sqrt{r_{i}}\frac{2}{\sigma}}=-\infty$,
hence $\lim_{r_{i}\to\infty}\mathbb{P}[a_{i}>1-\epsilon\mid\omega=1]=1.$
The results for $\mathbb{P}[a_{i}<\epsilon\mid\omega=0]$ follow from
analogous arguments.
\end{proof}
We now turn to the proof of Proposition \ref{prop:long_run_conditions}.
\begin{proof}
By Proposition \ref{prop:signal_counting}, there exist $(r_{i})_{i\ge1}$
so that social learning aggregates $r_{i}$ signals by agent $i.$
We first show that (3) and (4) in Proposition \ref{prop:long_run_conditions}
are equivalent. Let $\text{\ensuremath{\epsilon^{'}>0} }$ be given
and suppose $\lim_{i\to\infty}r_{i}=\infty.$ Putting $\epsilon=\min(\epsilon^{'},0.4),$
we get that $\mathbb{P}[|a_{i}-\omega|<\epsilon\mid\omega=1]\to1$
and $\mathbb{P}[|a_{i}-\omega|<\epsilon\mid\omega=0]\to1$ since the
two expressions in Lemma \ref{lem:accuracy} increase in $r_{i}$
and approach 1, hence also $\mathbb{P}[|a_{i}-\omega|<\epsilon^{'}]\to1.$
So society learns completely in the long run. Conversely, if we do
not have $\lim_{i\to\infty}r_{i}=\infty$, then for some $K<\infty$
we have $r_{i}<K$ for infinitely many $i$. By Lemma \ref{lem:accuracy}
we will get that $\mathbb{P}[|a_{i}-\omega|<0.1\mid\omega=1]$ are
bounded by $1-\Phi\left(\frac{\ln\left(9\right)-K\frac{2}{\sigma^{2}}}{\sqrt{K}\frac{2}{\sigma}}\right)$
for these $i,$ hence society does not learn completely in the long
run.

Next, we show that Conditions (1) and (2) in the proposition are both
equivalent to Condition (3), $\lim_{i\to\infty}r_{i}=\infty.$

\textbf{Condition (1)}: $\lim_{{\it i\to\infty}}\mathcal{PL}(i)=\infty$.

\emph{Necessity}: Suppose $\lim_{i\to\infty}r_{i}=\infty.$ For $h\in\mathbb{N},$
let $I(h):=\{i:\mathcal{PL}(i)=h\}.$ We show by induction that $I(h)$
is finite for all $h\in\mathbb{N}$. For every $i\in I(0),$ $r_{i}=1,$
so $\lim_{i\to\infty}r_{i}=\infty$ implies $|I(0)|<\infty.$ Now
suppose $|I(h)|<\infty$ for all $h\le L.$ If $i\in I(L+1),$ then
every $j$ that can be reached along $M$ from $i$ must belong to
$I(h)$ for some $h\le L.$ The subnetwork containing $i$ is therefore
a subset of $\cup_{h=0}^{L}I(h)$, a finite set by the inductive hypothesis.
Thus $r_{i}\le1+\sum_{h=0}^{L}|I(h)|$ for all $i\in I(L+1).$ So
$\lim_{i\to\infty}r_{i}=\infty$ implies $I(L+1)$ is finite, completing
the inductive step and proving $I(h)$ is finite for all $h$. Hence
$\lim_{{\it i\to\infty}}\mathcal{PL}(i)=\infty.$

\emph{Sufficiency}: First note if $j\in N(i),$ then $r_{i}\ge r_{j}+1.$
This is because  $\ell_{j}\sim\mathcal{N}\left(\pm r_{j}\cdot\frac{2}{\sigma^{2}},r_{j}\cdot\frac{4}{\sigma^{2}}\right)$
conditional on the two states, and furthermore $\ell_{j}$ is conditionally
independent of $s_{i}.$ So, $\ell_{j}+\lambda_{i}$ is a possibly
play for $i,$ which would have the conditional distributions $\mathcal{N}\left(\pm(r_{j}+1)\cdot\frac{2}{\sigma^{2}},(r_{j}+1)\cdot\frac{4}{\sigma^{2}}\right)$
in the two states. If $r_{i}<r_{j}+1,$ then $i$ would have a profitable
deviation by choosing $\ell_{i}=\ell_{j}+\lambda_{i}$ instead, since
it follows from Lemma \ref{lem:accuracy} that a log-action that aggregates
more signals leads to higher expected payoffs.

\textbf{Condition (2)}: $\lim_{i\to\infty}\left[\max_{j\in N(i)}j\right]=\infty.$

\emph{Necessity}: If Condition (2) is violated, there exists some
$\bar{j}<\infty$ so that there exist infinitely many $i$'s with
$N(i)\subseteq\{1,...,\bar{j}\}.$ The subnetwork containing any such
$i$ is a subset of $\{1,...,\bar{j}\},$ so $r_{i}\le\bar{j}+1$.
We cannot have $\lim_{i\to\infty}r_{i}=\infty.$

\emph{Sufficiency}: Construct an increasing sequence $C_{1}\le C_{2}\le...$
as follows. Condition (2) implies there exists $C_{1}$ so that $\max_{j\in N(i)}j\ge1$
for all $i\ge C_{1}.$ So, $\mathcal{PL}(i)\ge1$ for all $i\ge C_{1}.$
Suppose $C_{1}\le...\le C_{n}$ are constructed with the property
that $\mathcal{PL}(i)\ge k$ for all $i\ge C_{k}$, $k=1,...,n.$
Condition (2) implies there exists $C_{n+1}$ so that $\max_{j\in N(i)}j\ge C_{n}$
for all $i\ge C_{n+1}.$ But since all $j\ge C_{n}$ have $\mathcal{PL}(j)\ge n$
by the inductive hypothesis, all $i\ge C_{n+1}$ must have $\mathcal{PL}(i)\ge n+1,$
completing the inductive step. This shows $\lim_{{\it i\to\infty}}\mathcal{PL}(i)=\infty$.
By the sufficiency of Condition (1) for $\lim_{i\to\infty}r_{i}=\infty$,
we see that Condition (2) implies the same.
\end{proof}

\subsection{Proof of Theorem \ref{thm:efficiency_dc}}
\begin{proof}
If $d=1,$ then exactly one signal is aggregated per generation so
$r_{i}/K\rightarrow1$ as required. Also, if $c=0$, then we must
have $d=1.$ From now on we assume $d\ge2$ and $c\ge1.$
\begin{lem}
\label{lem:anon_beta}For $d\ge2,$ each generation $t$ and each
$i\ne i'$ in generation $t$, $\textsc{\emph{Var}}\left[\ell_{i}\mid\omega=1\right]$
and $\textsc{\emph{Cov}}\left[\ell_{i},\ell_{i'}\mid\omega=1\right]$
depend only on $t$ and not on the identities of $i$ or $i'$, which
we call $\textsc{\emph{Var}}_{t}$ and $\textsc{\emph{Cov}}_{t}$,
respectively. Similarly, for $i$ in generation $t$ and each $j\in N(i)$,
the weight $\beta_{i,j}$ depends only on $t$, which we call $\beta_{t}$.
\begin{proof}
The results hold by inductively applying the symmetry condition. Clearly
they are true for $t=2.$ Suppose they are true for all $t\le T$.
For an agent $i$ in generation $t=T+1$, the inductive hypothesis
implies $\textsc{Var}[\ell_{j}\mid\omega=1]$ is the same for all
$j\in N(i),$ and all pairs $j,j^{'}\in N(i)$ with $j\ne j^{'}$
have the same conditional covariance. Also, using Proposition \ref{prop:signal_counting},
$\mathbb{E}[\ell_{j}\mid\omega=1]$ is the same for all $j\in N(i)$.
Thus by Proposition \ref{prop:linear}, $i$ places the same weight,
say $\beta_{t}$, on all neighbors. Using the fact that $\ell_{i}=\lambda_{i}+\sum_{j\in N(i)}\beta_{t}\ell_{j}$,
we have the recursive expressions $\textsc{Var}[\ell_{i}\mid\omega=1]=\frac{4}{\sigma^{2}}+\beta_{t}^{2}(d\textsc{Var}_{t-1}+(d^{2}-d)\textsc{Cov}_{t-1})$
for all $i$ in generation $t,$ and $\textsc{Cov}[\ell_{i},\ell_{i^{'}}\mid\omega=1]=\beta_{t}^{2}(c\textsc{Var}_{t-1}+(d^{2}-c)\textsc{Cov}_{t-1})$
for all agents $i\ne i^{'}$ in generation $t.$ This shows the claims
for $t=T+1$, and completes the proof by induction.
\end{proof}
\end{lem}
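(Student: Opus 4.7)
The plan is to prove all three assertions simultaneously by induction on the generation index $t$. The base case is $t=2$: every agent in generation 2 has $d$ neighbors drawn from generation 1, and each generation-1 agent $j$ satisfies $\ell_j = \lambda_j$ by Proposition \ref{prop:linear}, so by Lemma \ref{lem:trick} the log-actions of generation 1 are i.i.d.\ $\mathcal{N}(2/\sigma^2, 4/\sigma^2)$ given $\omega=1$. Thus for every $i$ in generation 2, the conditional covariance matrix of $(\ell_{j(1)},\dots,\ell_{j(d)})$ equals $(4/\sigma^2)I_d$ and the mean vector is $(2/\sigma^2)\vec{1}$; the closed form for $\vec{\beta}_i$ in Proposition \ref{prop:linear} then gives a common entry across components and across $i$, and from $\ell_i = \lambda_i + \sum_{j \in N(i)} \beta_2 \ell_j$ one reads off that $\textsc{Var}_2$ and $\textsc{Cov}_2$ also depend only on the generation.

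For the inductive step, suppose the lemma holds through generation $T$ and fix any $i$ in generation $T+1$. All $d$ of $i$'s neighbors lie in generation $T$, so by the inductive hypothesis each $\ell_j$ for $j \in N(i)$ has conditional variance $\textsc{Var}_T$, and any two distinct neighbors of $i$ have conditional covariance $\textsc{Cov}_T$. Proposition \ref{prop:signal_counting} says $\ell_j \mid \omega = 1 \sim \mathcal{N}(r_j \cdot 2/\sigma^2, r_j \cdot 4/\sigma^2)$, so the common variance forces $r_j$ — hence also the conditional mean $m_T := \mathbb{E}[\ell_j \mid \omega=1]$ — to be a single number throughout generation $T$. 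The covariance matrix of $(\ell_{j(1)},\dots,\ell_{j(d)})$ therefore takes the equicorrelated form $\Sigma_T = (\textsc{Var}_T - \textsc{Cov}_T) I + \textsc{Cov}_T J$ (with $J$ the all-ones matrix) and the mean vector is $m_T \vec{1}$, both independent of which $i$ in generation $T+1$ we chose.

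The key step is now to extract a scalar weight from this symmetric structure. Since $\Sigma_T$ is a scaled identity plus a rank-one perturbation, Sherman--Morrison gives that $\Sigma_T^{-1}$ is again of the form $\alpha I + \gamma J$; applying it to the constant vector $m_T \vec{1}$ yields another constant vector, so the formula $\vec{\beta}_i = 2 m_T \vec{1}^{\top} \Sigma_T^{-1}$ from Proposition \ref{prop:linear} makes all components of $\vec{\beta}_i$ equal to a common $\beta_{T+1}$ that is the same for every $i$ in generation $T+1$. Writing $\ell_i = \lambda_i + \beta_{T+1} \sum_{j \in N(i)} \ell_j$ and using the independence of $\lambda_i$ from the earlier log-signals, one then computes
\[
\textsc{Var}_{T+1} = \tfrac{4}{\sigma^2} + \beta_{T+1}^{2}\bigl(d\,\textsc{Var}_T + (d^{2}-d)\,\textsc{Cov}_T\bigr),
\]
and, using $|N(i) \cap N(i')| = c$ for distinct $i,i'$ in generation $T+1$,
\[
\textsc{Cov}_{T+1} = \beta_{T+1}^{2}\bigl(c\,\textsc{Var}_T + (d^{2}-c)\,\textsc{Cov}_T\bigr).
\]
Both depend only on $T+1$, closing the induction.

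The main obstacle I expect is formalizing the symmetry reduction that takes the exchangeable pair $(\Sigma_T, m_T \vec{1})$ to equal-weight $\vec{\beta}_i$; the Sherman--Morrison computation above is one route, and a cleaner alternative is to observe that permuting the indices of $i$'s $d$ neighbors leaves the joint conditional distribution of $(\ell_{j(1)},\dots,\ell_{j(d)},\omega)$ invariant, so by uniqueness of the Bayesian best response the optimal weights must be invariant under permutation, hence constant. Once that step is in place, everything else is a routine bookkeeping computation in Gaussian variances and covariances.
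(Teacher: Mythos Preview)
Your proposal is correct and follows essentially the same inductive approach as the paper's proof, with the same recursive formulas for $\textsc{Var}_{t}$ and $\textsc{Cov}_{t}$. The only difference is that you spell out in more detail the base case and the step from ``equicorrelated $\Sigma_T$ plus constant mean vector'' to ``equal $\beta$-weights'' (via Sherman--Morrison or the permutation-invariance argument), whereas the paper simply invokes Proposition~\ref{prop:linear} at that point.
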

Taking the difference of the two expressions for $\textsc{Var}_{t}$
and $\textsc{Cov}_{t}$ gives:

\begin{equation}
\textsc{Var}_{t}-\textsc{Cov}_{t}=\frac{4}{\sigma^{2}}+\beta_{t}^{2}(d-c)(\textsc{Var}_{t-1}-\textsc{Cov}_{t-1}).\label{eq:varcovdiff}
\end{equation}

We now require two auxiliary lemmas.
\begin{lem}
\label{lem:limit_distr}Consider the Markov chain on $\{1,...,K\}$
with state transition matrix $p,$ with $p_{i,j}=\mathbb{P}[i\to j]=1/d$
if $j\in\Psi_{i},$ 0 otherwise. Suppose $(\Psi_{k})_{k}$ is symmetric
with $c\ge1$. Then $p_{i}^{\infty}:=\lim_{t\to\infty}(p^{t})_{i}\in[0,1]^{K}$
exists, and it is the same for all $1\le i\le K.$
\end{lem}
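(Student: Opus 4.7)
The plan is to apply a Dobrushin-style contraction argument exploiting the symmetry hypothesis. The key combinatorial input is that, because $|\Psi_i|=|\Psi_{i'}|=d$ and $|\Psi_i\cap\Psi_{i'}|=c$ for all $i\ne i'$, the overlap between any two rows of $p$ satisfies
\[
\sum_{j=1}^{K} \min\bigl(p_{i,j},\,p_{i',j}\bigr) \;=\; \frac{|\Psi_i\cap\Psi_{i'}|}{d} \;=\; \frac{c}{d} \;>\; 0,
\]
equivalently $\|p_{i,\cdot}-p_{i',\cdot}\|_{TV}\le 1-c/d$ for every pair of states. Since $c\ge 1$ this strict contraction constant $1-c/d<1$ will drive everything.

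First I would invoke the standard contraction lemma that follows from the maximal coupling: for any two probability distributions $\mu,\nu$ on $\{1,\dots,K\}$,
\[
\|\mu p - \nu p\|_{TV} \;\le\; (1-c/d)\,\|\mu-\nu\|_{TV},
\]
and iterating gives $\|\mu p^t-\nu p^t\|_{TV}\le (1-c/d)^t\|\mu-\nu\|_{TV}$. One can derive this either directly from the overlap bound or equivalently by coupling two copies of the chain started at $i$ and $i'$ so that at each step, with probability at least $c/d$, both move to a common element of $\Psi_i\cap\Psi_{i'}$, making the coalescence time stochastically dominated by a geometric random variable with success probability $c/d$.

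Next I would apply this inequality in two ways. Taking $\mu=e_i$ (the point mass at $i$) and $\nu=(p^s)_i$ yields $\|(p^t)_i-(p^{t+s})_i\|_{TV}\le (1-c/d)^t$ uniformly in $s\ge 0$, so $((p^t)_i)_{t\ge 0}$ is Cauchy in total-variation distance and therefore converges, by completeness of the probability simplex in $\mathbb{R}^K$, to a limit $p_i^{\infty}$. Taking instead $\mu=e_i$ and $\nu=e_{i'}$ yields $\|(p^t)_i-(p^t)_{i'}\|_{TV}\le (1-c/d)^t\to 0$, so $p_i^{\infty}=p_{i'}^{\infty}$ for every pair $i,i'$, giving the common limit claimed in the lemma.

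The only step that uses the hypotheses of the lemma is the overlap computation in the first display, so there is essentially no obstacle here beyond recognizing that symmetry with $c\ge 1$ immediately produces a uniform Dobrushin contraction coefficient; everything downstream is a routine invocation of the contraction lemma and Cauchy completeness.
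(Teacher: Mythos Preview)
Your proof is correct. Both you and the paper exploit the same combinatorial fact---that any two rows of $p$ share mass $c/d$, giving a contraction factor $(d-c)/d$---but you deploy it more efficiently. The paper splits the argument in two: for \emph{existence} of $p_i^\infty$ it invokes classical Markov-chain structure theory (decomposition into communication classes, then a separate argument that each closed class is aperiodic using $c\ge 1$), and only afterward, for \emph{equality} of the limits, does it run the contraction $\|p_i^\infty-p_j^\infty\|_{\max}\le((d-c)/d)^t$ on the already-established limits. Your Dobrushin approach handles both existence and equality in one stroke via the Cauchy estimate, avoiding any appeal to periodicity or communication-class decomposition. The paper's route has the modest advantage of being self-contained (it does not quote the Dobrushin contraction lemma), but your argument is shorter and more unified.
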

\begin{proof}
For existence of $p_{i}^{\infty},$ consider the decomposition of
the Markov chain into its communication classes, $C_{1},...,C_{L}\subseteq\{1,...,K\}$.
Without loss suppose the first $L^{'}$ communication classes are
closed and the rest are not.

We show that each closed communication class is aperiodic when $(\Psi_{k})_{k}$
is symmetric and $c,d\ge1$. Let $i\in C_{m}$ for $1\le m\le L^{'}.$
Let $\Psi_{i}=\{j_{1},...,j_{d}\}$. If $i\in\Psi_{i},$ then $i$'s
periodicity is 1. Otherwise, $\Psi_{i}\subseteq C_{m}$ since $C_{m}$
is closed, so for every $1\le h\le d$ there exists a cycle of some
length $Q_{h}$ starting at $i$, where the $h$-th such cycle is
$i\to j_{h}\to...\to i.$ Since $c\ge1,$ $i$ and $j_{1}$ share
a common neighbor, which must be $j_{h^{*}}$ for some $1\le h^{*}\le d.$
We can therefore construct a cycle of length $Q_{h^{*}}+1$ starting
at $i,$ $i\to j_{1}\to j_{h^{*}}\to...\to i$. Since cycle lengths
$Q_{h^{*}}$ and $Q_{h^{*}}+1$ are coprime, $i$'s periodicity is
1.

By standard results (see e.g., \citet{billingsley2013convergence})
there exist $\nu_{m}^{*},1\le m\le L^{'},$ so that $\lim_{t\to\infty}(p^{t})_{i}=\nu_{m}^{*}$
whenever $i\in C_{m}$. If $i\notin\cup_{1\le m\le L^{'}}C_{m}$,
then starting the process at $i,$ almost surely the process enters
one of the closed communication classes eventually. This shows $\lim_{t\to\infty}(p^{t})_{i}$
exists and is equal to $\sum_{m=1}^{L^{'}}q_{m}\nu_{m}^{*}$, where
$q_{m}$ is the probability that the process started at $i$ enters
$C_{m}$ before any other closed communication class.

To prove that $p_{i}^{\infty}$ is the same for all $i,$ we inductively
show that for all $i\ne j,$ $\parallel p_{i}^{\infty}-p_{j}^{\infty}\parallel_{\max}\le\left(\frac{d-c}{d}\right)^{t}$
for all $t\ge1.$ Since $c\ge1,$ this would show that in fact $p_{i}^{\infty}=p_{j}^{\infty}$
for all $i,j.$

For the base case of $t=1,$ enumerate $\Psi_{i}=\{n_{1},...,n_{c},n_{c+1},...,n_{d}\},$
$\Psi_{j}=\{n_{1},...,n_{c},n_{c+1}^{'},...,n_{d}^{'}\}$ where all
$n_{1},...,n_{d},n_{c+1}^{'},...,n_{d}^{'}\in\{1,...,K\}$ are distinct.
Then 
\[
p_{i}^{\infty}=\frac{1}{d}\left(\sum_{k=1}^{c}p_{n_{k}}^{\infty}\right)+\frac{1}{d}\left(\sum_{k=c+1}^{d}p_{n_{k}}^{\infty}\right),
\]
\[
p_{j}^{\infty}=\frac{1}{d}\left(\sum_{k=1}^{c}p_{n_{k}}^{\infty}\right)+\frac{1}{d}\left(\sum_{k=c+1}^{d}p_{n_{k}^{'}}^{\infty}\right),\text{ so}
\]
\begin{align*}
\parallel p_{i}^{\infty}-p_{j}^{\infty}\parallel_{\max} & \le\frac{1}{d}\sum_{k=c+1}^{d}\parallel p_{n_{k}}^{\infty}-p_{n_{k}^{'}}^{\infty}\parallel_{\max}\le\frac{d-c}{d}\cdot1
\end{align*}
 where the 1 comes from $\parallel x-y\parallel_{\max}\le1$ for any
two distributions $x,y$.

The inductive step just replaces the bound $\parallel x-y\parallel_{\max}\le1$
with 
\[
\parallel p_{n_{k}}^{\infty}-p_{n_{k}^{'}}^{\infty}\parallel_{\max}\le\left(\frac{d-c}{d}\right)^{t-1}
\]
 from the inductive hypothesis.
\end{proof}
\begin{lem}
\label{lem:beta_1/d}$\beta_{t}\rightarrow1/d$.
\end{lem}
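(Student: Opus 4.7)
The plan is to derive a closed-form expression for $\beta_t$ in terms of the conditional correlation $\rho_{t-1}=\textsc{Cov}_{t-1}/\textsc{Var}_{t-1}$, reduce the lemma to showing $\rho_{t-1}\to 1$, and then establish that via a one-dimensional recursion with geometric contraction and a vanishing driving term.

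First, I would combine Lemma \ref{lem:anon_beta} with the explicit formula for $\vec{\beta}_i$ in Proposition \ref{prop:linear}. The covariance matrix $\Sigma$ of any generation-$t$ agent's neighbors' log-actions has diagonal $\textsc{Var}_{t-1}$ and off-diagonals $\textsc{Cov}_{t-1}$, so $\mathbf{1}$ is an eigenvector with eigenvalue $\textsc{Var}_{t-1}+(d-1)\textsc{Cov}_{t-1}$. The signal-counting identity (Proposition \ref{prop:signal_counting}) gives $\mathbb{E}[\ell_j\mid\omega=1]=\tfrac{1}{2}\textsc{Var}_{t-1}$ uniformly for $j\in N(i)$, so the mean vector $\vec\mu$ is itself proportional to $\mathbf{1}$; hence $\vec{\beta}_i=2\vec\mu^{\top}\Sigma^{-1}$ collapses to the scalar
\[
\beta_t \;=\; \frac{\textsc{Var}_{t-1}}{\textsc{Var}_{t-1}+(d-1)\textsc{Cov}_{t-1}} \;=\; \frac{1}{1+(d-1)\rho_{t-1}}.
\]
The lemma thus reduces to showing $\rho_{t-1}\to 1$.

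Second, I would substitute $1+(d-1)\rho_{t-1}=1/\beta_t$ back into the recursions from the proof of Lemma \ref{lem:anon_beta}. The variance recursion simplifies to $\textsc{Var}_t=\tfrac{4}{\sigma^2}+d\beta_t\textsc{Var}_{t-1}$, while \eqref{eq:varcovdiff} is $\textsc{Var}_t-\textsc{Cov}_t=\tfrac{4}{\sigma^2}+\beta_t^2(d-c)(\textsc{Var}_{t-1}-\textsc{Cov}_{t-1})$. Dividing and setting $q_t:=1-\rho_t$ yields
\[
q_t \;=\; \frac{4/(\sigma^2\textsc{Var}_{t-1})+\beta_t^2(d-c)\,q_{t-1}}{4/(\sigma^2\textsc{Var}_{t-1})+d\beta_t}.
\]
A short induction on $t$ confirms $\rho_t\in[0,1]$: the Cauchy--Schwarz inequality gives the upper bound, and non-negativity follows because $\beta_s>0$ makes every $\ell_i$ a non-negative combination of past log-signals, so $\textsc{Cov}_t\ge 0$. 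Consequently $\beta_t\in[1/d,1]$ and $d\beta_t\ge 1$.

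Finally, the displayed recursion gives $q_t\le \varepsilon_t+r\,q_{t-1}$ with $r=\beta_t(d-c)/d\le (d-c)/d<1$ (crucially using $c\ge 1$) and $\varepsilon_t=O(1/\textsc{Var}_{t-1})$. Any generations network satisfies Condition (1) of Proposition \ref{prop:long_run_conditions}, so $r_t\to\infty$ and thus $\textsc{Var}_t=(4/\sigma^2)r_t\to\infty$, making $\varepsilon_t\to 0$. A standard Gronwall-style iteration (splitting the resulting geometric sum at a large fixed index) then yields $\limsup q_t\le \eta/(1-r)$ for every $\eta>0$, so $q_t\to 0$ and hence $\beta_t\to 1/d$. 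The main subtlety is ensuring that $\rho_t\in[0,1]$ is preserved along the whole trajectory, so that the bounds $d\beta_t\ge 1$ and $r<1$ remain valid throughout; once that is in place, the contraction argument is routine and the Markov-chain mixing estimate from Lemma \ref{lem:limit_distr} is not actually needed for this particular step.
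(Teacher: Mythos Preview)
Your argument is correct and takes a genuinely different route from the paper. Both approaches derive the same closed form $\beta_t=\textsc{Var}_{t-1}/(\textsc{Var}_{t-1}+(d-1)\textsc{Cov}_{t-1})$ and reduce the lemma to $\rho_t\to 1$. The paper then proves $\textsc{Var}_t/\textsc{Cov}_t\to 1$ by expressing the weights $w_{i,j}$ as path counts, interpreting these via the Markov chain of Lemma~\ref{lem:limit_distr}, and invoking the mixing estimate there to show that any two agents in the same generation assign nearly identical weights to signals from many generations back; the ratio of the sum of squares to the sum of cross products then tends to $1$. You instead stay with the scalar recursions, obtain the one-dimensional inequality $q_t\le \varepsilon_t+\tfrac{\beta_t(d-c)}{d}\,q_{t-1}$ with contraction rate bounded by $(d-c)/d<1$ and driving term $\varepsilon_t\to 0$, and finish with a standard Gronwall-type iteration. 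Your route is more elementary and entirely self-contained for this lemma: it bypasses Lemma~\ref{lem:limit_distr} (which is still needed elsewhere, e.g.\ in Proposition~\ref{prop:planner}). The paper's route, on the other hand, supplies a structural explanation of \emph{why} the correlation tends to $1$---mixing erases the dependence of long-range signal weights on one's position---which is useful intuition for the role of symmetry. One small remark: you can get $\textsc{Var}_t\to\infty$ even more directly from your own simplified recursion $\textsc{Var}_t=4/\sigma^2+d\beta_t\textsc{Var}_{t-1}\ge 4/\sigma^2+\textsc{Var}_{t-1}$, avoiding the detour through Proposition~\ref{prop:long_run_conditions}; also, be careful that your symbol $r$ for the contraction rate does not collide with the paper's $r_i$ for the signal count.
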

\begin{proof}
For $i$ in generation $t+1,$ $\ell_{i}=\lambda_{i}+\beta_{t+1}\sum_{j\in N(i)}\ell_{j}$,
so as in the proof of Lemma \ref{lem:anon_beta}, $\textsc{Var}[\ell_{i}\mid\omega=1]=\frac{4}{\sigma^{2}}+\beta_{t+1}^{2}(d\textsc{Var}_{t}+(d^{2}-d)\textsc{Cov}_{t})$.
Using the definition of  the signal-counting interpretation and Proposition
\ref{prop:signal_counting}, $\mathbb{E}[\ell_{j}\mid\omega=1]=\frac{1}{2}\textsc{Var}_{t}$
for each $j\in N(i),$ and so $\mathbb{E}[\ell_{i}\mid\omega=1]=\frac{2}{\sigma^{2}}+d\beta_{t+1}(\frac{1}{2}\textsc{Var}_{t})$.
By the same argument we also have $\textsc{Var}[\ell_{i}\mid\omega=1]=2\cdot\mathbb{E}[\ell_{i}\mid\omega=1],$
and this lets us solve out

\[
\beta_{t+1}=\frac{\textsc{Var}_{t}}{\textsc{Var}_{t}+(d-1)\textsc{Cov}_{t}}\geq\frac{1}{d}.
\]

It is therefore sufficient to show that $\textsc{Var}_{t}/\textsc{Cov}_{t}\rightarrow1$.
The weight $w_{i,i^{'}}$ that an agent $i$ in generation $t$ places
on the private signal of an agent $i^{'}$ in generation $t-\tau$
is equal to the product of $\prod_{j=1}^{\tau}\beta_{t+1-j}$ and
the number of paths from $i$ to $i^{'}$ in the network $M.$

We can compute the number of paths as follows. Consider a Markov chain
with states $\left\{ 1,\ldots,K\right\} $ and state transition probabilities
$\mathbb{P}[k_{1}\to k_{2}]=1/d$ if $k_{2}\in\Psi_{k_{1}},$ $\mathbb{P}[k_{1}\to k_{2}]=0$.
The number of paths from $i$ in generation $t$ to $j$ in generation
$t-\tau$ is equal to $d^{\tau}$ times the probability that the state
is $j$ after $\tau$ periods.

By Lemma \ref{lem:limit_distr}, there exists a stationary distribution
$\pi^{*}\in\mathbb{R}_{+}^{K}$ with $\sum_{k=1}^{K}\pi_{k}^{*}=1$
of the Markov chain. Given $\epsilon>0$, we can choose $\tau_{0}$
such that the number of paths from $i$ in generation $t$ to $j=(t-\tau-1)K+k$
in generation $t-\tau$ is in {[}$d^{\tau}(\pi_{k}^{*}-\epsilon),d^{\tau}(\pi_{k}^{*}+\epsilon)]$
for all $t$ and all $\tau\geq\tau_{0}$. (This number is zero for $k$ such that $\pi_k^* =0$.)

Fixing distinct agents $i$ and $i'$ in generation $t$:
\[
\textsc{Var}_{t}=\frac{4}{\sigma^{2}}+\frac{4}{\sigma^{2}}\sum_{\tau=2}^{t}\sum_{k=1}^{K}w_{i,(t-\tau)K+k}^{2}\text{ and }\textsc{Cov}_{t}=\frac{4}{\sigma^{2}}\sum_{\tau=2}^{t}\sum_{k=1}^{K}w_{i,(t-\tau)K+k}w{}_{i^{'},(t-\tau)K+k}.
\]

We want to show that
\[
\textsc{Var}_{t}/\textsc{Cov}_{t}=\frac{1+\sum_{\tau=2}^{t}\sum_{k=1}^{K}w_{i,(t-\tau)K+k}^{2}}{\sum_{\tau=2}^{t}\sum_{k=1}^{K}w_{i,(t-\tau)K+k}w{}_{i^{'},(t-\tau)K+k}}\rightarrow1.
\]
Take $\epsilon>0$ smaller than $\pi_{k}^{*}$ for all $k$ such that $\pi^*_k > 0$. For $\tau\geq\tau_{0}$ and any $k$ such that  $\pi^*_k > 0$,
we have 
\[
w_{i,(t-\tau)K+k}w{}_{i^{'},(t-\tau)K+k}\geq(d^{\tau}\prod_{j=1}^{\tau}\beta_{t+1-j})^{2}(\pi_{k}^{*}-\epsilon)^{2}\text{ and }w_{i,(t-\tau)K+k}^{2}\leq(d^{\tau}\prod_{j=1}^{\tau}\beta_{t+1-j})^{2}(\pi_{k}^{*}+\epsilon)^{2}
\]
The covariance grows at least linearly in $t$ since each $\beta\ge1/d$,
while the contribution from periods $t-\tau_{0}+1,\ldots,t$ is bounded
and therefore lower order. Thus,
\[
\limsup_{t\rightarrow\infty}\textsc{Var}_{t}/\textsc{Cov}_{t}\leq\limsup_{t\rightarrow\infty}\frac{\sum_{k=1}^{K}\sum_{\tau=\tau_{0}}^{t-1}(d^{\tau}\prod_{j=1}^{\tau}\beta_{t+1-j})^{2}(\pi_{k}^{*}+\epsilon)^{2}}{\sum_{k=1}^{K}\sum_{\tau=\tau_{0}}^{t-1}(d^{\tau}\prod_{j=1}^{\tau}\beta_{t+1-j})^{2}(\pi_{k}^{*}-\epsilon)^{2}}\le\max_{1\le k\le K}\frac{(\pi_{k}^{*}+\epsilon)^{2}}{(\pi_{k}^{*}-\epsilon)^{2}}.
\]
Since $\epsilon$ is arbitrary, this completes the proof of the lemma.
\end{proof}
We return to the proof of Theorem \ref{thm:efficiency_dc}. Fix small
$\epsilon>0$. By Lemma \ref{lem:beta_1/d}, we can choose $T$ such
that $\beta_{t}\leq\frac{1+\epsilon}{d}$ for all $t\geq T$. Therefore,
$\beta_{t}^{2}(d-c)\leq\frac{(1+\epsilon)^{2}}{d^{2}}(d-c)$ for $t\ge T$.
Consider the contraction map $\varphi(x)=\frac{4}{\sigma^{2}}+\frac{(1+\epsilon)^{2}}{d^{2}}(d-c)x$.
Iterating Equation (\ref{eq:varcovdiff}) starting with $t=T$, we
find that $\textsc{Var}_{t}-\textsc{Cov}_{t}\le\varphi^{(t-T)}(\textsc{Var}_{T}-\textsc{Cov}_{T}),$
so this shows

\[
\limsup_{t\to\infty}\left(\textsc{Var}_{t}-\textsc{Cov}_{t}\right)\le\frac{4}{\sigma^{2}}\cdot\frac{d^{2}}{d^{2}-(1+\epsilon)^{2}d+(1+\epsilon)^{2}c}
\]
 where the RHS is the fixed point of $\varphi.$ Since this holds
for all small $\epsilon>0,$we get $\limsup_{t\to\infty}(\textsc{Var}_{t}-\textsc{Cov}_{t})\le\frac{4}{\sigma^{2}}\frac{d^{2}}{d^{2}-d+c}$.

At the same time, $\beta_{t}\ge\frac{1}{d}$ for all $t.$ Consider
the contraction map $\varphi(x)=\frac{4}{\sigma^{2}}+\frac{1}{d^{2}}(d-c)x$.
Iterating Equation (\ref{eq:varcovdiff}) starting with $t=1$, we
find that $\textsc{Var}_{t}-\textsc{Cov}_{t}\ge\varphi^{(t-1)}(\textsc{Var}_{1}-\textsc{Cov}_{1}),$
so this shows 
\[
\liminf_{t\to\infty}\left(\textsc{Var}_{t}-\textsc{Cov}_{t}\right)\ge\frac{4}{\sigma^{2}}\cdot\frac{d^{2}}{d^{2}-d+c}
\]
 where the RHS is the fixed point of $\varphi.$ Combining with the
result before, we get $\lim_{t\to\infty}(\textsc{Var}_{t}-\textsc{Cov}_{t})=\frac{4}{\sigma^{2}}\cdot\frac{d^{2}}{d^{2}-d+c}.$

As in the proof of Lemma \ref{lem:beta_1/d}, for $i$ in generation
$t+1$, $\mathbb{E}[\ell_{i}\mid\omega=1]=\frac{2}{\sigma^{2}}+d\beta_{t+1}(\frac{1}{2}\textsc{Var}_{t})$.
Using the definition of signal-counting interpretation and Proposition
\ref{prop:signal_counting}, we have $\textsc{Var}_{t+1}=2\cdot\mathbb{E}[\ell_{i}\mid\omega=1]=2(\beta_{t+1}d(\textsc{Var}_{t}/2)+2/\sigma^{2})$,
so

\begin{eqnarray*}
\textsc{Var}_{t+1}-\textsc{Var}_{t} & = & (\beta_{t+1}d-1)\textsc{Var}_{t}+\frac{4}{\sigma^{2}}\\
 & = & \left(\frac{d\textsc{Var}_{t}}{\textsc{Var}_{t}+(d-1)\textsc{Cov}_{t}}-1\right)\textsc{Var}_{t}+\frac{4}{\sigma^{2}}\\
 & = & \left(\frac{d\textsc{Var}_{t}}{d\textsc{Var}_{t}-(d-1)(\textsc{Var}_{t}-\textsc{Cov}_{t})}-1\right)\textsc{Var}_{t}+\frac{4}{\sigma^{2}}
\end{eqnarray*}

Using $\lim_{t\to\infty}(\textsc{Var}_{t}-\textsc{Cov}_{t})=\frac{4}{\sigma^{2}}\cdot\frac{d^{2}}{d^{2}-d+c}$,
we conclude 
\begin{align*}
\lim_{t\to\infty}\left(\textsc{Var}_{t+1}-\textsc{Var}_{t}\right) & =\lim_{t\to\infty}\left(\frac{\textsc{Var}_{t}}{\textsc{Var}_{t}-\frac{4}{\sigma^{2}}\frac{d^{2}-d}{d^{2}-d+c}}-1\right)\textsc{Var}_{t}+\frac{4}{\sigma^{2}}.\\
 & =\lim_{t\to\infty}\left(\frac{4}{\sigma^{2}}\frac{d^{2}-d}{d^{2}-d+c}\cdot\frac{\textsc{Var}_{t}}{\textsc{Var}_{t}-\frac{4}{\sigma^{2}}\frac{d^{2}-d}{d^{2}-d+c}}\right)+\frac{4}{\sigma^{2}}
\end{align*}
 Since $\textsc{Var}_{t}\rightarrow\infty$, the asymptotic increase
in conditional variance across successive generations is $\lim_{t\to\infty}\left(\textsc{Var}_{t+1}-\textsc{Var}_{t}\right)=\frac{4}{\sigma^{2}}\left(\frac{d^{2}-d}{d^{2}-d+c}+1\right)$.
Since agent $i$ is in generation $\left\lfloor i/K\right\rfloor $,
we therefore have $r_{i}=\left(1+\frac{d^{2}-d}{d^{2}-d+c}\right)\frac{i}{K}+o(i).$
So $\lim_{i\to\infty}(r_{i}/i)=\left(1+\frac{d^{2}-d}{d^{2}-d+c}\right)\frac{1}{K}.$
\end{proof}


\subsection{Proof of Corollary \ref{cor:cancellation}}
\begin{proof}
When $d\ge2$ and $c<d,$ the collection of symmetric observation
sets with these parameters correspond to the collection of symmetric
balanced incomplete block designs by Theorem 2.2 from Chapter 8 of
\citet{ryser1963combinatorial}. If there exists at least one symmetric
network with parameters $(d,c,K)$ under the previous inequalities,
then $K=\frac{d^{2}-d+c}{c}$ by Equation (3.17) from Chapter 8 of
\citet{ryser1963combinatorial}.

Applying this result to the expression for aggregative efficiency
from our Theorem \ref{thm:efficiency_dc}, $\lim_{i\to\infty}(r_{i}/i)=\left(1+\frac{d^{2}-d}{d^{2}-d+c}\right)\frac{1}{K}=\left(2-\frac{c}{d^{2}-d+c}\right)\frac{1}{K}=(2-\frac{1}{K})\cdot\frac{1}{K}$.
\end{proof}

\subsection{Proof of Theorem \ref{thm:L_gen}}
Without loss of generality, suppose $\sigma^{2}=4$ so that $r_{i}=\text{Var}(\ell_{i}\mid\omega=1)$
and $\mathbb{E}[\ell_{i}\mid\omega=1]=\frac{1}{2}r_{i}.$\footnote{Some calculations in this proof were produced with the assistance of GPT-5.2 Pro.} For convenience,
we also write $(t,k)$ to refer to generation $t$ agent in position
$k\in\{1,...,K\}.$ Let $S_{t}:=\sum_{j=1}^{K}\ell_{t,j}$ be the
sum of log-actions in generation $t$. 

By Proposition \ref{prop:linear} and using the fact that all observed log-actions
in the same generation are exchangeable, for every $t\ge L+1$ we
must have coefficients $\beta_{t,1},...,\beta_{t,L}$ such that for
every $k$, 
\[
\ell_{t,k}=\lambda_{t,k}+\sum_{m=1}^{L}\beta_{t,m}S_{t-m}.
\]
 We define the aggregate weights to be $b_{t,m}:=K\beta_{t,m}$. Let
$r_{t}:=\text{Var}(\ell_{t,k}\mid\omega=1)$ (which is the same for
all choices of $k$) and let $u_{t}:=\text{Cov}(\ell_{t,k},\ell_{t,k'}\mid\omega=1)$
for any $k\ne k'$. We note that since $\lambda_{t,k}$ is conditionally
independent of all log-actions from previous generations, $r_{t}=1+u_{t}.$
We also write $A_{t}:=\text{Cov}(\ell_{t,1},S_{t}\mid\omega=1)=r_{t}+(K-1)u_{t}=Kr_{t}-(K-1).$ 

The next lemma shows that the covariance between any agent's log-action
and any social neighbor's log-action is equal to the conditional variance
of that neighbor's log-action. 
\begin{lem}
\label{lem:social_cov_var_match} For any $t$ and any predecessor
$(s,j)$ observed by generation $t$ agents, we have $\text{Cov}(\ell_{t,k},\ell_{s,j}\mid\omega=1)=\text{Var}(\ell_{s,j}\mid\omega=1)=r_{s}$.
Therefore, we also have $\text{Cov}(\ell_{t,k},S_{s}\mid\omega=1)=Kr_{s}$. 
\end{lem}
\begin{proof}
Let $X$ be the vector containing the log-actions observed by generation
$t.$ We know from Proposition \ref{prop:linear} that $\text{Cov}(X\mid\omega=1)\beta=2\mathbb{E}[X\mid\omega=1]$,
where $\beta$ is the vector of rational weights. Consider row $h$
of this vector equation. We have $\text{Cov}(\beta^{T}X,X_{h}\mid\omega=1)=(\text{Cov}(X\mid\omega=1)\beta)_{h}$
by linearity of covariance, which must equal row $h$ of the right-hand
side which is $\text{Var}(X_{h}\mid\omega=1).$ At the same time,
$\text{Cov}(\ell_{t,k},X_{h}\mid\omega=1)=\text{Cov}(\beta^{T}X,X_{h}\mid\omega=1)$
since $\ell_{t,k}-\beta^{T}X=\lambda_{t,k}$, which is conditionally
independent of $X_{h}$. This shows $\text{Cov}(\ell_{t,k},\ell_{s,j}\mid\omega=1)=\text{Var}(\ell_{s,j}\mid\omega=1)$
for any observed $(s,j)$, and we know that $\text{Var}(\ell_{s,j}\mid\omega=1)=r_{s}$
by our normalization. Finally, $\text{Cov}(\ell_{t,k},S_{s}\mid\omega=1)=Kr_{s}$
obtains by summing $\text{Cov}(\ell_{t,k},\ell_{s,j}\mid\omega=1)=r_{s}$
over $1\le j\le K.$ 
\end{proof}
Let $t\ge L+1.$ Consider the $L$-vector of sums of log-actions in
the $L$ observed generations, $S_{(t-1):(t-L)}:=(S_{t-1},S_{t-2},...,S_{t-L})$.
Let $\Sigma_{t}:=\text{Cov}(S_{(t-1):(t-L)}\mid\omega=1)$ be its
conditional covariance matrix. By Lemma \ref{lem:social_cov_var_match},
for each $m=1,...,L,$ $\text{Cov}(\ell_{t,1},S_{t-m}\mid\omega=1)=Kr_{t-m}.$
This is also equal to $\text{Cov}(\sum_{m=1}^{L}\beta_{t,m}S_{t-m},S_{t-m}),$
since $\lambda_{t,1}$ is conditionally independent of predecessors'
log-actions. This establishes that $\Sigma_{t}\cdot(\beta_{t,1},...,\beta_{t,L})^{T}=(Kr_{t-1},...,Kr_{t-L})^{T}.$
Equivalently, in terms of $b_{t,m}=K\beta_{t,m}$, $\Sigma_{t}\cdot(b_{t,1},...,b_{t,L})^{T}=K(Kr_{t-1},...,Kr_{t-L})^{T}$. 

Viewing this as a linear system in the $L$ variables $b_{t,1},...,b_{t,L}$,
we solve for $b_{t,1}$ using the Schur complement. Partition $\Sigma_{t}$
as 
\[
\Sigma_{t}=\left(\begin{array}{cc}
\text{\text{Var}}(S_{t-1}\mid\omega=1) & \text{Cov}(S_{t-1},S_{(t-2):(t-L)}\mid\omega=1)\\
\text{Cov}(S_{(t-2):(t-L)},S_{t-1}\mid\omega=1) & \text{Cov}(S_{(t-2):(t-L)}\mid\omega=1)
\end{array}\right).
\]
Applying the Schur complement and the formula for the conditional
variance among a family of jointly normal random variables, we can
express $b_{t,1}$ in terms of the conditional variance of $S_{t-1}$
given $S_{t-2},...,S_{t-L}$. 
\begin{lem}
\label{lem:b_and_var}Let $H_{t-1}=\frac{1}{K}\text{Var}(S_{t-1}\mid S_{t-2},...,S_{t-L},\omega=1)$.
Then, $b_{t,1}=1+\frac{K-1}{H_{t-1}}.$ 
\end{lem}
\begin{proof}
In solving for $b_{t,1}$ in the linear system $\Sigma_{t}\cdot(b_{t,1},...,b_{t,L})^{T}=K(Kr_{t-1},...,Kr_{t-L})^{T},$
we write $\Sigma_{t}=\left(\begin{array}{cc}
\Sigma_{11} & \Sigma_{12}\\
\Sigma_{21} & \Sigma_{22}
\end{array}\right)$ where $\Sigma_{1,1}$ is $1\times1$ and $\Sigma_{2,2}$ is $(L-1)\times(L-1)$.
The standard formula is 
\[
(\Sigma_{11}-\Sigma_{12}\Sigma_{22}^{-1}\Sigma_{21})b_{t,1}=K^{2}r_{t-1}-(\Sigma_{12}\Sigma_{22}^{-1})K(Kr_{t-2},...,Kr_{t-L})^{T}.
\]

On the right-hand side side, we recall the identities $\text{Var}(S_{t-1}\mid\omega=1)=\text{Cov}(\sum_{k=1}^{K}\ell_{t-1,k},S_{t-1}\mid\omega=1)=KA_{t-1}=K(Kr_{t-1}-(K-1))$,
and $\text{Cov}(S_{t-1},S_{t-m}\mid\omega=1)=K\text{Cov}(\ell_{t-1,1},S_{t-m}\mid\omega=1)=K\cdot(Kr_{t-m})$.
So, we can write $K^{2}r_{t-1}=\text{Var}(S_{t-1}\mid\omega=1)+K(K-1)=\Sigma_{11}+K(K-1).$
We can also write $K(Kr_{t-2},...,Kr_{t-L})^{T}=\text{Cov}(S_{(t-2):(t-L)},S_{t-1}\mid\omega=1)=\Sigma_{21}$.
Therefore we have 
\[
(\Sigma_{11}-\Sigma_{12}\Sigma_{22}^{-1}\Sigma_{21})b_{t,1}=K(K-1)+(\Sigma_{11}-\Sigma_{12}\Sigma_{22}^{-1}\Sigma_{21}).
\]

Dividing through by $(\Sigma_{11}-\Sigma_{12}\Sigma_{22}^{-1}\Sigma_{21}),$
we get $b_{t,1}=1+\frac{K(K-1)}{\Sigma_{11}-\Sigma_{12}\Sigma_{22}^{-1}\Sigma_{21}}$. 

For the Schur complement $\Sigma_{11}-\Sigma_{12}\Sigma_{22}^{-1}\Sigma_{21},$
it is equal to 
\[
\text{\text{Var}}(S_{t-1}\mid\omega=1)-\text{Cov}(S_{t-1},S_{(t-2):(t-L)}\mid\omega=1)\text{Cov}(S_{(t-2):(t-L)}\mid\omega=1)^{-1}\text{Cov}(S_{(t-2):(t-L)},S_{t-1}\mid\omega=1).
\]
By joint normality of $(S_{t-1},...,S_{t-L})$ when $\omega=1$ \citep[pp. 116-117]{eaton1983multivariate}, this
is equal to $\text{Var}(S_{t-1}\mid S_{t-2},...,S_{t-L},\omega=1)$.
Therefore, if we let $H_{t-1}=\frac{1}{K}\text{Var}(S_{t-1}\mid S_{t-2},...,S_{t-L},\omega=1),$
then $b_{t,1}=1+\frac{K(K-1)}{\Sigma_{11}-\Sigma_{12}\Sigma_{22}^{-1}\Sigma_{21}}=1+\frac{K-1}{H_{t-1}}.$ 
\end{proof}
Because $S_{t-1}$ contains generation $t-1$'s private signals, we
get $\text{Var}(S_{t-1}\mid S_{t-2},...,S_{t-L},\omega=1)\ge\text{Var}(\sum_{j=1}^{K}\lambda_{t-1,j}\mid\omega=1)=K$.
So, $H_{t-1}\ge1$ and $1\le b_{t,1}\le K$. 

Now let $D_{t}:=A_{t}-Kr_{t-1}$. Using the fact that $A_{t}=Kr_{t}-(K-1),$
we may rewrite $D_{t}=K(r_{t}-r_{t-1})-(K-1).$ So $D_{t}$ is, up
to a linear transformation, the number of signals aggregated in generation
$t.$ 
\begin{lem}
\label{lem:D_and_b}$D_{t}=1+(K-1)b_{t,1}$. 
\end{lem}
\begin{proof}
Given $D_{t}=K(r_{t}-r_{t-1})-(K-1)$, it is enough to show that $r_{t}-r_{t-1}=1+\frac{K-1}{K}b_{t,1}$. 

From rational log-actions, $\ell_{t,k}=\lambda_{t,k}+\sum_{m=1}^{L}\beta_{t,m}S_{t-m}$.
Take conditional expectation of both sides and using the signal-counting
interpretation, we get 
\begin{equation}
r_{t}=1+\sum_{m=1}^{L}b_{t,m}r_{t-m}.\label{eq:way1}
\end{equation}

Using the fact that $\lambda_{t,k}$ is conditionally independent
of $S_{t-1},$ we get 

\begin{align*}
\text{Cov}(\ell_{t,1},S_{t-1}\mid\omega=1) & =\sum_{m=1}^{L}\text{Cov}(\beta_{t,m}S_{t-m},S_{t-1}\mid\omega=1)\\
 & =\sum_{m=1}^{L}\beta_{t,m}\text{Cov}(S_{t-m},S_{t-1}\mid\omega=1)\\
 & =\beta_{t,1}\text{Var}(S_{t-1}\mid\omega=1)+\sum_{m=2}^{L}\beta_{t,m}\text{Cov}(S_{t-m},S_{t-1}\mid\omega=1)\\
 & =\beta_{t,1}KA_{t-1}+\sum_{m=2}^{L}\beta_{t,m}K^{2}r_{t-m}\\
 & =\frac{b_{t,1}}{K}(K^{2}r_{t-1}-K(K-1))+\sum_{m=2}^{L}\frac{b_{t,m}}{K}K^{2}r_{t-m}\\
 & =b_{t,1}(Kr_{t-1}-(K-1))+K\sum_{m=2}^{L}b_{t,m}r_{t-m}
\end{align*}

But alternatively, we can also use the identity from Lemma \ref{lem:social_cov_var_match}
to write $\text{Cov}(\ell_{t,1},S_{t-1}\mid\omega=1)=Kr_{t-1}$. 

Therefore, $b_{t,1}(Kr_{t-1}-(K-1))+K\sum_{m=2}^{L}b_{t,m}r_{t-m}=Kr_{t-1}$,
which can be rearranged to give 
\begin{equation}
r_{t-1}+\frac{K-1}{K}b_{t,1}=\sum_{m=1}^{L}b_{t,m}r_{t-m}.\label{eq:way2}
\end{equation}

Subtract Equation (\ref{eq:way2}) from Equation (\ref{eq:way1})
to get $r_{t}-r_{t-1}-\frac{K-1}{K}b_{t,1}=1\iff r_{t}-r_{t-1}=1+\frac{K-1}{K}b_{t,1}$
as desired. 
\end{proof}
Since we know $1\le b_{t,1}\le K,$ Lemma \ref{lem:D_and_b} implies
the bounds $K\le D_{t}\le1+(K-1)K$ for $t\ge L+1.$ 

Next, we seek to write $H_{t-1}$ as a continued fraction involving
$D_{t-1},D_{t-2},...,D_{t-(L-1)}$. Thus, by Lemma \ref{lem:b_and_var}
and Lemma \ref{lem:D_and_b}, we also relate $D_{t}$ to such a continued
fraction, towards finding a recursion for $D$. 

Define $Y_{t,1}:=S_{t-1}-S_{t-2}$, $Y_{t,2}:=S_{t-2}-S_{t-3},$...,
$Y_{t,L-1}:=S_{t-L+1}-S_{t-L}$, $Y_{t,L}:=S_{t-L}$. This is an invertible
linear change of variables, so conditioning on $(S_{t-2},...,S_{t-L})$
is the same as conditioning on $(Y_{t,2},...,Y_{t,L})$ and we have
$S_{t-1}=Y_{t,1}+S_{t-2}$. So, $\text{Var}(S_{t-1}\mid S_{t-2},...,S_{t-L},\omega=1)=\text{Var}(Y_{t,1}\mid Y_{t,2},...,Y_{t,L},\omega=1)$.
Using Lemma \ref{lem:social_cov_var_match} and the identities $\text{Var}(S_{s}\mid\omega=1)=KA_{s}$
and $\text{Cov}(S_{s},S_{s-1}\mid\omega=1)=K^{2}r_{s-1}=K(A_{s-1}+(K-1))$,
we can find an expression for the covariance matrix for the vector
of differences $Y_{t}$. The key simplification for the analysis is
that this covariance matrix is very sparse: for any $L$ and any $t,$
if $|i-j|>1$ then $Y_{t,i}$ and $Y_{t,j}$ are conditionally uncorrelated!
To see why, consider $\text{Cov}(Y_{t,1},Y_{t,3}\mid\omega=1)$ (for
$L\ge4$) and note this is $\text{Cov}(S_{t-1}-S_{t-2},S_{t-3}-S_{t-4}\mid\omega=1).$
Even though the $S$ terms are correlated with each other, we have
the cancellation $\text{Cov}(S_{t-1},S_{t-3}-S_{t-4}\mid\omega=1)=\text{Cov}(S_{t-2},S_{t-3}-S_{t-4}\mid\omega=1)$
thanks to Lemma \ref{lem:social_cov_var_match}, as both covariances
are governed by the conditional variances of $S_{t-3}$ and $S_{t-4}$
(since both generations $t-1$ and $t-2$ observe generations $t-3$
and $t-4$). So the difference $S_{t-1}-S_{t-2}$ is exactly uncorrelated
with $S_{t-3}-S_{t-4}.$ 
\begin{lem}
\label{Gt} Let $G_{t}:=\frac{1}{K}\text{Cov}(Y_{t,1},...,Y_{t,L}\mid\omega=1)$.
Then, $G_{t}$ is tri-diagonal with 

\[
G_{t}=\left(\begin{array}{ccccc}
D_{t-1}-(K-1) & K-1 & 0 & \cdots & 0\\
K-1 & D_{t-2}-(K-1) & K-1 & \ddots & \vdots\\
0 & K-1 & D_{t-3}-(K-1) & \ddots & 0\\
\vdots & \ddots & \ddots & \ddots & K-1\\
0 & \cdots & 0 & K-1 & A_{t-L}
\end{array}\right).
\]
Moreover, $H_{t-1}$ is the Schur complement of the lower-right $(L-1)\times(L-1)$
block of $G_{t}$. 
\end{lem}
\begin{proof}
Apply Lemma \ref{lem:social_cov_var_match} to deduce that for $1\le m<m'\le L,$
$\text{Cov}(S_{t-m},S_{t-m'}\mid\omega=1)=K^{2}r_{t-m'}$. 

For the diagonal entries, $\text{Cov}(Y_{t,m},Y_{t,m}\mid\omega=1)$
for $m=1,...,L-1$ gives 
{\footnotesize
\begin{align*}
\text{Cov}(S_{t-m}-S_{t-m-1},S_{t-m}-S_{t-m-1}\mid\omega=1)= & \text{Var}(S_{t-m}\mid\omega=1)+\text{Var}(S_{t-m-1}\mid\omega=1)-2\text{Cov}(S_{t-m},S_{t-m-1}\mid\omega=1)\\
= & KA_{t-m}+KA_{t-m-1}-2K^{2}r_{t-m-1}\\
= & [K^{2}r_{t-m}-K(K-1)]+[K^{2}r_{t-m-1}-K(K-1)]-2K^{2}r_{t-m-1}\\
= & K[K(r_{t-m}-r_{t-m-1})-(K-1)-(K-1)]\\
= & K[D_{t-m}-(K-1)]
\end{align*}
}
And since $Y_{t,L}:=S_{t-L},$ we simply have $\text{Var}(Y_{t,L}\mid\omega=1)=\text{Var}(S_{t-L}\mid\omega=1)=KA_{t-L}$. 

For the $(i,j)$-th entry where $j=i+1$ (the other case is symmetric),
\begin{align*}
\text{Cov}(Y_{t,i},Y_{t,i+1}\mid\omega=1) & =\text{Cov}(S_{t-i}-S_{t-(i+1)},S_{t-(i+1)}-S_{t-(i+2)}\mid\omega=1)\\
 & =K^{2}[r_{t-(i+1)}-r_{t-(i+2)}+r_{t-(i+2)}]-\text{Var}(S_{t-(i+1)}\mid\omega=1)\\
 & =K^{2}r_{t-(i+1)}-(K^{2}r_{t-(i+1)}-K(K-1))\\
 & =K(K-1)
\end{align*}
And finally for $(i,j)$-th entry where $|i-j|>1,$ note all indices
$i,i+1,j,j+1$ are distinct, so 
\begin{align*}
\text{Cov}(Y_{t,i},Y_{t,j}\mid\omega=1) & =\text{Cov}(S_{t-i}-S_{t-(i+1)},S_{t-j}-S_{t-(j+1)}\mid\omega=1)\\
 & =K^{2}[r_{t-j}-r_{t-(j+1)}-r_{t-j}+r_{t-(j+1)}]\\
 & =0
\end{align*}

Finally, to compute the Schur complement, write $G_{t}=\left(\begin{array}{cc}
G_{t,11} & G_{t,12}\\
G_{t,21} & G_{t,22}
\end{array}\right)$ where $G_{t,11}$ is $1\times1$ and $G_{t,22}$ is $(L-1)\times(L-1).$
The Schur complement is $G_{t,11}-G_{t,12}G_{t,22}^{-1}G_{t,21}$,
but since $G_{t}$ is $\frac{1}{K}$ times the covariance matrix of
the jointly normal random variables $(Y_{t,m})_{m=1}^{L}$, we get
\[
\text{Var}(Y_{t,1}\mid Y_{t,2},...,Y_{t,L},\omega=1)=K(G_{t,11}-G_{t,12}G_{t,22}^{-1}G_{t,21}).
\]
But by definition $H_{t-1}=\frac{1}{K}\text{Var}(S_{t-1}\mid S_{t-2},...,S_{t-L},\omega=1)$
and we said before $\text{Var}(S_{t-1}\mid S_{t-2},...,S_{t-L},\omega=1)=\text{Var}(Y_{t,1}\mid Y_{t,2},...,Y_{t,L},\omega=1),$
so $H_{t-1}$ is the Schur complement of the lower-right $(L-1)\times(L-1)$
block of $G_{t}$. 
\end{proof}
By Theorem 3.1 of \cite{frommer2021matrix}, for a tri-diagonal matrix $\left(\begin{array}{ccccc}
a_{1} & b & 0 & \cdots & 0\\
b & a_{2} & b & \ddots & \vdots\\
0 & b & a_{3} & \ddots & 0\\
\vdots & \ddots & \ddots & \ddots & b\\
0 & \cdots & 0 & b & a_{L}
\end{array}\right)$, the Schur complement of the lower-right $(L-1)\times(L-1)$ block
can be written as a continued fraction, 
\[
a_{1}-\frac{b^{2}}{a_{2}-\frac{b^{2}}{a_{3}-\frac{b^{2}}{\ddots-\frac{b^{2}}{a_{L}}}}}
\]
In particular, let $\xi_{t,L}:=A_{t-L}$ and recursively define $\xi_{t,m}:=(D_{t-m}-(K-1))-\frac{(K-1)^{2}}{\xi_{t,m+1}}$
for $m=L-1,...,1$. Then for every $t,$ $H_{t-1}=\xi_{t,1}$. Using
Lemma \ref{lem:b_and_var} and Lemma \ref{lem:D_and_b}, we get $D_{t}=1+(K-1)b_{t,1}=1+(K-1)(1+\frac{K-1}{H_{t-1}})=K+\frac{(K-1)^{2}}{\xi_{t,1}}.$
We record now a useful fact about the partial fractions for later: 
\begin{lem}
\label{lem:xi_positivity}For every $m\in\{1,...,L-1\},$ $\xi_{t,m}=\frac{1}{K}\text{Var}(Y_{t,m}\mid Y_{t,m+1},...,Y_{t,L},\omega=1)\ge1.$ 
\end{lem}
\begin{proof}
This $\frac{1}{K}\text{Var}(Y_{t,m}\mid Y_{t,m+1},...,Y_{t,L},\omega=1)\ge1$
comes from the fact that $\text{Var}(Y_{t,m}\mid Y_{t,m+1},...,Y_{t,L},\omega=1)\ge\text{Var}(\sum_{j=1}^{K}\lambda_{t-m,j}\mid Y_{t,m+1},...,Y_{t,L},\omega=1)=K$. 

The recursive definition of $\xi_{t,m}$ is such that each $\xi_{t,m}$
is a Schur complement of a submatrix of $G_{t}$. Let $G_{t}^{(m)}$
refer to the submatrix containing only rows and columns $m,m+1,...,L$
of $G_{t}$. We may write $G_{t}^{(m)}=\left(\begin{array}{cc}
G_{t,11}^{(m)} & G_{t,12}^{(m)}\\
G_{t,21}^{(m)} & G_{t,22}^{(m)}
\end{array}\right)$ where $G_{t,11}^{(m)}$ is $1\times1$ and $G_{t,22}^{(m)}$ is $(L-m)\times(L-m).$
Then the Schur complement of the lower-right block $G_{t,22}^{(m)}$
(which is also $G_{t}^{(m+1)})$ is $G_{t,11}^{(m)}-G_{t,12}^{(m)}[G_{t}^{(m+1)}]^{-1}G_{t,21}^{(m)}$.
This matches the iterative definition of $\xi_{t,m}.$ Also, since
$G_{t}^{(m)}$ is $1/K$ times the covariance matrix of the conditionally
jointly normal random variables $(Y_{t,m},...,Y_{t,L}),$ the Schur
complement $\xi_{t,m}$ is also equal to $\frac{1}{K}\text{Var}(Y_{t,m}\mid Y_{t,m+1},...,Y_{t,L},\omega=1).$ 
\end{proof}
Assume $(D_{t})$ converges to a limit $D^{*}$. Consider the a version
of the continued fraction iteration with an initial condition of infinity.
For every $D$, define $\tilde{\xi}_{L}(D):=\infty$. Then, iteratively,
define $\tilde{\xi}_{m}(D):=(D-(K-1))-\frac{(K-1)^{2}}{\tilde{\xi}_{m+1}(D)}$
for $m=L-1,...,1.$ A necessary condition for $\lim_{\to\infty}D_{t}=D^{*}$
is that $D^{*}=K+\frac{(K-1)^{2}}{\tilde{\xi}_{1}(D^{*})}$ (the initial
condition $\tilde{\xi}_{L}=\infty$ comes from the fact that $a_{t-L}\to\infty$
as $t\to\infty$, and note $\xi_{t,m}\ge1$ for each $t$ so errors
don't blow up in the recursion). Let $F(\lambda):=K+\frac{(K-1)^{2}}{\tilde{\xi}_{1}(\lambda)}$,
so $D^{*}$ is a fixed point of $F.$ 

Now define the $L\times L$ tri-diagonal matrix
\[
N_{K,L}=\left(\begin{array}{ccccccc}
K & K-1 & 0 & 0 & 0 & \cdots & 0\\
K-1 & K-1 & K-1 & 0 & 0 & \ddots & \vdots\\
0 & K-1 & K-1 & K-1 & 0 & \ddots & 0\\
\vdots & \ddots & \ddots & \ddots & \ddots & \ddots & K-1\\
0 & \cdots & 0 & 0 & 0 & K-1 & K-1
\end{array}\right)
\]
 and let $\hat{N}_{K,L}$ represent its $(L-1)\times(L-1)$ lower-right
block. Let $\Delta_{L}(\lambda):=\det(\lambda I_{L\times L}-N_{K,L})$
and $\Delta_{L-1}(\lambda):=\det(\lambda I_{(L-1)\times(L-1)}-\hat{N}_{K,L})$
be the characteristic polynomials of $N_{K,L}$ and $\hat{N}_{K,L}$,
respectively. 
\begin{lem}
\label{lem:polynomial_and_fixed_point}If $\Delta_{L-1}(\lambda)\ne0,$
then $\Delta_{L}(\lambda)=\Delta_{L-1}(\lambda)\cdot(\lambda-F(\lambda))$. 
\end{lem}
\begin{proof}
We write 
\[
\lambda I_{L\times L}-N_{K,L}=\left(\begin{array}{ccccccc}
\lambda-K & -(K-1) & 0 & 0 & 0 & \cdots & 0\\
-(K-1) & \lambda-(K-1) & -(K-1) & 0 & 0 & \ddots & \vdots\\
0 & -(K-1) & \lambda-(K-1) & -(K-1) & 0 & \ddots & 0\\
\vdots & \ddots & \ddots & \ddots & \ddots & \ddots & -(K-1)\\
0 & \cdots & 0 & 0 & 0 & -(K-1) & \lambda-(K-1)
\end{array}\right).
\]
 The $(L-1)\times(L-1)$ block in the lower-right is $\lambda I_{(L-1)\times(L-1)}-\hat{N}_{K,L}$.
We use the block determinant formula to get: 
{\footnotesize
\[
\det(\lambda I_{L\times L}-N_{K,L})=\det(\lambda I_{(L-1)\times(L-1)}-\hat{N}_{K,L})\cdot\det((\lambda-K)-(-(K-1),0,...0)\cdot(\lambda I_{(L-1)\times(L-1)}-\hat{N}_{K,L})^{-1}\cdot(-(K-1),0,...0)^{T}))
\]
}
But we have 
\[
(-(K-1),0,...0)\cdot(\lambda I_{(L-1)\times(L-1)}-\hat{N}_{K,L})^{-1}\cdot(-(K-1),0,...0)^{T})=(K-1)^{2}\cdot[(\lambda I_{(L-1)\times(L-1)}-\hat{N}_{K,L})^{-1}]_{1,1}
\]

The matrix $\lambda I_{(L-1)\times(L-1)}-\hat{N}_{K,L}$ is tri-diagonal,
so again we have a continued-fraction representation of the (1,1)
entry of its inverse: 

{\scriptsize
\[
\left[\left(\begin{array}{ccccc}
\lambda-(K-1) & -(K-1) & 0 & \cdots & 0\\
-(K-1) & \lambda-(K-1) & -(K-1) & \ddots & \vdots\\
0 & -(K-1) & \lambda-(K-1) & \ddots & 0\\
\vdots & \ddots & \ddots & \ddots & -(K-1)\\
0 & \cdots & 0 & -(K-1) & \lambda-(K-1)
\end{array}\right)^{-1}\right]_{1,1}=\left(\lambda-(K-1)-\frac{(K-1)^{2}}{\lambda-(K-1)-\frac{(K-1)^{2}}{\ddots-\frac{(K-1)^{2}}{\lambda-(K-1)}}}\right)^{-1}
\]
 }where $\lambda-(K-1)$ appears $L-1$ times in the continued fraction.
But in the iteratively defined $\tilde{\xi}_{m}(\lambda)$ starting
with $\tilde{\xi}_{L}=\infty,$ we get $\tilde{\xi}_{L-1}(\lambda):=(\lambda-(K-1))$
and more generally $\tilde{\xi}_{m}(\lambda)$ is the same continued
fraction with $\lambda-(K-1)$ appearing $L-m$ times for $m=L-1,L-2,...,1.$
So the continued fraction above is $\tilde{\xi}_{1}(\lambda).$ 

Thus, we have $\Delta_{L}(\lambda)=\Delta_{L-1}(\lambda)\cdot((\lambda-K)-\frac{(K-1)^{2}}{\tilde{\xi}_{1}(\lambda)}).$
Since $F(\lambda):=K+\frac{(K-1)^{2}}{\tilde{\xi}_{1}(\lambda)}$,
we get $\lambda-F(\lambda)=\lambda-K-\frac{(K-1)^{2}}{\tilde{\xi}_{1}(\lambda)}$,
hence the claim $\Delta_{L}(\lambda)=\Delta_{L-1}(\lambda)\cdot(\lambda-F(\lambda)).$ 
\end{proof}
In particular, Lemma \ref{lem:polynomial_and_fixed_point} implies
that if $\Delta_{L-1}(D^{*})\ne0,$ then $\Delta_{L}(D^{*})=0$ (since
$D^{*}-F(D^{*})=0$), so $D^{*}$ must be an eigenvalue of $N_{K,L}$.
To complete the proof, we now rule out the case of $\Delta_{L-1}(D^{*})=0$
and show that $D^{*}$ cannot be any eigenvalue of $N_{K,L}$ other
than the largest one. 

First, we show $D^{*}I_{(L-1)\times(L-1)}-\hat{N}_{K,L}$ must be
positive definite. For $1\le m\le L-1$, write $\theta_{m}$ for the
determinant of the leading $m\times m$ submatrix of $D^{*}I_{(L-1)\times(L-1)}-\hat{N}_{K,L}$,
and define $\theta_{0}:=1.$ Let $\eta_{1}=D^{*}-(K-1)$ and for $2\le m\le L-1,$
let $\eta_{m}$ be the continued fraction $D^{*}-(K-1)-\frac{(K-1)^{2}}{D^{*}-(K-1)-\frac{(K-1)^{2}}{\ddots-\frac{(K-1)^{2}}{D^{*}-(K-1)}}}$
where $D^{*}-(K-1)$ appears $m$ times. By a property of tri-diagonal
matrices, $\eta_{m}=\frac{\theta_{m}}{\theta_{m-1}}$ for $1\le m\le L-1$.
By Sylvester's criterion, if $D^{*}I_{(L-1)\times(L-1)}-\hat{N}_{K,L}$
is not positive definite, then there is some smallest $m'\in\{1,...,L-1\}$
such that $\theta_{m'}\le0.$ Then, $\eta_{m'}\le0.$ But, in the
proof of Lemma \ref{lem:polynomial_and_fixed_point}, we showed that
$\eta_{m'}=\tilde{\xi}_{L-m'}(D^{*})$. But when $D_{t}\to D^{*}$,
$\tilde{\xi}_{L-m'}=\lim_{t\to\infty}\xi_{t,L-m'}$ and we know from
Lemma \ref{lem:xi_positivity} that $\xi_{t,L-m'}\ge1$ for every
$t.$ So the limit $\tilde{\xi}_{L-m'}$ cannot be weakly negative,
a contradiction. 

So, $D^{*}I_{(L-1)\times(L-1)}-\hat{N}_{K,L}$ must be positive definite,
which rules out $\Delta_{L-1}(D^{*})=0$ and implies $D^{*}$ is strictly
larger than the largest eigenvalue of $\hat{N}_{K,L}.$ But $\hat{N}_{K,L}$
is the lower-right block of $N_{K,L}$, so the eigenvalue interlacing
theorem implies that there is only one eigenvalue of $N_{K,L}$ that
satisfies this: namely, the largest eigenvalue of $N_{K,L}.$

To prove properties of $g(K,L),$ first note that the maximum row
sum of $N_{K,L}$ is $3K-3,$ so $\rho(K,L)\le3K-3.$ This implies
$\frac{\rho(K,L)+K-1}{K}\le\frac{3K-3+K-1}{K}=4-4/K<4.$

When $L=1,$ we get $g(K,L)=\frac{\rho(K,L)+K-1}{K}=\frac{2K-1}{K},$
which is strictly increasing in $K.$ Now consider the case of $L\ge2.$
Since $N_{K,L}$ is symmetric, its largest eigenvalue can be expressed
as $\max_{\parallel v\parallel\le1}v^{T}N_{K,L}v$.

For the case of going from $K=1$ to $K=2,$ we know $\rho(1,L)=1$
and $g(1,L)=1$. By taking $v=(1,0,...,)^{T}$, we also get $\rho(2,L)\ge(1,0,...,)\cdot N_{K,L}\cdot(1,0,...)^{T}=2$.
So $g(2,L)\ge3/2>g(1,L).$ 

For the case of going from $K$ to $K+1$ starting from $K\ge2,$
we first show that $\rho(K,L)$ is strictly increasing in $K$ and
convex in $K$. 
\begin{itemize}
\item Strictly increasing in $K$: Since $N_{K,L}$ is non-negative and
irreducible, by Perron--Frobenius it has an eigenvector $v^{*}$
associated with its largest eigenvalue such that $v_{i}^{*}>0$ for
every $i.$ Normalize $\parallel v^{*}\parallel=1.$ Then, $\rho(K,L)=(v^{*})^{T}N_{K,L}v^{*}$.
But now, we also have $\rho(K+1,L)=\max_{\parallel v\parallel\le1}v^{T}N_{K+1,L}v$
. Observe that $N_{K+1,L}=N_{K,L}+T_{L},$ where $T_{L}$ is the $L\times L$
tri-diagonal matrix with $1$ in the diagonal and off-diagonal entries
and 0 elsewhere. In the maximization problem for $\rho(K+1,L),$ let
$v=v^{*}$ and we find that $\rho(K+1,L)\ge(v^{*})^{T}N_{K,L}v^{*}+(v^{*})^{T}T_{L}v^{*}=\rho(K,L)+(v^{*})^{T}T_{L}v^{*}.$
Using the structure of $T_{L},$we have $(v^{*})^{T}T_{L}v^{*}=\sum_{i=1}^{L}(v_{i}^{*})^{2}+2\sum_{i=1}^{L-1}v_{i}^{*}v_{i+1}^{*}$.
But $v_{i}^{*}>0$ for every $i,$ so this sum is strictly positive.
This shows $\rho(K,L)$ is strictly increasing in $K$ for $K\ge2.$ 
\item Convex in $K$: Let $N_{K,L}=(K-1)T_{L}+E$ where $E$ is $L\times L$
matrix with 1 in the upper-left corner and 0 elsewhere. Then, we may
write $\rho(K,L)=\max_{\parallel v\parallel\le1}\left((K-1)v^{T}T_{L}v+(v_{1})^{2}\right)$.
For each fixed unit vector $v,$ the maximand is a linear function
of $(K-1),$ so $\rho(K,L)$ is convex in $K.$ 
\end{itemize}
By the convexity of $K\mapsto\rho(K,L)$, we deduce $K\mapsto\frac{\rho(K,L)-\rho(1,L)}{K-1}=\frac{\rho(K,L)-1}{K-1}$
is weakly increasing. We have $g(K,L)=\frac{\rho(K,L)+K-1}{K}=1+\frac{K-1}{K}\frac{\rho(K,L)-1}{K-1}$
for $K\ge2$. But $\frac{K-1}{K}$ is strictly increasing in $K,$
$\frac{\rho(K,L)-1}{K-1}$ is weakly increasing in $K$ and strictly
positive for $K\ge2$, as $\rho(K,L)>\rho(1,L)=1$ for $K\ge2$ since
$\rho$ is strictly increasing in $K.$ So $g(K,L)$ is strictly increasing
in $K\ge2$ for any $L\ge2$. 

To show that $g(K,L)$ is strictly increasing in $L$ for $K\ge2$,
let $v\in\mathbb{R}^{L}$ be a unit eigenvector of $N_{K,L}$ with
eigenvalue $\rho(K,L),$ where $v_{i}>0$ for every $i$ by the Perron-Frobenius
theorem. Then, $v^{T}N_{K,L}v=\rho(K,L).$ Now consider the unit vectors
in $\mathbb{R}^{L+1}$ parameterized by $\epsilon>0$, $\tilde{v}(\epsilon):=(v,\epsilon)/\sqrt{1+\epsilon^{2}}$.
Using the structure of $N_{K,L+1},$ we have 
\begin{align*}
\tilde{v}(\epsilon)^{T}N_{K,L+1}\tilde{v}(\epsilon) & =\frac{(v,\epsilon)^{T}N_{K,L+1}(v,\epsilon)}{1+\epsilon^{2}}\\
 & =\frac{v^{T}N_{K,L}v+2\epsilon(K-1)v_{L}+(K-1)\epsilon{}^{2}}{1+\epsilon^{2}}\\
 & =\frac{\rho(K,L)+2\epsilon(K-1)v_{L}+(K-1)\epsilon{}^{2}}{1+\epsilon^{2}}.
\end{align*}
Differentiating with respect to $\epsilon$ and evaluating at $\epsilon=0,$
we get: 

\[
\left[\frac{(2(K-1)v_{L}+2(K-1)\epsilon)(1+\epsilon^{2})-(\rho(K,L)+2\epsilon(K-1)v_{L}+(K-1)\epsilon{}^{2})\cdot(2\epsilon)}{(1+\epsilon^{2})^{2}}\right]_{\epsilon=0}=2(K-1)v_{L}
\]

This is strictly positive because $K\ge2$ and $v_{L}>0.$ This shows
that for some small enough $\epsilon>0,$ $\tilde{v}(\epsilon)^{T}N_{K,L+1}\tilde{v}(\epsilon)>\tilde{v}(0)N_{K,L+1}\tilde{v}(0)=\rho(K,L).$
So, $\rho(K,L+1)=\max_{\parallel v\parallel\le1}v^{T}N_{K,L+1}v>\rho(K,L).$ 




Finally, we show that $D_{t}$ converges. Form the sequence of windows
$W_{t}:=(D_{t},D_{t+1},...,D_{t+L-1})$ for $t\ge L+1.$ Since each
$D_{t}$ lives in the bounded interval $[K,1+(K-1)K]$ for $t\ge L+1,$
we may find a converging subsequence, say $W_{\tau(t)}\to(D_{0}^{*},D_{1}^{*},...,D_{L-1}^{*})$
as $t\to\infty$. To show convergence of $D_{t}$, it suffices to
show that the subsequence limit is unique at $(D_{0}^{*},...,D_{L-1}^{*})=(\rho(K,L),...,\rho(K,L)).$
But starting from the sequence $W_{\tau(t)},$ consider the window
sequence extended one period backwards, $W'_{\tau(t)}:=(D_{\tau(t)-1},D_{\tau(t)},D_{\tau(t)+1},...,D_{\tau(t)+L-1})$.
Some subsequence of $W'_{\tau(t)}$ converges, and it must converge
to $(D_{-1}^{*},D_{0}^{*},D_{1}^{*},...,D_{L-1}^{*})$. We now relate
the new limit $D_{-1}^{*}$ to $D_{0}^{*},...,D_{L-1}^{*}.$ Define
the tail iteration map $\Phi_{\infty}:[K,1+(K-1)K]^{L-1}\to[K,1+(K-1)K]$.
Let $\psi_{L-1}(x_{1},...,x_{L-1}):=x_{L-1}-(K-1)$ and for $m=L-2,...,1,$
let $\psi_{m}(x_{1},...,x_{L-1}):=(x_{m}-(K-1))-\frac{(K-1)^{2}}{\psi_{m+1}(x_{1},...,x_{L-1})}.$
Finally, $\Phi_{\infty}(x_{1},...,x_{L-1}):=K+\frac{(K-1)^{2}}{\psi_{1}(x_{1},...,x_{L-1})}$.
Since $A_{t-L}\to\infty$, $|D_{t+L}-\Phi_{\infty}(D_{t+L-1},...,D_{t+1})|\to0$
as $t\to\infty.$ Also note $\Phi_{\infty}$ is strictly decreasing
in each argument on the admissible domain where every $\psi_{m}$
is weakly larger than 1. This is because increasing $x_{i}$ does
not change any of the $\psi_{m}$ for $m>i$. It strictly increases
$\psi_{i}$ and iteratively, also strictly increases each of $\psi_{m}$
for $m<i.$ Finally, since $\Phi_{\infty}$ is strictly decreasing
in $\psi_{1},$ we get that the $\Phi_{\infty}$ value strictly decreases.
Therefore, $D_{-1}^{*}$ is the unique first argument $z$ that solves $\Phi_{\infty}(z,D_{0}^{*},...,D_{L-3}^{*})=D_{L-2}^{*}$,
which we will denote as $\Phi_{\infty}^{B}(D_{L-2}^{*};D_{0}^{*},...,D_{L-3}^{*}).$
But by repeating this argument, we can uniquely extend backwards the
original subsequence limit $(D_{0}^{*},...,D_{L-1}^{*})$ through
the tail iteration map as $(...,D_{-2}^{*},D_{-1}^{*},D_{0}^{*},...,D_{L-1}^{*})$,
where $D_{-m}^{*}:=\Phi_{\infty}^{B}(D_{-m+L-1}^{*};D_{-m+1}^{*},...,D_{-m+L-2}^{*})$
for $m\ge1.$ For every $m\ge1$, if we form an extended window sequence
$$(D_{\tau(t)-m},D_{\tau(t)-m+1},...,D_{\tau(t)},D_{\tau(t)+1},...,D_{\tau(t)+L-1})$$
starting with the $W_{\tau(t)}$ window sequence, then some subsequence
of the extended window sequence converges to $(D_{-m}^{*},D_{-m+1}^{*},...,D_{0}^{*},...,D_{L-1}^{*}).$
Therefore, $(D_{0}^{*},...,D_{L-1}^{*})$ relates to the limit behavior
of $\Phi_{\infty}$. Recall we have defined
before for real $x$, $\tilde{\xi}_{L-1}(x):=x-(K-1)$ and iteratively
$\tilde{\xi}_{m}(x):=(x-(K-1))-\frac{(K-1)^{2}}{\tilde{\xi}_{m+1}(x)}$
for $m=L-2,...,1$ and finally $F(x):=K+\frac{(K-1)^{2}}{\tilde{\xi}_{1}(x)}.$ Following Theorem 4 of \cite{kulenovic2006global}, the difference equation defined
by $\Phi_{\infty}$ (which is strictly decreasing in all arguments
and bounded) is globally attractive in the admissible region if $F$
does not have a non-trivial two-cycle, that is values $x,y$ such
that $x\ne y$ but $F(x)=y$ and $F(y)=x.$ This would imply that
$(D_{0}^{*},...,D_{L-1}^{*})=(\rho(K,L),...,\rho(K,L))$, since this
is a fixed point of $\Phi_{\infty}$. The next three steps rule a
non-trivial two-cycle of $F$.

It will be useful to abbreviate $a=K-1,$ $n=L-1,$ and $u=x-(K-1).$
Define the symmetric tridiagonal matrix 
\[
A_{x}=\begin{pmatrix}u & -a & 0 & \cdots & 0\\
-a & u & -a & \ddots & \vdots\\
0 & -a & u & \ddots & 0\\
\vdots & \ddots & \ddots & \ddots & -a\\
0 & \cdots & 0 & -a & u
\end{pmatrix}\in\mathbb{R}^{n\times n}.
\]

\noindent\textbf{Step 1: Some preliminary facts. }

When $\tilde{\xi}_{m}(x)>0$ for all $1\le m\le n$, we have $A_{x}$
is invertible and $(A_{x}^{-1})_{11}=\frac{1}{\tilde{\xi}_{1}(x)}.$ Consequently,
$F(x)=K+\frac{a^{2}}{\tilde{\xi}_{1}(x)}=K+a^{2}(A_{x}^{-1})_{11}.$
To see this, let $A_{x}^{(m)}$ be the trailing principal submatrix
of $A_{x}$ obtained by restricting to rows/columns $m,m+1,\dots,n$.
Thus $A_{x}^{(1)}=A_{x}$ and $A_{x}^{(n)}=[u]$. Partition $A_{x}^{(m)}$
as a $(1+(n-m))$ block matrix: 
\[
A_{x}^{(m)}=\begin{pmatrix}u & -ae_{1}^{\top}\\
-ae_{1} & A_{x}^{(m+1)}
\end{pmatrix},
\]
where $e_{1}^{\top}=(1,0,...0)$ with the appropriate dimension. Whenever
$A_{x}^{(m+1)}$ is invertible, the Schur complement formula yields
\[
\bigl((A_{x}^{(m)})^{-1}\bigr)_{11}=\frac{1}{u-a^{2}\bigl((A_{x}^{(m+1)})^{-1}\bigr)_{11}}.
\]
At $m=n$, we have $((A_{x}^{(n)})^{-1})_{11}=1/u=1/\tilde{\xi}_{n}(x)$
because $\tilde{\xi}_{n}(x)=\tilde{\xi}_{L-1}(x)=u$. Inductively,
if $((A_{x}^{(m+1)})^{-1})_{11}=1/\tilde{\xi}_{m+1}(x)$, then 
\[
((A_{x}^{(m)})^{-1})_{11}=\frac{1}{u-a^{2}/\tilde{\xi}_{m+1}(x)}=\frac{1}{\tilde{\xi}_{m}(x)}
\]
by the defining recursion of $\tilde{\xi}_{m}$. Taking $m=1$ gives
$(A_{x}^{-1})_{11}=1/\tilde{\xi}_{1}(x)$ and hence the formula for
$F$.

Also, if $\tilde{\xi}_{m}(x)>0$ for all $1\le m\le n$, then $A_{x}$
is positive definite (in particular invertible). For symmetric tridiagonal
matrices, Gaussian elimination produces a sequence of pivots given
by the same Schur complements. Here those pivots are exactly $\tilde{\xi}_{n}(x),\tilde{\xi}_{n-1}(x),\dots,\tilde{\xi}_{1}(x)$.
If all pivots are positive, then the factorization exists with positive
diagonal $D$, hence $A_{x}$ is positive definite. 

\noindent\textbf{Step 2: Monotonicity and convexity results for $F$. }
\begin{lem}
\label{lem:ximono} Fix $K,L$. On any interval where $\tilde{\xi}_{m+1}(x)>0$,
the function $\tilde{\xi}_{m}(x)$ is strictly increasing. In particular,
the set 
\[
\mathcal{D}:=\{x\in\mathbb{R}:\ \tilde{\xi}_{m}(x)\ge1\ \text{for all }1\le m\le n\}
\]
is an interval of the form $[x_{\min},\infty)$. 
\end{lem}
\begin{proof}
We proceed by backward induction from $m=n$. We have $\tilde{\xi}_{n}(x)=x-a$,
strictly increasing. Assume $\tilde{\xi}_{m+1}$ is strictly increasing
and positive on an interval. Then 
\[
\tilde{\xi}_{m}(x)=(x-a)-\frac{a^{2}}{\tilde{\xi}_{m+1}(x)}
\]
has derivative 
\[
\tilde{\xi}_{m}'(x)=1+\frac{a^{2}\,\tilde{\xi}_{m+1}'(x)}{\tilde{\xi}_{m+1}(x)^{2}}>0,
\]
so $\tilde{\xi}_{m}$ is strictly increasing there.

For the interval claim, note that each $\{x:\tilde{\xi}_{m}(x)\ge1\}$
is an interval. Their intersection is therefore an interval of the
form $[x_{\min},\infty)$. 
\end{proof}
Let $e_{1}=(1,0,\dots,0)^{\top}\in\mathbb{R}^{n}$.
\begin{lem}
\label{lem:convex} On the domain where $\tilde{\xi}_{m}(x)>0$ for
all $m$ (in particular on $\mathcal{D}$), the function $F$ is strictly
decreasing and strictly convex: 
\[
F'(x)=-a^{2}(A_{x}^{-2})_{11}<0,\qquad F''(x)=2a^{2}(A_{x}^{-3})_{11}>0.
\]
\end{lem}
\begin{proof}
Write $F(x)=K+a^{2}\,e_{1}^{\top}A_{x}^{-1}e_{1}.$ Since $\frac{d}{dx}A_{x}=I$,
the derivative-of-inverse identity gives $\frac{d}{dx}A_{x}^{-1}=-A_{x}^{-1}IA_{x}^{-1}=-A_{x}^{-2}.$
Hence $F'(x)=a^{2}\,e_{1}^{\top}(-A_{x}^{-2})e_{1}=-a^{2}(A_{x}^{-2})_{11}.$
If $\tilde{\xi}_{m}(x)>0$ for all $m$, $A_{x}\succ0$, so $A_{x}^{-2}\succ0$
and $(A_{x}^{-2})_{11}>0$, implying $F'(x)<0$.

Differentiate again: 
\[
\frac{d}{dx}A_{x}^{-2}=\frac{d}{dx}(A_{x}^{-1}A_{x}^{-1})=-A_{x}^{-2}A_{x}^{-1}-A_{x}^{-1}A_{x}^{-2}=-2A_{x}^{-3}.
\]
Therefore 
\[
F''(x)=-a^{2}\frac{d}{dx}(A_{x}^{-2})_{11}=2a^{2}(A_{x}^{-3})_{11}.
\]
Again $A_{x}\succ0$ implies $A_{x}^{-3}\succ0$ so $(A_{x}^{-3})_{11}>0$,
hence $F''(x)>0$. 
\end{proof}
Assume $F(x)=y$ and $F(y)=x$ with $x\ne y$. Without loss of generality
$x<y$. Then the secant slope is 
\[
\frac{F(y)-F(x)}{y-x}=\frac{x-y}{y-x}=-1.
\]
By Lemma \ref{lem:ximono}, if $x,y\in\mathcal{D}$ then the whole
interval $[x,y]\subseteq\mathcal{D}$, so Lemma \ref{lem:convex}
implies $F$ is convex and differentiable on $[x,y]$. A standard
convexity fact yields 
\begin{equation}
F'(x)\ \le\ \frac{F(y)-F(x)}{y-x}\ \le\ F'(y),\label{eq:secantslope}
\end{equation}
so any nontrivial $2$-cycle with $x<y$ must satisfy 
\begin{equation}
F'(x)\le-1\le F'(y).\label{eq:derivconstraint}
\end{equation}

We will combine this with a lower bound on $|F'|$.
\begin{lem}
[Cauchy-Schwarz slope bound] \label{lem:cs} Whenever $A_{x}\succ0$,
\begin{equation}
|F'(x)|\ge\frac{(F(x)-K)^{2}}{a^{2}}.\tag{\ensuremath{\star}}\label{eq:star}
\end{equation}
In particular, if $F(x)>2K-1$ then $F'(x)<-1$. 
\end{lem}
\begin{proof}
Since $A_{x}\succ0$, 
\[
(A_{x}^{-2})_{11}=e_{1}^{\top}A_{x}^{-2}e_{1}=\|A_{x}^{-1}e_{1}\|^{2}.
\]
By Cauchy--Schwarz, 
\[
\|A_{x}^{-1}e_{1}\|^{2}\ge(e_{1}^{\top}A_{x}^{-1}e_{1})^{2}=((A_{x}^{-1})_{11})^{2}.
\]
Thus 
\[
|F'(x)|=a^{2}(A_{x}^{-2})_{11}\ge a^{2}((A_{x}^{-1})_{11})^{2}.
\]
Using $F(x)-K=a^{2}(A_{x}^{-1})_{11}$ gives (\ref{eq:star}). If
$F(x)>2K-1$ then $F(x)-K>a$, so (\ref{eq:star}) implies $|F'(x)|>1$
and since $F'(x)<0$ we have $F'(x)<-1$. 

\noindent\textbf{Step 3: Case by case analysis of $K,L\ge2.$ }

We must now deal with the different cases of $K,L$ to show that $F$
does not have a non-trivial two cycle in the admissible domain. For
low values of $L$ we can explicitly compute $F\circ F$. For high
values of $L$ we rely on the convexity property of $F$ to derive
a contradiction using the values of $\tilde{\xi}_{m}$ defined later
in the iterative sequence. 

\noindent\textbf{Case (i): $L=2$. }

Here $n=1$ and $\tilde{\xi}_{1}(x)=x-a$, so $F(x)=K+\frac{a^{2}}{x-a}=a+1+\frac{a^{2}}{x-a}.$
A direct computation gives 
\[
F(F(x))=\frac{(a^{2}+a+1)x-a}{x+a^{2}-a}.
\]
Solving $F(F(x))=x$ yields $x^{2}-(2a+1)x+a=0.$ But solving $F(x)=x$
gives the same quadratic equation, hence every solution of $F(F(x))=x$
is a fixed point. Therefore there are no $2$-cycles with $x\ne y$
when $L=2$.

\noindent\textbf{Case (ii): $L=3$. }
\end{proof}
Now $n=2$. Write $u:=x-a$ (so $u>0$ in the admissible domain since
$\tilde{\xi}_{2}(x)=u\ge1$). Then 
\[
\tilde{\xi}_{2}(x)=u,\qquad\tilde{\xi}_{1}(x)=u-\frac{a^{2}}{u}=\frac{u^{2}-a^{2}}{u},
\]
and 
\[
F(x)=a+1+\frac{a^{2}u}{u^{2}-a^{2}}.
\]
Define $T(u):=F(x)-a$, so 
\[
T(u)=1+\frac{a^{2}u}{u^{2}-a^{2}}=\frac{u^{2}-a^{2}+a^{2}u}{u^{2}-a^{2}}.
\]
Call $N(u):=u^{2}-a^{2}+a^{2}u$ and $D(u):=u^{2}-a^{2}$. A direct
composition and clearing denominators yields 
\begin{align*}
T(T(u))-u\  & =\frac{N^{2}-a^{2}D^{2}+a^{2}ND-u(N^{2}-a^{2}D^{2})}{N^{2}-a^{2}D^{2}}\\
 & =\frac{(1-u)N^{2}+a^{2}ND+(u-1)a^{2}D^{2}}{N^{2}-a^{2}D^{2}}.
\end{align*}

Now we have $N^{2}=u^{4}+2a^{2}u^{3}+(a^{4}-2a^{2})u^{2}-2a^{4}u+a^{4}$,
$D^{2}=u^{4}-2a^{2}u^{2}+a^{4}$, and $ND=u^{4}+a^{2}u^{3}-2a^{2}u^{2}-a^{4}u+a^{4}$.
In the numerator, we have 
\[
(a^{2}-1)u^{5}+(1-2a^{2})u^{4}+(-2a^{4}+4a^{2})u^{3}+(3a^{4}-2a^{2})u^{2}-3a^{4}u+a^{4}.
\]

The numerator factors as $(u^{3}-u^{2}-2a^{2}u+a^{2}\bigr)\cdot\bigl((a^{2}-1)u^{2}-a^{2}u+a^{2}\bigr).$
The first factor is exactly the fixed-point condition $T(u)=u$. Thus,
any non-fixed $2$-cycle would require 
\[
(a^{2}-1)u^{2}-a^{2}u+a^{2}=0.
\]
Its discriminant equals 
\[
\Delta=a^{4}-4(a^{2}-1)a^{2}=a^{2}(4-3a^{2}).
\]
If $a\ge2$ (equivalently $K\ge3$) then $\Delta<0$, so there are
no real solutions. If $a=1$ (equivalently $K=2$) the equation reduces
to $-u+1=0$ so $u=1$, but then $\tilde{\xi}_{1}(x)=u-1/u=0$, violating
admissibility. Hence no admissible nontrivial $2$-cycles exist for
$L=3$.

\noindent\textbf{Case (iii): $L\ge4$ and $K\ge3$}

Here $n\ge3$ and $a\ge2$. Let $z\in\{x,y\}$ and set $u:=z-a$.

From $\tilde{\xi}_{n-1}(z)\ge1$ we have 
\[
u-\frac{a^{2}}{u}\ge1.
\]
Since $u>0$, multiplying by $u$ yields $u^{2}-u-a^{2}\ge0$, and
in particular $u>a$. Thus $u^{2}-a^{2}>0$.

Next, compute the last three steps of the recursion: 
\[
\tilde{\xi}_{n}(z)=u,\qquad\tilde{\xi}_{n-1}(z)=u-\frac{a^{2}}{u}=\frac{u^{2}-a^{2}}{u},
\]
\[
\tilde{\xi}_{n-2}(z)=u-\frac{a^{2}}{\tilde{\xi}_{n-1}(z)}=u-\frac{a^{2}u}{u^{2}-a^{2}}.
\]
Admissibility $\tilde{\xi}_{n-2}(z)\ge1$ becomes 
\[
u-\frac{a^{2}u}{u^{2}-a^{2}}\ge1.
\]
Multiplying by $u^{2}-a^{2}>0$ gives 
\begin{equation}
p(u):=u^{3}-u^{2}-2a^{2}u+a^{2}\ge0.\label{eq:cubic}
\end{equation}

We now show (\ref{eq:cubic}) forces $u>a+1$. Indeed, 
\[
p'(u)=3u^{2}-2u-2a^{2},\qquad p''(u)=6u-2>0\quad\text{for }u\ge a\ge2.
\]
Hence $p'$ is increasing on $[a,\infty)$, and 
\[
p'(a)=3a^{2}-2a-2a^{2}=a(a-2)\ge0
\]
implies $p'(u)\ge0$ for all $u\ge a$. Thus $p$ is increasing on
$[a,\infty)$. But 
\[
p(a+1)=(a+1)^{3}-(a+1)^{2}-2a^{2}(a+1)+a^{2}=-a(a^{2}-a-1)<0\quad(a\ge2),
\]
so $p(u)\ge0$ forces $u>a+1$.

Therefore, for $z\in\{x,y\}$, 
\[
z=a+u>a+(a+1)=2a+1=2K-1.
\]
So in particular 
\begin{equation}
x>2K-1,\qquad y>2K-1.\label{eq:bothlarge}
\end{equation}

Now apply Lemma \ref{lem:cs}. Since $y=F(x)$, 
\[
|F'(x)|\ge\frac{(F(x)-K)^{2}}{a^{2}}=\frac{(y-K)^{2}}{a^{2}}.
\]
By (\ref{eq:bothlarge}), $y-K>(2K-1)-K=K-1=a$, so $(y-K)^{2}/a^{2}>1$
and hence $F'(x)<-1$. Similarly $F'(y)<-1$.

But if $x<y$ this contradicts Equation (\ref{eq:derivconstraint}),
because the secant slope is $-1$ and must lie between $F'(x)$ and
$F'(y)$. Thus no admissible nontrivial $2$-cycles exist when $L\ge4$
and $K\ge3$.

\noindent\textbf{Case (iv): $L\ge5$ and $K=2$. }

Now $a=1$. In the admissible region, for any $w$ we have $F(w)=2+1/\tilde{\xi}_{1}(w)\le3$.
Thus in a $2$-cycle, $x=F(y)\le3$ and $y=F(x)\le3$, while $\tilde{\xi}_{L-1}(x)=x-1\ge1$
gives $x\ge2$. Hence any $2$-cycle must lie in $[2,3]$.

Here $n=L-1\ge4$. By Lemma \ref{lem:convex}, $F''>0$ on $\mathcal{D}$,
so $F'$ is increasing on $\mathcal{D}\cap[2,3]$. Hence the largest
(least negative) value of $F'$ on $[2,3]$ is $F'(3)$.

\noindent At $x=3$ we have $u=x-a=2$ and 
\[
A_{3}=\begin{pmatrix}2 & -1 & 0 & \cdots & 0\\
-1 & 2 & -1 & \ddots & \vdots\\
0 & -1 & 2 & \ddots & 0\\
\vdots & \ddots & \ddots & \ddots & -1\\
0 & \cdots & 0 & -1 & 2
\end{pmatrix}.
\]
Let $v:=A_{3}^{-1}e_{1}$. Then $v$ solves $A_{3}v=e_{1}$, i.e.
\[
2v_{1}-v_{2}=1,\qquad-v_{k-1}+2v_{k}-v_{k+1}=0\ (2\le k\le n-1),\qquad-v_{n-1}+2v_{n}=0.
\]
The unique solution is $v_{k}=\frac{n+1-k}{n+1}$. Therefore 
\[
(A_{3}^{-2})_{11}=\|A_{3}^{-1}e_{1}\|^{2}=\sum_{k=1}^{n}v_{k}^{2}=\frac{1}{(n+1)^{2}}\sum_{k=1}^{n}k^{2}=\frac{n(2n+1)}{6(n+1)}.
\]
Since $a=1$, Lemma \ref{lem:convex} gives $F'(3)=-(A_{3}^{-2})_{11}$,
hence 
\[
F'(3)=-\frac{n(2n+1)}{6(n+1)}.
\]
For $n\ge4$, 
\[
\frac{n(2n+1)}{6(n+1)}>1\iff2n^{2}-5n-6>0,
\]
which holds at $n=4$ and thereafter. Thus $F'(3)<-1$, and since
$F'$ is increasing, $F'(t)\le F'(3)<-1$ for all $t\in[2,3]\cap\mathcal{D}$.
This contradicts Equation (\ref{eq:derivconstraint}), so no admissible
$2$-cycles exist for $L\ge5$, $K=2$. 

\noindent\textbf{Case (v): $L=4$ and $K=2$. }

Here $a=1$ and $n=3$. Recall that for $z\in\{x,y\}$ we set $u:=z-a=z-1$.
Then 
\[
\tilde{\xi}_{3}(z)=u,\qquad\tilde{\xi}_{2}(z)=u-\frac{1}{u}=\frac{u^{2}-1}{u},\qquad\tilde{\xi}_{1}(z)=u-\frac{1}{\tilde{\xi}_{2}(z)}=u-\frac{u}{u^{2}-1}=\frac{u(u^{2}-2)}{u^{2}-1}.
\]
Admissibility $\tilde{\xi}_{2}(z)\ge1$ implies 
\[
u-\frac{1}{u}\ge1\iff u^{2}-u-1\ge0\iff u\ge\frac{1+\sqrt{5}}{2}>1,
\]
so $u^{2}-1>0$. Therefore $\tilde{\xi}_{1}(z)\ge1$ is equivalent
to 
\[
\frac{u(u^{2}-2)}{u^{2}-1}\ge1\iff p(u):=u^{3}-u^{2}-2u+1\ge0.
\]
Since $p''(u)=6u-2>0$ for all $u\ge1$, the function $p$ is convex
on $[1,\infty)$. Moreover, 
\[
p(1)=-1,\qquad p\!\left(\frac{9}{5}\right)=-\frac{1}{125}<0.
\]
Convexity then yields $p(u)<0$ for all $u\in[1,9/5]$, so $p(u)\ge0$
forces $u>9/5$, i.e. 
\begin{equation}
z=u+1>\frac{14}{5}.\label{eq:lower145}
\end{equation}
Applying this to $z=x$ and $z=y$ shows that any admissible $2$-cycle
must satisfy $x>14/5$ and $y>14/5$.

For $(K,L)=(2,4)$ one can simplify the recursion to the rational
form 
\begin{equation}
F(t)=2+\frac{1}{\tilde{\xi}_{1}(t)}=\frac{2t^{3}-5t^{2}+2}{t^{3}-3t^{2}+t+1}.\label{eq:f24}
\end{equation}
Differentiating gives 
\begin{equation}
F'(t)=-\frac{t^{4}-4t^{3}+5t^{2}-2t+2}{(t-1)^{2}(t^{2}-2t-1)^{2}}.\label{eq:f24prime}
\end{equation}

Now suppose, towards a contradiction, that there is an admissible
nontrivial $2$-cycle. Without loss of generality $x<y$. Then Equation
(\ref{eq:derivconstraint}) gives $F'(x)\le-1\le F'(y)$. Since $F$
is strictly convex on $\mathcal{D}$ by Lemma \ref{lem:convex}, the
derivative $F'$ is increasing on $\mathcal{D}$.

Let $t_{0}:=\frac{59}{20}$. We first note $t_{0}\in\mathcal{D}$:
indeed $u_{0}:=t_{0}-1=\frac{39}{20}$ and 
\[
\tilde{\xi}_{3}(t_{0})=u_{0}=\frac{39}{20}>1,\qquad\tilde{\xi}_{2}(t_{0})=u_{0}-\frac{1}{u_{0}}=\frac{39}{20}-\frac{20}{39}=\frac{1121}{780}>1,
\]
\[
\tilde{\xi}_{1}(t_{0})=u_{0}-\frac{1}{\tilde{\xi}_{2}(t_{0})}=\frac{39}{20}-\frac{780}{1121}=\frac{28119}{22420}>1.
\]
Evaluating (\ref{eq:f24prime}) at $t_{0}$ yields 
\[
F'(t_{0})=-\frac{810016400}{790678161}<-1.
\]
Since $F'$ is increasing on $\mathcal{D}$ and $y\in\mathcal{D}$,
the inequality $-1\le F'(y)$ forces $y>t_{0}$ (otherwise $F'(y)\le F'(t_{0})<-1$).

Finally, $F$ is strictly decreasing on $\mathcal{D}$ (Lemma \ref{lem:convex}),
so from $y>t_{0}$ we get 
\[
x=F(y)<F(t_{0})=\frac{78658}{28119}.
\]
But $\frac{78658}{28119}<\frac{14}{5},$ this contradicts (\ref{eq:lower145})
with $z=x$. Hence no admissible nontrivial $2$-cycle exists for
$(K,L)=(2,4)$.

\subsection{Proof of Theorem \ref{thm:d_regular}}

We begin with a probability fact and then state and prove several graph theory lemmas.

\begin{lem}\label{l:Chernoffbounds}
Consider a binomially distributed random variable with $C_1 K^a$ trials and success probability $C_2 K^b$, where $a>0$ and $b<0$. For any $a+b<\gamma < a$, the probability of more than $K^{\gamma}$ successes decays at an exponential rate in $K$.
\end{lem}

\begin{proof}
First, suppose that $a+b < 0$. Then the mean $\mu$ of this random variable is vanishing. By the  Chernoff bound, we can bound the probability that it is greater than $K^{\gamma}$ above by $e^{-D(K^{\gamma-a}/C_1 || C_2K^b) C_1K^a}$ where $D(x||y)$ is the Kullback-Leiber divergence between Bernoulli random variables with success probabilities $x$ and $y$. This KL-divergence is \begin{align*}D(K^{\gamma-a}/C_1  || C_2K^b) & = K^{\gamma-a} \log(K^{\gamma - a - b}/C_2)/C_1 + (1-K^{\gamma-a}/C_1 )\log((1-K^{\gamma-a}/C_1)/(1-C_2K^b))
\\ & \geq K^{\gamma - a}
\end{align*}
for $K$ sufficiently large, where we use $\gamma - a -b > 0$ to obtain $\log(K^{\gamma - a - b}/C_2)/C_1> 1$ for $K$ large and observe the second term $(1-K^{\gamma-a}/C_1 )\log((1-K^{\gamma-a}/C_1)/(1-C_2K^b))$ vanishes.  So $$e^{-D(K^{\gamma-a}/C_1 || C_2K^b) C_1K^a} \leq e^{-C_1 K^{\gamma}}$$
as desired.

For $a + b \geq 0$, the mean of this random variable grows at a polynomial rate. So the result follows from a standard application of the Chernoff bound.
\end{proof}

Fix $j$ and $j'$. We want to bound the number of walks from $j$ and $j'$ to some node $k$ of lengths up to $t^*$. For each $t < t^*,$ let $\mathcal{B}_{jj',t}$ be the set of nodes $k$ such that the sum of the number of walks of length at most $t$ from $j$ to $k$ and the number of walks of length at most $t$ from $j'$ to $k$ is greater than $1$.

\begin{lem}\label{l:bad_set}
Given $\gamma \in (0,\alpha)$, there exists $C>0$ such that a.a.s. the cardinality of $\mathcal{B}_{jj',t}$ is at most $CK^{(t-1)\alpha+\gamma}$ for all $j \neq j'$ and all $t < t^*-1$ and at most $CK^{\max((t-1)\alpha + \gamma, 2 t\alpha -1+\gamma)}$ for all $j \neq j'$ when $t=t^*-1$.
\end{lem}

\begin{proof}

We can explore the network via breadth-first search (see, for example, \cite{olvera2022strong} for details). Whenever we explore from a node $k$ for the first time, we can take the set of neighbors of $k$ to be drawn uniformly at random independently of the neighborhoods of all other nodes.

Let  $\gamma \in (0,\alpha)$. We claim that for any $j \neq j'$ and any $t < t^*-1$, we can choose $C^{(t)}$ such that the probability that the cardinality of $\mathcal{B}_{jj',t}$ is greater than $C^{(t)}K^{(t-1)\alpha+\gamma}$ decays at an exponential rate in $K$. We establish this by induction on $t$.

Begin with the case $t=1$. We can chose the $2d$ neighbors of $j$ and $j'$ sequentially, and the cardinality of $\mathcal{B}_{jj',t}$ is bounded above by the number of times that we select a node which is has already been chosen or is one of the starting nodes $j$ or $j'$. The probability of this is bounded above at each step by $(2d+2)/K = 2K^{\alpha-1},$ so $|\mathcal{B}_{ij,t}|$ is bounded above by a binomial random variable with $2d$ trials and success probability $2K^{\alpha-1}$. By Lemma~\ref{l:Chernoffbounds}, the probability that this random variable is greater than $K^{\gamma}$ decays exponentially in $K$.

Now suppose that $\mathcal{B}_{jj',t-1}$ has cardinality at most $C^{(t)}K^{(t-2)\alpha +\gamma}$ and $t < t^*-1$. Each of these nodes has $d$ neighbors, so the number of nodes reached via walks of length $t$ with penultimate node in $\mathcal{B}_{jj',t-1}$ is  at most $C^{(t-1)}K^{(t-1)\alpha + \gamma}.$ Given a walk of length $t-1$ with penultimate node $k \notin \mathcal{B}_{jj',t-1}$, we can draw $d$ neighbors independently at random via our breadth-first search. We must add the node to $\mathcal{B}_{jj',t}$ if it is has already been chosen, and the total number of such nodes is bounded above by $2\sum_{\ell = 0}^{t} d^{\ell} < 3 d^{t}$. So the number of neighbors of such nodes $k$ that have already chosen been is bounded above by the a binomial random variable with $2d^{t}=2K^{t\alpha}$ draws and success probability $3K^{t\alpha-1}.$ Since $t<t^*-1$, we have $(t+1)\alpha-1<0$. So by Lemma~\ref{l:Chernoffbounds}, the probability this random variable is greater than $K^{(t-1)\alpha + \gamma}$ decays exponentially in $K$. We conclude that $\mathcal{B}_{jj',t}$ has cardinality at most $(C^{(t-1)}+1)K^{(t-1)\alpha + \gamma}$. Taking $C^{(t)} = C^{(t-1)}+1$ completes the induction for $t<t^*-1$.

Finally, suppose $t=t^*-1$. As before, the number of nodes reached via walks of length $t-1$ with penultimate node in $\mathcal{B}_{jj',t-1}$ is  at most $C^{(t-1)}K^{(t-1)\alpha + \gamma}.$ Given a walk of length $t-1$ with penultimate node $k \notin \mathcal{B}_{jj',t}$, we can draw $d$ neighbors independently at random via our breadth-first search. As in the previous paragraph, we can bound the number of new nodes added to $\mathcal{B}_{jj', t}$ above by a binomial random variable with success probability $3K^{t\alpha-1}$ and $2K^{t\alpha}$ draws. By Lemma~\ref{l:Chernoffbounds}, this has success probability at most $K^{2 t\alpha -1+\gamma}$ away from an event with exponentially decaying probability. Combining these bounds and taking $C^{(t)} = C^{(t-1)}+1$  completes the $t=t^*-1$ case.

So the result holds outside of an event with exponentially decaying probability for each $j$, $j'$, and $t < t^*$. Since there are polynomially many such events, it holds asymptotically almost surely.
\end{proof}

\begin{lem}\label{l:max_walks}
Let $\nu > 0$. A.a.s., there are at most $K^{\nu}$ walks of length $t$ from $i$ to $k$ for all $t < t^*,$ $i$, and $k$.
\end{lem}

\begin{proof}
The case $t=1$ is trivial. Suppose there are at most $K^{\nu_t}$ walks of length $t$ from $i$ to $k$ for some $t$ and all $i$ and $k$. We will show that if $t +1 < t^*$, then for any $\nu_{t+1}>\nu_t$ there are at most $K^{\nu_{t+1}}$ walks of length $t + 1$ from $i$ to $k$ for all $i$ and $k$. To see this, note that there at most $d^t$ nodes $j$ reachable via a walk of length $t$ from $i$. By Lemma~\ref{l:Chernoffbounds}, the probability that greater than $K^{\nu_{t+1}-\nu_t}$ of these has a link to $k$ decays exponentially in $K$. So we can assume this event does not hold for any $i$ and $j$ a.a.s. For each of these nodes $j$, there are at most $K^{\nu_t}$ walks of length $t$ from $i$ to $j$. So there are at most $K^{\nu_{t+1}}$ walks of length $t+1$ from $i$ to $k$. The result follows from choosing any sequence
$0 < \nu_1 < \nu_2 < \hdots < \nu_{t^*-1} \leq \nu.$
\end{proof}

Our final lemma shows that the number of walks of length $t^*$ between any two nodes is close to its expected value.

\begin{lem}\label{l:tstar_walks}
For some $\eta \in (0, (\alpha t^*-1)/2)$, a.a.s. the number of walks of length $t^*$ from $j$ to $k$ is in $$[(1-K^{-\eta}) K^{\alpha t^*-1}, (1+K^{-\eta}) K^{\alpha t^*-1}]$$ for all $j$ and $k$.
\end{lem}
\begin{proof}
For each $t < t^*$, let $\mathcal{B}_{j,t}$ be the set of nodes $k$ such that there is more than one walk of length at most $t$ from $j$ to $k$. For any $j \neq j'$, $\mathcal{B}_{j,t} \subset \mathcal{B}_{jj',t}$. So by Lemma~\ref{l:bad_set}, we can choose $C>0$ and $\gamma \in (0,\alpha)$ such that $|\mathcal{B}_{j,t}| < CK^{\max((t-1)\alpha + \gamma, 2 t\alpha -1+\gamma)}$ a.a.s. We condition on this event holding for $t=t^*-1$.

Choose $\nu < \alpha - \gamma$. By Lemma~\ref{l:max_walks}, we can also condition on the event that there are at most $K^{\nu}$ walks of length $t$ from $i$ to $k$ for all $t < t^*,$ $i$, and $k$.

For each $i$, there are $d^{t^*-1}$ walks of length $t^*-1$ from $i$. At most $CK^{\max((t^*-2)\alpha + \gamma, 2 (t^*-1)\alpha -1+\gamma)+\nu}$ of these walks end at nodes in $\mathcal{B}_{j, t^*-1}$ because $|\mathcal{B}_{j, t^*-1}|\leq CK^{\max((t^*-2)\alpha + \gamma, 2 (t^*-1)\alpha -1+\gamma)}$. The remaining walks have distinct endpoints outside this set, and we can call the set of such endpoints $\mathcal{C}_j$. For suitable $\gamma$ and $\nu$, the exponent $\max((t^*-2)\alpha + \gamma, 2 (t^*-1)\alpha -1+\gamma)+\nu$ is less than $(t^*-1)\alpha.$

Fix some $k$. Recall that $|\mathcal{C}_j| \leq d^{t^*-1}=K^{\alpha (t^*-1)}$. By a Chernoff bound, the number of nodes in $\mathcal{C}_j$ with a link to $k$ is outside $[(1-K^{-\eta}/2) K^{\alpha t^*-1}, (1+K^{-\eta}/2) K^{\alpha t^*-1}]$ with an exponentially decaying probability for some $\eta \in (0, (\alpha t^*-1)/2)$.

To complete the proof, we must bound the number of walks from $j$ to $k$ of length $t^*$ with penultimate node in  $\mathcal{B}_{j, t^*-1}$. There are at most  $CK^{\max((t^*-2)\alpha + \gamma, 2 (t^*-1)\alpha -1+\gamma)}$ nodes in $\mathcal{B}_{j, t^*-1}.$ We have not yet assumed anything about the outgoing links from these nodes except ruling out events with exponentially decaying probabilities. So applying Lemma~\ref{l:Chernoffbounds} again, the number of nodes in $\mathcal{B}_{j, t^*-1}$ with links to $k$ is bounded above by $K^{\max(\alpha(t^*-1), 2\alpha(t^*-1)-1)+2\gamma-1}$ outside of an event with exponentially decaying probability. Each of these nodes $k'$ with such a link contributes at most $K^{\nu}$ walks by Lemma~\ref{l:max_walks}, because there are at most $K^{\nu}$ paths of length $t^*-1$ from $j$ to $k'$. So the number of walks from $j$ to $k$ of length $t^*$ with penultimate node in  $\mathcal{B}_{j, t^*-1}$ is at most $K^{\max(\alpha(t^*-1),2\alpha(t^*-1)-1)+2\gamma+\nu-1,}$. The total number of such walks is therefore at most $K^{-\eta} K^{\alpha t^*-1}/2$ for suitable choices of $\eta,$ $\gamma$ and $\nu$.

Combining the bound on the number of walks with penultimate node in $\mathcal{C}_j$ and the bound on the number of walks with penultimate node in $\mathcal{B}_{j,t^*-1}$ proves the lemma.
\end{proof}

We can now prove the theorem. A main step is to bound the weights used by agents in generations $2,\hdots,t^*+1$.

For each generation $t$, we let $r(t) = d^{t-1}$ and let $v(t) =0$ if $2\alpha(t-1)<1$ and $v(t) = d^{2(t-1)}/K$ if $2\alpha(t-1)>1$. Define a weight $w(t) = \frac{r(t)}{r(t) + (d-1)v(t)}$. We claim we can choose $C_t>0$ and $\delta_t >0$ for each $1 \leq t \leq t^*$ such that a.a.s., all agents $i$ in generation $t+1$ place weights $\vec{\beta}_{ij} \in [w(t)(1-C_tK^{-\delta_t}), w(t)(1+C_tK^{-\delta_t})]$ on each neighbor $j$.

We set $\delta_1=\delta_2=\hdots=\delta_{t^*-1}$ for any value satisfying $\delta_1<\alpha$ and $\delta_1 > \alpha t^*-1$. Since $\alpha t^*-1 < \alpha$, this is possible for any permissible $\alpha$. We choose any $\delta_{t^*}$ such that $$\delta_{t^*-1}>  \alpha t^*  -1 +\delta_{t^*},$$
which is possible beacuse $\delta_{t^*-1}>  \alpha t^*  -1$. To establish the bound on weights, we induct on $t$. In period $2$, all agents place weight $1$ on each observation. So the claim holds for any $\delta_1$, and in particular for our choice of $\delta_1$ above.

Suppose the claim holds up to period $t-1$ for $t \leq t^*$. We can condition on the event that each $i$ in generations $t' \leq t$ places weight $\vec{\beta}_{ij} \in [w(t'-1)(1-C_{t'-1}K^{-\delta_{t'-1}}), w(t'-1)(1+C_{t'-1}K^{-\delta_{t'-1}})]$ on each neighbor $j$. Throughout the proof of the inductive step, we will condition on events that hold except with exponentially decaying probabilities by Lemmas~\ref{l:Chernoffbounds}-\ref{l:tstar_walks} and variants of their arguments. Because there are polynomially many of these events, this will imply the inductive step holds a.a.s. for each $1 \leq t \leq t^*$.

We want to bound the variance and covariance of agents in generation $t$. We begin with variances. Let $j$ be in generation $t$. By Proposition \ref{prop:signal_counting}, the variance of $\ell_j$ is $4r_j/\sigma^2$, so it is sufficient to characterize $r_j$. This in turn is equal to the sum of the weights $j$ places on private signals. By our inductive hypothesis, the total weight placed by each agent in generation $t' \leq t$ on predecessors is in $[dw(t'-1)(1-C_{t'-1}K^{-\delta_{t'-1}}), dw(t'-1)(1+C_{t'-1}K^{-\delta_{t'-1}})]$. So we can choose $C'$ such that $ d^{t-1} (1-C'K^{-\delta_{t'-1}}) \leq r_i \leq d^{t-1}(1+C'K^{-\delta_{t'-1}})$.

Suppose $i$ is in generation $t+1$ and observes agent $j$ in generation $t$. We now want to bound the covariances between $j$ and other agents $j'$ in $N_i$. By Lemma~\ref{l:Chernoffbounds}, there are at most $K^{\nu}$ such agents $j'$ who share a neighbor with $j$. 

We begin by analyzing the covariances with agents $j'$ who do not have a common neighbor with $j$. We will argue as in Lemma~\ref{l:bad_set} to bound the covariance between $j$ and $j'$. Because $j$ and $j'$ have no common neighbors, $\mathcal{B}_{jj',1}$ is empty. If $2\alpha (t'-1)<1 $, then arguing as in the proof of Lemma~\ref{l:bad_set}, we obtain that $\mathcal{B}_{jj', t'}$ has cardinality at most $CK^{\alpha(t'-2)+\gamma}$. By Lemma~\ref{l:max_walks} and the bound on weights, each element of this set contributes at most $8CK^{2\nu}/\sigma^2$ to the covariance between $j$ and $j'$. So for each $j'$ without a common neighbor, the covariance $j$ and $j'$ is at most $16 CK^{(t-3)\alpha + \gamma + 2\nu}/\sigma^2$.

Suppose that $2\alpha(t'-1)>1$. When $2\alpha (t'-1)>1$, as in the proof of Lemma~\ref{l:bad_set}, we can express $\mathcal{B}_{jj', t'}$ as the union of neighbors of nodes in $\mathcal{B}_{jj', t'-1}$ and nodes reached for the first time at depth $t'$ in our breadth-first search. By a Chernoff bound, the probability that the cardinality of the set of nodes reached for the first time at distance $t'-1$ by walks from both $j$ and $j'$ is outside $[(1-K^{-\nu'})K^{2\alpha(t'-1)-1}, K^{2\alpha(t'-1)-1}(1+K^{-\nu'})]$ decays exponentially in $K$ for $0 < \nu' < \alpha(t'-1)-\frac12$. So the cardinality of the set of neighbors of nodes in $\mathcal{B}_{jj', t'-1}$ is bounded above by the sum of a term $C K^{\alpha(t'-2)+\gamma}$ and a term $K^{2 \alpha(t''-2) +(t'-t'')\alpha-1}(1+K^{-\nu'})$ for each $t'' < t'$ satisfying $\alpha (t''-1)>1$. The case different from the previous paragraph is when the highest order term in the covariance between $j$ and $j'$ comes from nodes reached for the first time at depth $t-1$ in the depth-first search.

We now want to analyze the contribution to covariance from the set of nodes in $\mathcal{B}_{jj', t-1}$ reached for the first time at depth $t-1$ by walks from both $j$ and $j'$. Recall the cardinality of this set is $[(1-K^{-\nu'})K^{2\alpha(t-1)-1}, K^{2\alpha(t-1)-1}(1+K^{-\nu'})]$. By Lemma~\ref{l:Chernoffbounds}, at most a fraction $K^{-\nu'+2\eta}$ of these nodes can be reached by more than one walk from $j$ or more than walk from $j'$. By Lemma~\ref{l:max_walks}, there are at most $K^{\eta}$ walks from each of $j$ and $j'$ to any such node. The contribution to variance from these terms is therefore at most $8 K^{-\nu}K^{2\alpha(t-1)-1}/\sigma^2$. For the remaining nodes, there is exactly one path from $j$ and exactly one path from $k$ of length $t-1$. The contribution to covariance from each is the product of the weights on these paths, which (after increasing $C''$ if needed) is in $[1-C'' K^{-\delta_{t-1}}, 1+C''K^{-\delta_{t-1}}]$. Increasing $C''$ further, we can conclude that the covariance between $j$ and $j'$ is within a factor in $[1-C''K^{-\delta_{t-1}}, 1+ C''K^{-\delta_{t-1}}]$ of $4v(t)/\sigma^2$ if $\delta_{t-1} \leq \nu'$ and within  a factor in $[1-C''K^{-\nu'}, 1+ C''K^{-\nu'}]$ of $4v(t)/\sigma^2$ otherwise.

It remains to analyze the covariances of $j$ with $j'$ who have common neighbors with $j$. When $t < t^*$, for any such agents, we can assume by Lemma~\ref{l:bad_set} that $|\bigcup_{t'< t} \mathcal{B}_{jj',t'}| < 2C K^{(t-2)\alpha + \gamma}$ for any $\gamma>0$ and some $C>0$. By Lemma~\ref{l:max_walks} and the bound on weights, each element of this set contributes at most $8K^{2\nu}/\sigma^2$ to the covariance between $j$ and $j'$. So for each $j'$ with a common neighbor, the covariance $j$ and $j'$ is at most $16C K^{(t-2)\alpha + \gamma + 2\nu}/\sigma^2$.

When $t=t^*$, the contribution to $|\mathcal{B}_{jj',t-1}|$ from neighbors of nodes in $\mathcal{B}_{jj',t-2}$ is at most $2C K^{(t-2)\alpha + \gamma}$. This again contributes at most $16C K^{(t-2)\alpha + \gamma + 2\nu}/\sigma^2$ to covariance. There is also another term capturing the contribution from covariance from nodes in $\mathcal{B}_{jj', t-1}$ reached for the first time at depth $t-1$, which we bound as in the case when $j$ and $j'$ do not have common neighbors above.

This completes our bounds on variances and covariances.

We are now ready to bound the change in weights. We first bound the deviation between weights and then bound the average weight.

By Proposition \ref{prop:linear}, the weights are \[
\vec{\beta}_{i}=2\left(\mathbb{E}[(\ell_{j(1)},...,\ell_{j(d)})\mid\omega=1]\times\textsc{\emph{Cov}}[\ell_{j(1)},...,\ell_{j(d_{i})}\mid\omega=1]^{-1}\right),
\]
where $j(1),\hdots,j(d)$ are the neighbors of $i$. The entries of the expectation are $$2r_{j(1)}/\sigma^2,\hdots,2r_{j(d)}/\sigma^2 \in [2(1-C'K^{-\delta_{t'-1}})d^{t-1}/\sigma^2,2(1+C'K^{-\delta_{t'-1}})d^{t-1}/\sigma^2].$$ We can write 
$$\textsc{\emph{Cov}}[\ell_{j(1)},...,\ell_{j(d_{i})}\mid\omega=1] = \frac{4}{\sigma^2}[(r(t)-v(t))I +v(t)\mathbf{1}\mathbf{1}^T + U],$$
we obtain
$$\textsc{\emph{Cov}}[\ell_{j(1)},...,\ell_{j(d_{i})}\mid\omega=1]^{-1} = {\sigma^2}{4}\cdot [((r(t)-v(t))I +v(t)\mathbf{1}\mathbf{1}^T)^{-1} (I - U \textsc{\emph{Cov}}[\ell_{j(1)},...,\ell_{j(d_{i})}\mid\omega=1]^{-1})].$$

We can express $((r(t)-v(t))I +v(t)\mathbf{1}\mathbf{1}^T)^{-1}$ as the sum of $ (r(t)-v(t))^{-1}I$ and a multiple of $\mathbf{1}\mathbf{1}^T$. Terms involving the multiple of $\mathbf{1}\mathbf{1}^T$ contribute to all weights $\vec{\beta}_i$ equally, so we can bound the deviation across weights by analyzing $$\frac{\sigma^2}{4}\cdot [((r(t)-v(t))^{-1} (I - U \textsc{\emph{Cov}}[\ell_{j(1)},...,\ell_{j(d_{i})}\mid\omega=1]^{-1})].$$
We have
$$\frac{\sigma^2}{2} \cdot \mathbb{E}[(\ell_{j(1)},...,\ell_{j(d)})\mid\omega=1] [((r(t)-v(t))^{-1} (I - U \textsc{\emph{Cov}}[\ell_{j(1)},...,\ell_{j(d_{i})}\mid\omega=1]^{-1})],$$
which is equal to 
$$\frac{\vec{r}}{r(t)-v(t)} \cdot(I - U \textsc{\emph{Cov}}[\ell_{j(1)},...,\ell_{j(d_{i})}\mid\omega=1]^{-1}).$$
Recall that $v(t) = K^{2\alpha(t-1)-1}$ and $r(t) = K^{\alpha (t-1)}$. 

Then
\begin{align*}
\vec{\beta}_{ij} & \leq \frac{\max_{j \in N_i} r_j}{r(t)-v(t)} (1 + \|U \textsc{\emph{Cov}}[\ell_{j(1)},...,\ell_{j(d_{i})}\mid\omega=1]^{-1}\|_{\infty})
\\ & \leq \frac{\max_{j \in N_i} r_j}{r(t)-v(t)} (1 + \|U\|_{\infty} \|\textsc{\emph{Cov}}[\ell_{j(1)},...,\ell_{j(d_{i})}\mid\omega=1]^{-1}\|_{\infty} ).
\end{align*}
For the first term, $r_j \leq r(t)(1+C'K^{-\delta_{t'-1}})$ and $v(t)$ is a lower-order polynomial than $r(t)$. We next use our bounds on covariances above to bound the norm $\|U\|_{\infty}$. Given $j(k)$, the diagonal entry  $|U_{kk}|$ is $ O(K^{-\delta_{t'-1}}r(t))$. There are at most $K^{\nu}$ other $j(k')$ sharing a common neighbor with $j(k)$, and for each of these we showed the covariance is $O(K^{(t-2)\alpha + \gamma + 2\nu})$ since $\delta_{t'-1} < \alpha$ or can be grouped with the $j(k')$ not sharing a common neighbor. Since $\gamma$ and $\nu$ can be arbitrary, we can choose them small enough so that the total contribution of the $O(K^{(t-2)\alpha + \gamma + 2\nu})$ terms is also $O(K^{-\delta_{t'-1}}r(t))$. There are fewer than $d$ remaining covariances, and we showed above that the error term $U_{kk'}$ for each is $$O(\max(K^{(t-3)\alpha + \gamma + 2\nu}, K^{-\nu'}v(t), K^{-\delta_{t'-1}} v(t))).$$ When the first term in the maximum applies, the total contribution is $O(K^{(t-2)\alpha + \gamma + 2\nu})$ which is again $O(K^{-\delta_{t'-1}}r(t))$. When the second term applies, the total contribution is $O(dK^{-\nu'}v(t))$. Since we can take $\nu'$ arbitrarily close to $\alpha (t-1)-\frac12$, we can take this to be $O(K^{\alpha t-\frac12+\epsilon})$ for any $\epsilon>0$. This in turn is $O(K^{-\delta_{t'-1}}r(t))$ as long as $\delta_{t'} + \alpha < \frac12$, which holds since $\delta_{t'-1}<\alpha$ and $\alpha < \frac14$. When the third term applies, the total contribution is $O(dK^{-\delta_{t'-1}}v(t))$. When $t<t^*$, this is $O(K^{-\delta_{t'}}r(t))$ because $dv(t) < r(t)$. When $t=t^*$, this is $O(K^{-\delta_{t'}}r(t))$ since $$\delta_{t^*-1}>  \alpha t^*  -1 +\delta_{t^*}.$$
We have
$$ \|\textsc{\emph{Cov}}[\ell_{j(1)},...,\ell_{j(d_{i})}\mid\omega=1]^{-1}\|_{\infty} \leq \frac{\|((r(t)-v(t))I +v(t)\mathbf{1}\mathbf{1}^T)^{-1} \|_{\infty}}{1-\|((r(t)-v(t))I +v(t)\mathbf{1}\mathbf{1}^T)^{-1}\|_{\infty} \|U\|_{\infty}}.$$
A simple computation gives $\|((r(t)-v(t))I +v(t)\mathbf{1}\mathbf{1}^T)^{-1}\|_{\infty} \leq \frac{2}{r(t)-v(t)}$ while $\|U\|_{\infty}$ is $o(r(t))$, so the right-hand side is $O(K^{-\alpha(t-1)})$.

Combining these terms and recalling that $\delta_{t'-1} \geq \delta_{t'}$, our upper bound on $\vec{\beta}_{ij}$ is $1+O(K^{-\delta_{t'}})$. We obtain a lower bound similarly. So the deviation in weights $\vec{\beta}_{ij}$ across $j$ are $O(K^{-\delta_{t'}})$.

We now turn to bounding the average weight. Proposition \ref{prop:signal_counting} requires that
$$\sum_{j} \vec{\beta}_{ij} r_j = \sum_{j,j'} \vec{\beta}_{ij}\vec{\beta}_{ij'}\textsc{\emph{Cov}}[\ell_j, \ell_{j'}].$$
We have $\vec{\beta}_{ij} = \beta(1 + O(K^{-\delta_{t'}}))$ for some $\beta$. Recall $r_j = r(t)(1 + O(K^{-\delta_{t'}}))$ and $\sum_{j'}\textsc{\emph{Cov}}[\ell_j, \ell_{j'}]=r(t)+(d-1)v(t) + O(r(t)K^{-\delta_{t'}})$ for each $j$. We obtain
$$ d\beta r(t)(1+ O(K^{-\delta_{t'}})) = d\beta^2 (r(t) + (d-1)v(t)+O(K^{-\delta_{t'}} r(t)))$$
and therefore $\beta = \frac{r(t)}{r(t) + (d-1)v(t)} \cdot (1+O(K^{-\delta_{t'}}))$ as desired. This completes the induction, so we have established our bounds on weights $w(t)$ for each $1\leq t \leq t^*$.

Now we can apply our bounds on weights along with Lemma~\ref{l:tstar_walks} to bound the weight each agent in generation $t^*+1$ places on each agent in generation $1$, which we will show is close is close to $1$. Recall from the proof of Proposition \ref{prop:linear} that we use the notation $w_{j,k}$ for the weight placed by agent $j$ on $k$'s log-signal.  For all $j$ in generation $t^*$ and $k$ in generation $1$ and some $\eta \in (0, (\alpha t^*-1)/2)$, by Lemma~\ref{l:tstar_walks}, a.a.s. the number of walks from $j$ to $k$ is in  $$[(1-K^{-\eta}) K^{\alpha t^*-1}, (1+K^{-\eta}) K^{\alpha t^*-1}].$$ Then letting $w^* = \prod_{t=1}^{t^*} w(t)$, the product of weights along any of these walks is within a factor $O(K^{-\delta_{t'}})$ of $w^*$. It follows from the construction of each $w(t)$ that $w^* = K^{1-\alpha t^*}(1+O(K^{-\epsilon_1}))$ for some $\epsilon_1>0$. So we have $K^{\alpha t^*-1}(1+O(K^{-\eta}))$ walks from $j$ to $k$ and each has weight $w^*(1 + O(K^{-\epsilon_1}))$. Shrinking $\epsilon_1$ if necessary, we conclude that $w_{j,k} = 1 + O(K^{-\epsilon_1})$. It follows that $r_j \geq K -CK^{1-\epsilon_1}$ for some $C>0$.

Now consider arbitrary $i$ in generation $t^*+2$. We will bound the total weight $\sum_{k = K+1}^{2K} w_{i,k}$ that agent $i$ places on agents in generation $2$. We can relate this to the total weight $\sum_{k=1}^K w_{i,k}$ that agent $i$ places on agents in generation $1$:
$$ \sum_{k=1}^K w_{i,k} = d\sum_{k = K+1}^{2K} w_{i,k}.$$
The variance of $\ell_i$ is $4r_i/\sigma^2$ and $r_i$ cannot be larger than the total number of agents reachable by walks from $i$, which is less than $3K$. On the other hand the variance is bounded below by
$$\sum_{k=1}^K 4w_{i,k}^2/\sigma^2.$$
Given $\sum_{k=1}^K w_{i,k} \geq \underline{w},$ the variance is minimized by setting all $w_{i,k}$ equal to $\underline{w}/K$, which gives variance
$$ \frac{4}{\sigma^2} \sum_{k=1}^K\left(\frac{\underline{w}}{K}\right)^2.$$
The summation cannot be larger than $r_i \leq 3K$, so we must have $\underline{w} < 2K$. Therefore
$$\sum_{k = K+1}^{2K} w_{i,k} \leq \frac{2K}{d}.$$
But then $r_i \leq K + O(K^{1-\epsilon_2})$ for some $\epsilon_2>0$. Taking $\epsilon = \min(\epsilon_1,\epsilon_2)$ proves the theorem.

\subsection{Proof of Proposition \ref{prop:starting_3}}
\begin{proof}
We first establish a lemma that expresses $\vec{\beta}_{i,\cdot}$
in closed-form for an agent $i$ in generation $t+1$. Let $\ell_{\text{sum}}$
be the sum of the log-actions played in generation $t-1$.
By the log-linearity of rational strategies (Proposition \ref{prop:linear}),
there must exist some $\mu_{\text{sum}},\sigma_{\text{sum}}^{2}>0$
so that the conditional distributions of $\ell_{\text{sum}}$ in the
two states are $\mathcal{N}(\pm\mu_{\text{sum}},\sigma_{\text{sum}}^{2}).$
\begin{lem}
\label{lem:strategy_in_gens} Each element in $\vec{\beta}_{i,\cdot}$
is $\left(\frac{\mu_{\text{sum }}^{2}}{\sigma_{\text{sum}}^{2}}+\frac{1}{\sigma^{2}}\right)/\left(K\frac{\mu_{\text{sum }}^{2}}{\sigma_{\text{sum}}^{2}}+\frac{1}{\sigma^{2}}\right)$.
\end{lem}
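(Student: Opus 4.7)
The plan is to combine the intra-generational weight symmetry (Lemma \ref{lem:anon_beta}) with the signal-counting identity from Proposition \ref{prop:signal_counting}, applying them on top of the Gaussian signal-extraction formula of Proposition \ref{prop:linear} for an equicorrelated covariance matrix. I expect the main obstacle to be the last step: the raw Gaussian calculation yields an expression for each entry of $\vec{\beta}_{i,\cdot}$ that still depends on $\beta_t$, and one has to spot that Proposition \ref{prop:signal_counting} applied to the generation-$t$ log-actions (rather than a direct recursion for $\beta_t$) is what collapses the answer into a ratio involving only $\mu_{\mathrm{sum}}^2/\sigma_{\mathrm{sum}}^2$ and $1/\sigma^2$.

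First I compute the conditional moments of the generation-$t$ log-actions that agent $i$ observes. By Lemma \ref{lem:anon_beta} applied at generation $t$, every agent $j$ in that generation places a common weight $\beta_t$ on each of her $K$ predecessors, so $\ell_j = \lambda_j + \beta_t \ell_{\mathrm{sum}}$. Since the $\lambda_j$'s are conditionally independent of each other and of $\ell_{\mathrm{sum}}$, Lemma \ref{lem:trick} gives, conditional on $\omega=1$, that each $\ell_j$ has mean $m := 2/\sigma^2 + \beta_t \mu_{\mathrm{sum}}$ and variance $v := 4/\sigma^2 + \beta_t^2 \sigma_{\mathrm{sum}}^2$, with covariance $c_0 := \beta_t^2 \sigma_{\mathrm{sum}}^2$ between any distinct pair. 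The resulting conditional covariance matrix $\Sigma = (v-c_0) I + c_0 \vec{1}\vec{1}^{\prime}$ is equicorrelated, so $\vec{1}$ is an eigenvector with eigenvalue $v + (K-1)c_0 = 4/\sigma^2 + K\beta_t^2 \sigma_{\mathrm{sum}}^2$. Since $\vec{\mu} = m \vec{1}$, Proposition \ref{prop:linear} gives every entry of $\vec{\beta}_{i,\cdot} = 2 \vec{\mu}^{\prime}\Sigma^{-1}$ equal to
\[
\frac{2m}{v+(K-1)c_0} \;=\; \frac{4/\sigma^2 + 2\beta_t\mu_{\mathrm{sum}}}{4/\sigma^2 + K\beta_t^2\sigma_{\mathrm{sum}}^2}.
\]

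Finally I eliminate $\beta_t$ from this expression using Proposition \ref{prop:signal_counting}. The signal-counting interpretation of each generation-$t$ log-action $\ell_j$ gives $\mathbb{E}[\ell_j\mid\omega=1] = \tfrac{1}{2}\mathrm{Var}[\ell_j\mid\omega=1]$, i.e.\ $m = v/2$. Substituting the expressions for $m$ and $v$ and cancelling the common $2/\sigma^2$ yields $\beta_t \mu_{\mathrm{sum}} = \tfrac{1}{2}\beta_t^2 \sigma_{\mathrm{sum}}^2$, equivalently $\beta_t \sigma_{\mathrm{sum}}^2 = 2\mu_{\mathrm{sum}}$ (the degenerate case $\beta_t=0$ does not occur in equilibrium because $\ell_{\mathrm{sum}}$ is informative about the state). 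Hence $\beta_t^2 \sigma_{\mathrm{sum}}^2 = 4\mu_{\mathrm{sum}}^2/\sigma_{\mathrm{sum}}^2$ and $2\beta_t \mu_{\mathrm{sum}} = 4\mu_{\mathrm{sum}}^2/\sigma_{\mathrm{sum}}^2$, so substituting into the previous display and dividing numerator and denominator by $4$ produces exactly the formula in the lemma.
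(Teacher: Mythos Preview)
Your argument is correct and lands on exactly the same formula as the paper, via essentially the same structure: write $\ell_j=\lambda_j+\beta_t\,\ell_{\mathrm{sum}}$ for each generation-$t$ agent, compute the common mean, variance and covariance of the $\ell_j$'s, and then apply the equicorrelated-$\Sigma$ version of Proposition~\ref{prop:linear} to agent $i$ in generation $t+1$.

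The only difference is how you pin down $\beta_t$. The paper gets $\beta_t=2\mu_{\mathrm{sum}}/\sigma_{\mathrm{sum}}^2$ in one step by applying Proposition~\ref{prop:linear} directly to the scalar observation $\ell_{\mathrm{sum}}$ (mean $\mu_{\mathrm{sum}}$, variance $\sigma_{\mathrm{sum}}^2$). You instead invoke Lemma~\ref{lem:anon_beta} to posit an abstract $\beta_t$ and then back it out from the signal-counting identity $m=v/2$ of Proposition~\ref{prop:signal_counting}. Both routes are valid and give the same value; yours is a small detour, since the identity $m=v/2$ is itself a consequence of the optimal-weight formula in Proposition~\ref{prop:linear}, which is what the paper applies directly. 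Either way, once $\beta_t\sigma_{\mathrm{sum}}^2=2\mu_{\mathrm{sum}}$ is in hand, your final substitution is clean and matches the paper's.
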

\begin{proof}
An application of Proposition \ref{prop:linear} shows each agent
$j$ in generation $t$ aggregates $\ell_{\text{sum}}$ and own private
signal $\lambda_{j}$ according to $\ell_{j}=2\cdot\frac{\mu_{\text{sum }}}{\sigma_{\text{sum}}^{2}}\ell_{\text{sum}}+\lambda_{j}.$

Next, consider the problem of someone in generation $t+1$ who observes
the log-actions $\ell_{j}$ of the $K$ agents $j=(t-1)K+k$ for $1\le k\le K$
from generation $t.$ By symmetry, $i$ places the same weight on
these $K$ log-actions. To find this weight, we calculate
\begin{align*}
\mathbb{E}\left[\sum_{k=1}^{K}\ell_{(t-1)K+k}\mid\omega=1\right] & =2K\frac{\mu_{\text{sum }}^{2}}{\sigma_{\text{sum}}^{2}}+2K\frac{1}{\sigma^{2}}\\
\textsc{Var}\left[\sum_{k=1}^{K}\ell_{(t-1)K+k}\mid\omega=1\right] & =K\cdot\left(4\cdot\frac{\mu_{\text{sum }}^{2}}{\sigma_{\text{sum}}^{2}}+4\cdot\frac{1}{\sigma^{2}}\right)+K\cdot(K-1)\cdot4\cdot\frac{\mu_{\text{sum }}^{2}}{\sigma_{\text{sum}}^{2}}
\end{align*}
 So by Proposition \ref{prop:linear}, 
\[
\beta_{i,j}=\frac{2\cdot\left(2K\frac{\mu_{\text{sum }}^{2}}{\sigma_{\text{sum}}^{2}}+2K\frac{1}{\sigma^{2}}\right)}{K\cdot\left(4\cdot\frac{\mu_{\text{sum }}^{2}}{\sigma_{\text{sum}}^{2}}+4\cdot\frac{1}{\sigma^{2}}\right)+K\cdot(K-1)\cdot4\cdot\frac{\mu_{\text{sum }}^{2}}{\sigma_{\text{sum}}^{2}}}=\frac{\frac{\mu_{\text{sum }}^{2}}{\sigma_{\text{sum}}^{2}}+\frac{1}{\sigma^{2}}}{K\frac{\mu_{\text{sum }}^{2}}{\sigma_{\text{sum}}^{2}}+\frac{1}{\sigma^{2}}}
\]
 for every $j=(t-1)K+k$ for $1\le k\le K$, as desired.
\end{proof}
Consider an agent $i$ in generation $t.$ From Proposition \ref{prop:signal_counting},
there is some $x_{old}>0$ so that $\ell_{i}\sim\mathcal{N}(\pm x_{old},2x_{old})$
conditional on the two states. In fact, from Proposition \ref{prop:linear},
$x_{old}=2\cdot\frac{\mu_{\text{sum }}^{2}}{\sigma_{\text{sum}}^{2}}+\frac{2}{\sigma^{2}}.$
For an agent in generation $t+1,$ using the same argument and applying
the formula for $\vec{\beta}_{i,\cdot}$ from Lemma \ref{lem:strategy_in_gens},
we have $x_{new}=\frac{2K(\frac{\mu_{\text{sum }}^{2}}{\sigma_{\text{sum}}^{2}}+\frac{1}{\sigma^{2}})^{2}}{K\frac{\mu_{\text{sum }}^{2}}{\sigma_{\text{sum}}^{2}}+\frac{1}{\sigma^{2}}}+\frac{2}{\sigma^{2}}.$

A hypothetical agent who observes $\ell_{\text{sum}}$ (the sum of
log-actions in generation $t-1$) with conditional distributions $\mathcal{N}(\pm\mu_{\text{sum}},\sigma_{\text{sum}}^{2})$
and three extra independent private signals (in addition to the one
private signal usually observed) would play a log-action with conditional
distributions $\mathcal{N}(\pm y,2y)$ where $y=\left[2\frac{\mu_{\text{sum }}^{2}}{\sigma_{\text{sum}}^{2}}+\frac{6}{\sigma^{2}}\right]+\frac{2}{\sigma^{2}}.$
We have
\begin{align*}
(y-x_{new})\cdot(K\frac{\mu_{\text{sum }}^{2}}{\sigma_{\text{sum}}^{2}}+\frac{1}{\sigma^{2}})= & \left[2\frac{\mu_{\text{sum }}^{2}}{\sigma_{\text{sum}}^{2}}+\frac{6}{\sigma^{2}}\right]\cdot\left[K\frac{\mu_{\text{sum }}^{2}}{\sigma_{\text{sum}}^{2}}+\frac{1}{\sigma^{2}}\right]-2K(\frac{\mu_{\text{sum }}^{2}}{\sigma_{\text{sum}}^{2}}+\frac{1}{\sigma^{2}})^{2}\\
= & (2+4K)\cdot\frac{\mu_{\text{sum }}^{2}}{\sigma_{\text{sum}}^{2}}\cdot\frac{1}{\sigma^{2}}+\frac{6}{\sigma^{4}}-4K\cdot\frac{\mu_{\text{sum }}^{2}}{\sigma_{\text{sum}}^{2}}\cdot\frac{1}{\sigma^{2}}-2K\frac{1}{\sigma^{4}}\\
\ge & 2K\frac{1}{\sigma^{2}}\left(\frac{\mu_{\text{sum }}^{2}}{\sigma_{\text{sum}}^{2}}-\frac{1}{\sigma^{2}}\right).
\end{align*}

We must have $\mathbb{P}[\ell_{\text{sum}}>0\mid\omega=1]\ge\mathbb{P}[\lambda_{1}>0\mid\omega=1]$,
a probability that just depends on the ratio of the mean and standard
deviation. So $\frac{\mu_{\text{sum }}}{\sigma_{\text{sum}}}\ge\frac{1}{\sigma}$,
i.e. $\frac{\mu_{\text{sum }}^{2}}{\sigma_{\text{sum}}^{2}}\ge\frac{1}{\sigma^{2}}$.
Hence the difference above is positive. This shows $x_{new}-x_{old}\le3\cdot\frac{2}{\sigma^{2}}$.
\end{proof}

\subsection{Proof of Proposition \ref{prop:zeroth_gen}}
\begin{proof}
It is clear that $r_{0}=1$ and that $r_{i}=2$ for $1\le i\le K$.
By applying Example \ref{ex:Many-Neighbors-with}
with $K_{1}=1,$ $K_{2}=K,$ we see that for every agent $j$ in generation
2, we get $r_{j}\le r_{1}+1+\frac{(K-1)(1+1)}{K+1}<5$. For an agent
$j'$ in generation $t\ge3$, the same arguments in the proof of Proposition
\ref{prop:starting_3} apply, showing that $r_{j'}-r_{i'}\le3$ where
$i'$ is any agent in generation $t-1.$
\end{proof}

\subsection{Proof of Proposition \ref{prop:lower_bound}}
\begin{proof}
    Clearly, this result holds for $2\le t\le L$, as agents in these
early generations perfectly aggregate $K$ signals in each generation.
The proof of Theorem \ref{thm:L_gen} define the quantity $D_{t}=K(r_{t}-r_{t-1})-(K-1)$
for $t\ge L+1.$ Lemma \ref{lem:D_and_b} implies that $D_{t}\ge K$ for every $t\ge L+1.$
This implies $K(r_{t}-r_{t-1})-(K-1)\ge K\Rightarrow r_{t}-r_{t-1}\ge2-1/K$.
\end{proof}

\subsection{Proof of Proposition \ref{prop:summing_welfare}}

We work with the general decision problem, as discussed at the end of Section \ref{sec:Welfare}. It is without loss to normalize the expected utility of an agent who perfectly knows the state to be zero. As in \cite{ali2018role}, 
we may assume without loss of generality that $a\mapsto u(a,0)$ is strictly decreasing and $a\mapsto u(a,1)$ is strictly increasing. Since $A$ is compact, we can denote the lowest and highest actions in $A$ as $\underline{a},\bar{a}\in A.$

First, we establish a lemma that tells us when aggregative efficiency
is in $(0,1),$ Bayesian beliefs converge to the true state 
at an exponential rate.
\begin{lem}
\label{lem:exp_convergence}Suppose the aggregative efficiency of
a network is $AE\in(0,1).$ Let $a_i$ be agent $i$'s posterior belief. For every $\epsilon>0$ so that $0<AE-\epsilon<AE+\epsilon<1,$

\[1-\mathbb{P}[1-e^{-i(AE-\epsilon)\cdot(2/\sigma^{2})}\le a_{i}\le1-e^{-i(AE+\epsilon)\cdot(2/\sigma^{2})}\mid\omega=1]= O(e^{-i})\]
and 
\[1-\mathbb{P}[e^{-i(AE+\epsilon)\cdot(2/\sigma^{2})}\le a_{i}\le e^{-i(AE-\epsilon)\cdot(2/\sigma^{2})}\mid\omega=0]= O(e^{-i}).\]
\end{lem}
\begin{proof}
Conditional on $\omega=1,$ $\ell_i=r_{i}\cdot\frac{2}{\sigma^{2}}+z_{i}$
where $z_{i}\sim\mathcal{N}(0,r_{i}\cdot\frac{4}{\sigma^{2}}).$ Since
$a_{i}=\frac{\exp(\ell_{i})}{1+\exp(\ell_{i})},$ we get
\[
a_{i}=1-\frac{1}{1+\exp(i\cdot(AE+[(r_{i}/i)-AE)])\cdot(2/\sigma^{2}))\cdot\exp(z_{i})}.
\]
 From this, we have 
\begin{align*}
a_{i} & \ge1-\frac{1}{\exp(i\cdot(AE+[(r_{i}/i)-AE)])\cdot(2/\sigma^{2}))\cdot\exp(z_{i})}\\
 & \ge1-\frac{1}{\exp(i\cdot(AE-(\epsilon/2))\cdot(2/\sigma^{2}))\cdot\exp(z_{i})}\text{ for }i\text{ large}\\
 & =1-\frac{1}{\exp(i\cdot(AE-\epsilon)\cdot(2/\sigma^{2}))\cdot\exp(z_{i}+i\cdot(\epsilon/2)\cdot(2/\sigma^{2}))}.
\end{align*}
 So for large $i$, $\mathbb{P}[a_{i}\ge1-e^{-i(AE-\epsilon)\cdot(2/\sigma^{2})}]\ge\mathbb{P}[z_{i}+i\cdot(\epsilon/2)\cdot(2/\sigma^{2})\ge0]$.
But the mean of $z_{i}+i\cdot(\epsilon/2)\cdot(2/\sigma^{2})$ grows
linearly in $i$ and the standard deviation grows at most at the rate of $\sqrt{i},$ and it is well known that the complement of the Gaussian distribution function $\Phi(x)$ converges to 0 at the rate of $e^{x^2}$. This shows $\mathbb{P}[z_{i}+i\cdot(\epsilon/2)\cdot(2/\sigma^{2})<0]=O(e^{-i}).$

For the other direction, we have 
\begin{align*}
a_{i} & \le1-\frac{1}{1+\exp(i\cdot(AE+\epsilon/3)\cdot(2/\sigma^{2}))\cdot\exp(z_{i})}\text{ for }i\text{ large}\\
 & =1-\frac{1}{1+\exp(i\cdot(AE+2\epsilon/3)\cdot(2/\sigma^{2}))\cdot\exp(-i\cdot(\epsilon/3)\cdot(2/\sigma^{2})+z_{i})}
\end{align*}
For large $i,$ $\mathbb{P}[a_{i}\le1-\frac{1}{1+\exp(i\cdot(AE+2\epsilon/3)\cdot(2/\sigma^{2}))}]\ge\mathbb{P}[-i\cdot(\epsilon/3)\cdot(2/\sigma^{2})+z_{i}\le0].$
But the mean of $-i\cdot(\epsilon/3)\cdot(2/\sigma^{2})+z_{i}$ decreases
linearly in $i$ and the standard deviation grows at most at the rate of $\sqrt{i},$
so by the same reason as before $\mathbb{P}[-i\cdot(\epsilon/3)\cdot(2/\sigma^{2})+z_{i}>0]=O(e^{-i}).$
Finally, for large $i$, $\frac{1}{1+\exp(i\cdot(AE+2\epsilon/3)\cdot(2/\sigma^{2}))}>\frac{1}{\exp(i\cdot(AE+\epsilon)\cdot(2/\sigma^{2}))}$,
so in fact $\mathbb{P}[a_{i}>1-e^{-i(AE+\epsilon)\cdot(2/\sigma^{2})})]=O(e^{-i})$
as well.

The claim for $\omega=0$ is symmetric.
\end{proof}
Lemma \ref{lem:exp_convergence} implies that for any signal variance,
the undiscounted infinite sums of the expected utilities $\sum_{i}v_i^M,\sum_{i}v_i^{M'}>-\infty$
are convergent.

\begin{lem} \label{lem:convergent_utility}
For any network $M$ where aggregative efficiency is strictly positive,
$\sum_{i=1}^{\infty}v_{i}^{M}$ is convergent.
\end{lem}
\begin{proof}
We know that for every $i,$ $\frac{1}{2}u(\underline{a},1)+\frac{1}{2}u(\bar{a},0)\le v_{i}^{M}\le0$,
so each $v_{i}^{M}$ is bounded. We obtain a lower bound on $v_{i}^{M}$
by supposing that $i$ chooses the action $\bar{a}$ if their Bayesian
posterior belief is 0.5 or higher and chooses the action $\underline{a}$
if their Bayesian posterior belief is strictly lower than 0.5. From
Lemma \ref{lem:exp_convergence}, the probability that $i$'s posterior belief is strictly
lower than 0.5 when $\omega=1$ converges to 0 at an exponential rate
in $i$, and similarly for the probability that $i$'s posterior belief
is 0.5 or higher when $\omega=0$. So, $v_{i}^{M}\to0$ as an exponential
rate in $i$ also. 
\end{proof}

Next, we establish a monotonicity result that says when the signal-counting interpretation holds, the conditional mean of log-belief in the $\omega=1$ state is monotonic with respect to the agent's expected utility. 

\begin{lem}
\label{lem:monotonic_ri} Suppose $i$ and $j$'s log-beliefs have
the following conditional distributions (given the two states): $\ell_{i}\sim\mathcal{N}(\pm r_{i}\cdot\frac{2}{\sigma_{\alpha}^{2}},r_{i}\cdot\frac{4}{\sigma_{\alpha}^{2}})$
and $\ell_{j}\sim\mathcal{N}(\pm r_{j}\cdot\frac{2}{\sigma_{\beta}^{2}},r_{j}\cdot\frac{4}{\sigma_{\beta}^{2}})$.
If $r_{i}\cdot\frac{2}{\sigma_{\alpha}^{2}}>r_{j}\cdot\frac{2}{\sigma_{\beta}^{2}}$,
then $i$'s expected utility in the decision problem is strictly higher
than $j$'s expected utility in the same decision problem.
\end{lem}
\begin{proof}
This is clearly true if $r_{j}=0,$ so suppose $r_{j}>0.$ Let $c=\frac{r_{j}}{r_{i}}\cdot\frac{\sigma_{\alpha}^{2}}{\sigma_{\beta}^{2}}$.
with $0<c<1$ and let $\varphi:=c\cdot\ell_{i}$. Since there is a
one-to-one map between $\varphi$ and $i$'s belief, $i$ would have
the same expected utility if they only observed $\varphi$ before
choosing an action from $A.$ We have the conditional distributions
$\varphi\sim\mathcal{N}(\pm r_{j}\cdot\frac{2}{\sigma_{\beta}^{2}},c\cdot r_{j}\cdot\frac{4}{\sigma_{\beta}^{2}})$
in the two states. Let $\zeta$ be an independent random variable
with the distribution $\zeta\sim\mathcal{N}(0,(1-c)\cdot r_{j}\cdot\frac{4}{\sigma_{\beta}^{2}})$.
Since $(\varphi+\zeta)\sim\mathcal{N}(\pm r_{j}\cdot\frac{2}{\sigma_{\beta}^{2}},r_{j}\cdot\frac{4}{\sigma_{\beta}^{2}})$,
which is the same as the conditional distributions of $j$'s log-belief
in the two states, by the same argument we know that $j$ would have
the same expected utility if they only observed $\varphi+\zeta$ before
choosing an action from $A$. 

Every realization of $\varphi$ induces a belief $p(\varphi)$ in
$(0,1)$ for $i.$ Since $\ell_{i}$ has full-support on $\mathbb{R}$,
the distribution of beliefs induced by $\varphi$ also has full-support
on $[0,1].$ Conditional on every such belief, $j$ has weakly lower
expected utility than $i$. Now we show $j$ has strictly lower expected
utility conditional on a positive measure of beliefs. Given every
$p\in[0,1],$ $\text{argmax}_{a\in A}\{p\cdot u(a,1)+(1-p)\cdot u(a,0)\}$
is non-empty and compact since $u$ is continuous in its first argument.
Let the lowest and highest maximizers be denoted $\underline{a}^{*}(p)$
and $\bar{a}^{*}(p).$ We must have $(\underline{a}^{*}(p),\bar{a}^{*}(p))\ne(\underline{a},\bar{a})$
for a positive measure of $p$, since no action is weakly dominated.
Let $p$ be such a belief and without loss suppose $\underline{a}^{*}(p)>\underline{a}$.
Then there exists $\epsilon>0$ such that for every belief in $(0,\epsilon),$
no action $a\ge\underline{a}^{*}(p)$ is optimal. When $j$ has such
a belief, they must choose an action that is strictly suboptimal from
the perspective of someone who knows the probability of $\omega=1$
is $p$. But since the noise term $\zeta$ has full-support on $\mathbb{R},$
there is positive probability that $j$'s belief is in $(0,\epsilon)$
when $i$'s belief is $p$. This then shows that $j$ has strictly
lower expected utility than $i$ when $i$ has the belief $p,$ and
this holds for a strictly positive measure of $i$'s beliefs. Hence,
$j$ has strictly lower expected utility than $i$ overall. 
\end{proof}
Finally, we show $\sum_{i}\delta^{i-1}v_i^M>\sum_{i}\delta^{i-1}v_i^{M'}$
under the hypotheses of Proposition \ref{prop:summing_welfare}.
\begin{proof}
Find $\epsilon>0$ small enough so that $AE_M-\epsilon>(AE_{M'}+\epsilon)\cdot(1+\epsilon).$
There is some $N$ so that each agent $i\ge N$ aggregates at least
$(AE_M-\epsilon)\cdot i$ signals in network $M$ and no more than
$(AE_{M'}+\epsilon)\cdot i$ signals in network $M'$. Let $C$ be the
difference in expected utility between getting $AE_M-\epsilon$
signals of variance 1 and $(AE_{M'}+\epsilon)\cdot(1+\epsilon)$ signals
of variance 1. Find $\underline{\hat{\sigma}}^{2}>0$ large enough
so that whenever $\sigma^{2}\ge\underline{\hat{\sigma}}^{2},$ all
agents before $N$ in both networks will get utility so close to the expected utility under no information
that the undiscounted difference between the sums of utilities for
the first $N-1$ agents across the two networks is strictly smaller
than $C/2$.

Let $\underline{\sigma}^{2}=\max(\underline{\hat{\sigma}}^{2},1/\epsilon,N).$
We will show that there exists a function $\underline{\delta}:[\underline{\sigma}^{2},\infty)\to(0,1)$
such that $\sum_{i}\delta^{i-1}v_i^M>\sum_{i}\delta^{i-1}v_i^{M'}$
whenever $\delta\ge\underline{\delta}(\sigma^{2})$. Let $\underline{\delta}(\sigma^{2})$
be  $(1/2)^{1/\lceil \sigma^2 \rceil}$ so that $\underline{\delta}(\sigma^{2})^{\left\lceil \sigma^{2}\right\rceil }C=C/2.$
Whenever $\sigma^{2}\ge\underline{\sigma}^{2}$ and $\delta\ge\underline{\delta}(\sigma^{2})$,
first note $v_i^M>v_i^{M'}$ for every $i\ge N,$ and that $\sum_{i=1}^{N-1}\delta^{i-1}v_i^M-\sum_{i}^{N-1}\delta^{i-1}v_i^{M'}\ge-C/2$
since $\sigma^{2}\ge\underline{\sigma}^{2}\ge\underline{\hat{\sigma}}^{2}.$
Because both $\sum_{i}\delta^{i-1}v_i^M$ and $\sum_{i}\delta^{i-1}v_i^{M'}$
are convergent, it suffices to identify one agent $i^{*}\ge N$ so
that $\delta^{i^{*}-1}(v_{i^{*}}^{M}-v_{i^{*}}^{M'})\ge C/2.$ Consider
the agent $i^{*}=\left\lceil \sigma^{2}\right\rceil $, where $i^{*}\ge N$
since since $\sigma^{2}\ge\underline{\sigma}^{2}\ge N.$ In network $M$, this agent
has more than $\left\lceil \sigma^{2}\right\rceil \cdot(AE_M-\epsilon)$
signals with variance $\sigma^{2},$ so $v_{i^{*}}^{M}$ is higher
than the expected utility of $AE_M-\epsilon$ signals of variance
1 (by Lemma \ref{lem:monotonic_ri}). At the same time, $\left\lceil \sigma^{2}\right\rceil \cdot\text{\ensuremath{\frac{1}{\sigma^{2}}\le(1+\epsilon)} since }$
$\sigma^{2}\ge\underline{\sigma}^{2}\ge1/\epsilon,$ so $v_{i^{*}}^{M'}$
is lower than the expected utility of $(AE_{M'}+\epsilon)\cdot(1+\epsilon)$
signals of variance 1 (again using Lemma \ref{lem:monotonic_ri}). This shows $(v_{i^{*}}^{M}-v_{i^{*}}^{M'})\ge C$,
and we know $\delta^{i^{*}-1}(v_{i^{*}}^{M}-v_{i^{*}}^{M'})\ge\underline{\delta}(\sigma^{2})^{\left\lceil \sigma^{2}\right\rceil }\cdot C=C/2.$
\end{proof}

\subsection{Details on Example \ref{ex:welfare_loss}}

We want to show that $$\lim_{K \rightarrow \infty} \frac{\sum_{i}v_{i}^{M}}{\sum_{i}v_{i}^{M'}} = \infty.$$

For each integer $j>0$, Let $v^{(j)}$ be the expected utility of an agent observing $j$ independent signals with precision $\sigma_0^2$.

On network $M$, for each agent in generation $t$ we have $r_i \leq K + 3(t-2)$ for all $i$ by Proposition \ref{prop:starting_3}. Since $K + 3(t-2) \leq 2K$ whenever $t \leq K/3 + 2$, all agents in the first $\lfloor K/3+2 \rfloor$ generations have expected utility at most $v^{(2)}$.  So
$$\sum_{i}v_{i}^{M} \le K (  K/3+1) v^{(2)}  < 0.$$

On the complete network $M'$, each agent in generation $t$ has $r_i \geq (t-1)K$, so $$0 > \sum_{i}v_{i}^{M'} > K \sum_{j=0}^{\infty} v^{(j)}.$$
By Lemma \ref{lem:convergent_utility}, the sum $\sum_{j=0}^{\infty} v^{(j)}$ is convergent. So we have
$$0 > \sum_{i}v_{i}^{M'} > -C'K$$
for some constant $C' > 0$ (which is independent of $K$).

Combining these bounds, we have
$$\frac{\sum_{i}v_{i}^{M}}{\sum_{i}v_{i}^{M'}} > \frac{K( K/3+1) v^{(2)} }{-C'K}.$$
The right-hand side diverges to infinity as $K \rightarrow \infty$.

\subsection{Proof of Corollary \ref{cor:mentor}}
\begin{proof}
We claim that for any agent $i$ in generation $t$, the action $\ell_{i}$
is equal to the sum of $\lambda_{i}$ and $\lambda_{j}$ for all agents
$j$ in generations $1,\ldots,t-1$. The proof is by induction on
$t$. The claim holds for the first generation because all agents
in the first generation choose $\ell_{i}=\lambda_{i}$.

Consider an agent in generation $t$. By the inductive hypothesis,
she observes neighbors' actions $\ell_{j}=\lambda_{j}+\sum_{j'\leq(t-2)K}\lambda_{j'}$
for all $j$ in generation $t-1$ and observes $s_{j}$ for one such
$j$. Therefore, she can compute $\sum_{j'\leq(t-2)K}\lambda_{j'}$
and $\lambda_{j}$ for all $j$ in generation $t-1$. Since these
signals are independent and she has access to no information about
other signals from her generation, she chooses $\ell_{i}=\lambda_{i}+\sum_{j\leq(t-1)K}\lambda_{j}.$
By induction, we have $r_{i}=K(t-1)+1>i-K$ for all agents in generation
$t$.
\end{proof}

\subsection{Proof of Corollary \ref{cor:silo}}

Each information silo is equivalent to a maximum generations network,
so the expression for $r_{i}$ for agents in information silos follows
immediately from Theorem \ref{thm:efficiency_dc}.

The actions of agents in separate information silos are conditionally
independent. For an agent in position $t(K+1)$, we have $\frac{r_{t(K+1)}}{t}\geq\sum_{n=1}^{N}2 - \frac{1}{|S_{n}|}$
for $t$ large, because that agent observes conditionally independent
actions of  agents $k_n$ in each silo $n$ with $\lim_{t \to \infty}\frac{r_{(t-1)(K+1)+k_n}}{t}=2 - \frac{1}{|S_{n}|}$
for $1\leq n\leq N$. On the other hand, an upper bound on $\frac{r_{t(K+1)}}{t}$ is
$\frac{1}{t}+\sum_{n=1}^{N}\frac{r_{t(K+1)+k_{n}}}{t}$. This is because agent in position $k_{n}$ in generation
$t$ observes all of the generation $t-1$ agents in silo $S_{n}$,
just as the position $K+1$ agent of generation $t$ does. But $\lim_{t\to\infty}\frac{r_{t(K+1)+k_{n}}}{t}\le\lim_{t\to\infty}\frac{r_{(t-1)(K+1)+k_{n}}+|S_{n}|}{t}=2-\frac{1}{|S_{n}|}$.

\section{Optimal Information Silos}\label{a:silo_design}

Section~\ref{subsec:Application-2:-Information} showed information silos can improve the rate of learning for agents in some positions at the cost of the rate of learning for agents in other positions. This tradeoff suggests a design problem, and we now ask how agents are optimally partitioned into information silos.

Consider an organization-design problem where the organization chooses how to partition the worker positions $\{1,...,K\}$ into silos.
For each position $k\in\{1,...,K+1\}$, write $g_{k}:=\lim_{t\to\infty}\frac{r_{(t-1)(K+1)+k}}{t}$
as the number of signals that agents in position $k$ eventually aggregate
per generation. Suppose the organization maximizes $\sum_{k=1}^{K}g_{k}+\alpha g_{K+1}$
for some $\alpha$. To interpret, positions $\{1,...,K\}$ refer
to the workers in the organization and position $K+1$ refers to the
executive. The organization's success depends on some convex combination
between how much information the workers aggregate and how much information
the executive aggregates per generation, with the weight on each worker's
information aggregation normalized to one.

Proposition \ref{prop:optimal_silos}   shows that the optimal number of silos is increasing in $\alpha$ and characterizes the optimal structure when $\alpha \geq 1$ (so executives' learning is at least as valuable as other agents' learning). As long as $\alpha \geq 1$ and $K\ge4$, it is optimal for the organization
to partition agents into multiple silos.
\begin{prop} \label{prop:optimal_silos}
Consider the organization's problem of maximizing $\sum_{k=1}^{K}g_{k}+\alpha g_{K+1}$
for some $\alpha \geq 1$. 
\begin{itemize}
\item The optimal number of silos is increasing in $\alpha$.
\item When $1\le\alpha<2$ and $K$ is even, the only optimal organization
is $N=K/2$ and every worker is in a silo of size 2. When $1\le\alpha<2$
and $K$ is odd, every optimal organization features one silo of size
1 or 3 and all other silos have a size of 2. 
\item When $\alpha>2$, the only optimal organization is $N=K$ and every
worker is in a silo of size 1. 
\end{itemize}
\end{prop}

\begin{proof}
We know from Corollary \ref{cor:silo} that $g_{k}=2-\frac{1}{|S_{n}|}$ for $k\in S_{n}$
and $g_{K+1}=\sum_{n=1}^{N}2-\frac{1}{|S_{n}|}$. 

\selectlanguage{english}%
Let $s_{n}:=|S_{n}|>0$ denote silo sizes. Then $\sum_{n=1}^{N}s_{n}=K$.
Summing workers' $g_{k}$'s gives: 
\[
\sum_{k=1}^{K}g_{k}=\sum_{n=1}^{N}\sum_{k\in S_{n}}\left(2-\frac{1}{s_{n}}\right)=\sum_{n=1}^{N}s_{n}\left(2-\frac{1}{s_{n}}\right)=\sum_{n=1}^{N}(2s_{n}-1)=2K-N.
\]
For position $K+1$, we get 
\[
g_{K+1}=\sum_{n=1}^{N}\left(2-\frac{1}{s_{n}}\right)=2N-\sum_{n=1}^{N}\frac{1}{s_{n}}.
\]
Hence the organization's objective can be written as 
\begin{equation}
\sum_{k=1}^{K}g_{k}+\alpha g_{K+1}=(2K-N)+\alpha\left(2N-\sum_{n=1}^{N}\frac{1}{s_{n}}\right)=\sum_{n=1}^{N}f_{\alpha}(s_{n}),\label{eq:obj}
\end{equation}
where for each integer $s\ge1$ we define the per-silo value 
\[
f_{\alpha}(s):=(2s-1)+\alpha\left(2-\frac{1}{s}\right).
\]
Thus the problem is: choose $N\in\{1,\dots,K\}$ and integers $s_{1},\dots,s_{N}\ge1$
with $\sum s_{n}=K$ to maximize $\sum_{n=1}^{N}f_{\alpha}(s_{n})$.

\noindent\textbf{(1) In any optimal organization, any two silo sizes
differ by at most $1$.} Fix $\alpha \ge 1$. Consider an organization
structure with two silos with sizes $s$ and $s'$ where $s'\ge s+2.$
Consider transferring one agent from the larger silo to the smaller
silo. The change in objective is 
\[
\bigl[f_{\alpha}(s'-1)+f_{\alpha}(s+1)\bigr]-\bigl[f_{\alpha}(s')+f_{\alpha}(s)\bigr]=\alpha\left(\frac{1}{s(s+1)}-\frac{1}{s'(s'-1)}\right).
\]
Since $s'\ge s+2$ implies $s'(s'-1)\ge(s+2)(s+1)>s(s+1)$, we have
$\frac{1}{s(s+1)}-\frac{1}{s'(s'-1)}>0$, and therefore the change
is strictly positive. Hence any feasible profile with two silos differing
by at least $2$ can be strictly improved, so in any optimum all silo
sizes differ by at most $1$.

\medskip{}
\noindent\textbf{(2) Optimal number of silos is weakly increasing in $\alpha$.}
For each $N\in\{1,\dots,K\}$ define the best achievable value with
exactly $N$ silos: 
\[
F_{N}(\alpha):=\max_{\substack{s_{1}+\cdots+s_{N}=K\\
s_{n}\ge1
}
}\sum_{n=1}^{N}f_{\alpha}(s_{n}).
\]
From (\ref{eq:obj}), 
\[
\sum_{n=1}^{N}f_{\alpha}(s_{n})=(2K-N)+\alpha\left(2N-\sum_{n=1}^{N}\frac{1}{s_{n}}\right).
\]
For fixed $N$, maximizing this expression is equivalent to minimizing
$\sum_{n=1}^{N}1/s_{n}$ subject to $\sum_{n=1}^{N}s_{n}=K$. Let
\[
H_{N}:=\min_{\substack{s_{1}+\cdots+s_{N}=K\\
s_{n}\ge1
}
}\sum_{n=1}^{N}\frac{1}{s_{n}}.
\]
Then 
\[
F_{N}(\alpha)=(2K-N)+\alpha(2N-H_{N}),
\]
a linear function of $\alpha$ with slope $B_{N}:=2N-H_{N}$.

\smallskip{}
\noindent We claim that the slopes $(B_{N})_{N=1}^{K}$ are strictly increasing:
$B_{N+1}>B_{N}$ for all $N\le K-1$.

\smallskip{}
\noindent To see this, fix $N\le K-1$ and let $(s_{1},\dots,s_{N})$ attain
$H_{N}$. Since $\sum_{n=1}^{N}s_{n}=K$ and $N\le K-1$, at least
one silo has size $\hat{s}\ge2$. Split that silo into sizes $1$
and $\hat{s}-1$, producing an $(N+1)$-silo feasible profile. The
reciprocal sum increases by 
\[
\left(1+\frac{1}{\hat{s}-1}\right)-\frac{1}{\hat{s}}=1+\frac{1}{\hat{s}(\hat{s}-1)}\le1+\frac{1}{2}=\frac{3}{2}.
\]
Therefore $H_{N+1}\le H_{N}+\frac{3}{2}$, and so 
\[
B_{N+1}-B_{N}=\bigl(2(N+1)-H_{N+1}\bigr)-\bigl(2N-H_{N}\bigr)=2-(H_{N+1}-H_{N})\ge2-\frac{3}{2}=\frac{1}{2}>0.
\]

\smallskip{}
\noindent Now take $\alpha_{2}>\alpha_{1} \ge 1$ and suppose $N_{1}$ is optimal
at $\alpha_{1}$ (i.e., $F_{N_{1}}(\alpha_{1})=\max_{N}F_{N}(\alpha_{1})$).
If, contrary to the claim, every maximizer at $\alpha_{2}$ used strictly
fewer than $N_{1}$ silos, let $N_{2}<N_{1}$ be such that $F_{N_{2}}(\alpha_{2})=\max_{N}F_{N}(\alpha_{2})$.
Consider 
\[
D(\alpha):=F_{N_{1}}(\alpha)-F_{N_{2}}(\alpha).
\]
This is a linear function of $\alpha$ with slope $B_{N_{1}}-B_{N_{2}}>0$
by the lemma. Since $D(\alpha_{2})\le0$ and $D$ has strictly positive
slope, it follows that $D(\alpha_{1})<D(\alpha_{2})\le0$, hence $F_{N_{1}}(\alpha_{1})<F_{N_{2}}(\alpha_{1})$,
contradicting the optimality of $N_{1}$ at $\alpha_{1}$. Therefore
there exists an optimal organization at $\alpha_{2}$ with at least
$N_{1}$ silos.

\medskip{}
\noindent\textbf{(3) If $\alpha>2$, the unique optimum is all singletons.}
Let $s\ge2$ and compare one silo of size $s$ to splitting it into
sizes $1$ and $s-1$. The gain is 
\begin{align*}
\Delta(s):= & f_{\alpha}(1)+f_{\alpha}(s-1)-f_{\alpha}(s)\\
= & [1+\alpha]+[2(s-1)-1+\alpha(2-\frac{1}{s-1})]-[(2s-1)+\alpha(2-\frac{1}{s})]\\
= & \alpha-1-\frac{\alpha}{s(s-1)}
\end{align*}
Since $s(s-1)\ge2$, we have $\Delta(s)\ge\frac{\alpha}{2}-1>0$ whenever
$\alpha>2$. Thus any organization containing a silo of size at least
$2$ can be strictly improved by splitting off a singleton. Iterating,
the only organization to which no such improvement applies has $K$
silos of size $1$, i.e., $N=K$. Because each improvement is strict,
every optimal organization involves all singletons. 

\medskip{}
\noindent\textbf{(4) If $1\le\alpha<2$, optimal silos have size $2$ except
possibly one silo with size 1 or 3.} We progressively rule out the
other possibilities. 

\smallskip{}
\noindent\emph{Step 1: No silo of size $\ge4$.} Fix $s\ge4$ and compare one
silo of size $s$ to two silos of sizes $2$ and $s-2$: 
\[
\Delta_{4}(s):=f_{\alpha}(2)+f_{\alpha}(s-2)-f_{\alpha}(s)=-1+\alpha\left(\frac{3}{2}-\frac{2}{s(s-2)}\right).
\]
For $s\ge4$, $\frac{2}{s(s-2)}\le\frac{1}{4}$, hence 
\[
\Delta_{4}(s)\ge-1+\alpha\left(\frac{5}{4}\right)\ge-1+\frac{5}{4}=\frac{1}{4}>0\qquad\text{for all }\alpha\ge1.
\]
So no optimal organization contains any silos of size $4$ or larger
when $\alpha\ge1$. Thus all optimal silo sizes lie in $\{1,2,3\}$.

\smallskip{}
\noindent\emph{Step 2: At most one silo of size $1$ when $\alpha<2$.} Compare
two singleton silos to one silo of size $2$: 
\[
f_{\alpha}(2)-2f_{\alpha}(1)=1-\frac{\alpha}{2}.
\]
This is strictly positive for $\alpha<2$. Hence no optimum for $1\le\alpha<2$
contains two size-$1$ silos.

\smallskip{}
\noindent\emph{Step 3: No optimum contains both a size-$1$ and a size-$3$
silo.} Compare silos of sizes $(1,3)$ to $(2,2)$: 
\[
2f_{\alpha}(2)-\bigl(f_{\alpha}(1)+f_{\alpha}(3)\bigr)=\frac{\alpha}{3}>0.
\]
Thus any profile containing both sizes $1$ and $3$ can be strictly
improved.

\smallskip{}
\noindent\emph{Step 4: At most one size-$3$ silo when $\alpha\ge1$.} Compare
$(3,3)$ to $(2,2,2)$: 
\[
3f_{\alpha}(2)-2f_{\alpha}(3)=-1+\frac{7}{6}\alpha\ge-1+\frac{7}{6}=\frac{1}{6}>0\qquad\text{for all }\alpha\ge1.
\]
Hence no optimum contains two size-$3$ silos when $\alpha\ge1$.

\smallskip{}
\noindent Combining Steps 1--4, any optimal organization for $1\le\alpha<2$
consists only of size-$2$ silos, plus possibly one odd-sized silo,
which must be of size $1$ or $3$ (and not both).

\noindent If $K$ is even, the sum of silo sizes is even, so it is
impossible to have exactly one odd-sized silo. Hence all silos must
have size $2$, and the only feasible way to sum to $K$ is $N=K/2$
with every silo of size $2$. If $K$ is odd, feasibility requires
at least one odd-sized silo; since the only possible odd sizes are
$1$ and $3$ and we cannot have both, every optimal organization
has exactly one silo of size $1$ \emph{or} exactly one silo of size
$3$, and all remaining silos have size $2$.

\end{proof}

\section{Computational Aspects of Calculating $r_{i}$'s and Application to
Collegiate Facebook Networks}\label{a:computation}

In this section, we first provide an algorithm for computing $r_{i}$'s
for all agents in a finite network. The number of operations required
by this algorithm grows at the rate of $nd^{3}+n^{2}d$, where $n$
is the total number of agents and $d$ is an upper bound in the number
of observations of each agent. This is especially efficient in sparse
networks where agents have far fewer neighbors than the size of the
network. 

Next, we apply our framework of measuring the efficiency of information
aggregation to study rational sequential social learning in a collection
of 100 Facebook friendship networks for different universities in the United States. We are not aware of applications of models of rational social learning to larger empirical networks. Exploiting
the computational tractability of our setting with binary state and
Gaussian signals, we calculate the exact $r_{i}$'s for all agents in the collegiate
Facebook networks. 

\subsection{Efficiently Computing $r_{i}$'s in Sparse Social Networks }

The key idea is to maintain in memory the conditional covariance matrix
for agents' log-actions and to update this matrix as we compute $r_{i}$
for each new agent. 
\begin{lyxalgorithm}
\label{alg:iterative_cov} 

Maintain in memory $r_{i}$ for $1\le i\le n$ and an $n\times n$
conditional covariance matrix $C$.

Initialize $r_{1}=1$ and $C_{1,1}=1$.

For $2\le i\le n$:
\begin{enumerate}
\item Let $d_{i}=|N(i)|$. 
\item Let $\Sigma_{i}=(C_{j,k})_{j,k\in N(i)}$ be the $d_i \times d_i$ conditional
covariance matrix of $i$'s neighbors' log-actions. 
\item Let $r_{N(i)}=(r_{j})_{j\in N(i)}$ be a vector containing the $r_{j}$'s
for $i$'s neighbors. 
\item Compute $\Sigma_{i}^{-1}$ and compute $\beta_{i}=r_{N(i)}\cdot\Sigma_{i}^{-1}$,
which is $i$'s rational weights on neighbors' log-actions. 
\item Compute $r_{i}=1+\sum_{j\in N(i)}\beta_{i,j}r_{j}$. 
\item Update $C$: 
\begin{itemize}
\item Let $C_{i,i}=r_{i}.$ 
\item For each $k<i,$ let $C_{i,k}=\sum_{j\in N(i)}\beta_{i,j}C_{j,k}$
and also symmetrically set $C_{k,i}=C_{i,k}$. 
\end{itemize}
\end{enumerate}
\end{lyxalgorithm}
\begin{prop} \label{prop:algorithm}
Algorithm \ref{alg:iterative_cov} computes $(r_{i})_{i=1}^{n}$ and
its time complexity is $O(nd^{3}+n^{2}d)$ if $|N(i)|\le d$ for every agent $i$.  
\end{prop}
In particular, in networks where there is a universal bound $d$ on
the number of neighbors observed by any agent, Algorithm \ref{alg:iterative_cov}
runs in $O(n^{2})$ time in the number of agents $n$ in the network. 

\begin{proof}
It is without loss to assume $\sigma^{2}=4$, since $r_{i}$'s do
not depend on signal precision. Then for every $i,$ $r_{i}=2\mathbb{E}[\ell_{i}\mid\omega=1]=\text{Var}[\ell_{i}\mid\omega=1]$.
For the initialization, since $i=1$ has no social information, clearly
$r_{1}=1$ and so $C_{1,1}=1$ also under the assumption $\sigma^{2}=4.$
Assume by induction that $r_{i}$ and $C_{j,k}$ are correctly computed
for all $i,j,k\le I$. The formula from Proposition  \ref{prop:linear}
implies that $\beta_{I+1}=r_{N(I+1)}\cdot\Sigma_{I+1}^{-1}$ are the
rational weights for $I+1.$ We also know 
\begin{align*}
r_{I+1} & =2\mathbb{E}[\ell_{I+1}\mid\omega=1]\\
 & =2\mathbb{E}[\lambda_{I+1}+\sum_{j\in N(I+1)}\beta_{I+1,j}\ell_{j}\mid\omega=1]\\
 & =2\cdot[\frac{1}{2}+\sum_{j\in N(I+1)}\beta_{I+1,j}\cdot\frac{1}{2}r_{j}]\\
 & =1+\sum_{j\in N(I+1)}\beta_{I+1,j}r_{j}.
\end{align*}
Finally, $C_{I+1,I+1}=\text{Var}[\ell_{I+1}\mid\omega=1]=r_{I+1}$
when $\sigma^{2}=4$. For $k<I+1,$ we have $\text{Cov}[\ell_{I+1},\ell_{k}\mid\omega=1]=\text{Cov[}\sum_{j\in N(I+1)}\beta_{I+1,j}\ell_{j},\ell_{k}\mid\omega=1]=\sum_{j\in N(I+1)}\beta_{I+1,j}C_{j,k}$,
where we have used the fact that $\lambda_{I+1}$ is conditionally
independent of $\ell_{k}$. 

To analyze this algorithm's time complexity, we note that the matrix
inverse operation $\Sigma_{i}^{-1}$ requires $O(|N(i)|^{3})$ operations
using Gaussian elimination, and $|N(i)|^{3}\le d^{3}.$ Computing
$\beta_{i}$ and $r_{i}$ require a lower-order number of operations
than $O(d^{3}).$ Finally, updating each entry $C_{i,k}$ for $k<i$
is an order $O(|N(i)|)$ operation (since the $C_{j,k}$ terms are
already stored in memory), and no more than $n$ such conditional
covariance entries need to be computed. This shows each step of the
for loop requires $O(d^{3}+nd)$ operations. Since there are $n-1$
total steps in the for loop, we get $O(nd^{3}+n^{2}d)$ as the complexity
of the algorithm. 
\end{proof}

\subsection{Application to Collegiate Facebook Networks}
\label{subsec:facebook}
This dataset was first used by \cite*{traud2012social} and
was provided directly by the former chief technology officer of Facebook.
It contains all the Facebook users on the first 100 university
campuses in the US where Facebook was made available, as of September 2005.
The dataset shows information for each user, including their graduation
year and their Facebook friends within the same university. We restrict
attention to undergraduate students with graduation years between
2006 and 2009. Under this restriction, the 100 Facebook networks ranged
in size from 471 (Caltech) to 26748 (Penn State University), with
a mean of 7714 and a standard deviation of 5760. The average degree
of users in these networks ranged from 32 (Michigan Technological
University) to 113 (Harvard University), with a mean of 70 and a standard
deviation of 17. 

For each of the 100 networks, we draw 10 random orderings of the
students and calculated the number of signals $r_{i}$ aggregated
for each student under sequential rational learning for each ordering. We highlight three main findings, which we describe in more detail below:
\begin{enumerate}
    \item The random ordering of students has little effect on the amount of information loss.
    \item There is substantial information loss in many of the networks.
    \item There is substantial variation in information loss across networks, and much of this variation can be explained by the amount of clustering in the network.
\end{enumerate}

For the measures of learning efficiency we report below, we found
very little variation as we ordered the students in the universities
in different random ways. For every measure, the standard deviation
of the measure across the 10 orderings was on average no larger than
2\% of its mean across the orderings. Therefore, below we will focus
on the average of each measure across the 10 random orderings and
study how these measures co-vary with network clustering. 

For each network, we calculate $$\frac{1}{N}\sum_{i=1}^{N}\frac{r_{i}}{i} \text{
and } \frac{1}{N'}\sum_{i:N^{\text{ind}}(i) \neq \emptyset}\frac{r_{i}-1}{|N^{\text{ind}}(i)|},$$
where the indirect neighborhood $N^{\text{ind}}(i)$ is the set of students reachable by a path from $i$ in the observation network and $N'$ is the number of students
who have at least one Facebook friend. We call the first measure the
\emph{fraction of potential signals aggregated} and the second measure
\emph{fraction of reachable signals aggregated}. The fraction
of potential signals aggregated is an analog of aggregative efficiency
in finite networks. The fraction of reachable signals aggregated adjusts
 $r_i$ to account for the size of $i$'s indirect neighborhood. So low fractions of reachable signals aggregated reflect inefficiencies stemming from information confounding, and not merely from agents having access to few signals. 

\begin{table}

\begin{centering}
\begin{tabular}{|c|c|c|c|c|c|c|}
\hline 
\multicolumn{7}{|c|}{Fraction of potential signals aggregated}\tabularnewline
\hline 
Min & 1st Qu. & Median & Mean & 3rd Qu. & Max & Sd.\tabularnewline
\hline 
\hline 
0.2620 & 0.3474 & 0.4842 & 0.4701 & 0.5740 & 0.7506 & 0.1234\tabularnewline
\hline 
\end{tabular}
\par\end{centering}
\begin{centering}
\begin{tabular}{|c|c|c|c|c|c|c|}
\hline 
\multicolumn{7}{|c|}{Fraction of reachable signals aggregated}\tabularnewline
\hline 
Min & 1st Qu. & Median & Mean & 3rd Qu. & Max & Sd.\tabularnewline
\hline 
\hline 
0.4785 & 0.5958 & 0.7307 & 0.7049 & 0.7995  & 0.9291  & 0.1182\tabularnewline
\hline 
\end{tabular}
\par\end{centering}
\caption{Summary statistics of fraction of potential signals aggregated and
fraction of reachable signals aggregated for the 100 collegiate Facebook
networks. }\label{tab:Summary-statistics-of}

\end{table}

Table \ref{tab:Summary-statistics-of} displays summary statistics
of these two measures for the 100 collegiate Facebook networks. The
amount of information loss is sizable and heterogeneous in these networks.
On average, more than half of the potential signals are lost through
social learning. These losses are due to a combination of some students not being reachable
in one's indirect neighborhood and confounding obstructing aggregation of the signals from one's indirect neighborhood.
Using the fraction of reachable signals measure to normalize by one's
indirect neighborhood size, we see that information confounding causes
on average a 30\% loss of signals in these networks. 

Towards explaining some of the variability in these measures of social
learning efficiency across different universities, we compute the global clustering
coefficient for each (unordered) collegiate Facebook network (see,
for example, \cite{jackson2010social}).\footnote{Results are essentially unchanged if we use a different measure of
network clustering: the average of local clustering coefficients,
also defined in \cite{jackson2010social}.} This is defined as the fraction of closed triplets in the network:
that is, if $i$ and $j$ are Facebook friends and $j$ and $k$ are
Facebook friends, how likely is it that $i$ and $k$ are also Facebook
friends? Intuitively, higher clustering helps reduce information confounding,
since confounding never happens in a transitive network: if $i$ is
friends with all of their friends' friends, then they can account
for all of their friends' social information. 

\begin{figure}
\begin{centering}
\includegraphics[scale=0.5]{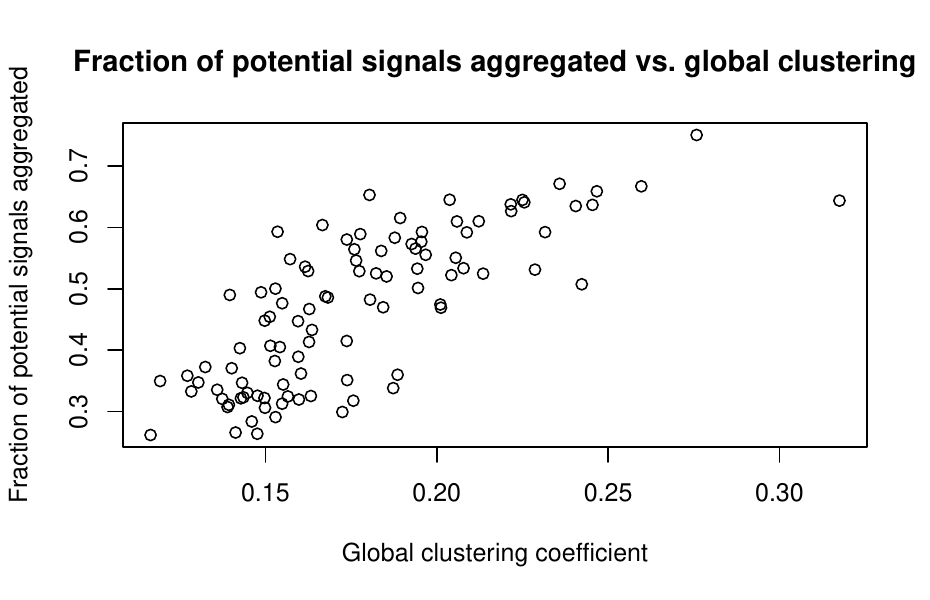}\includegraphics[scale=0.5]{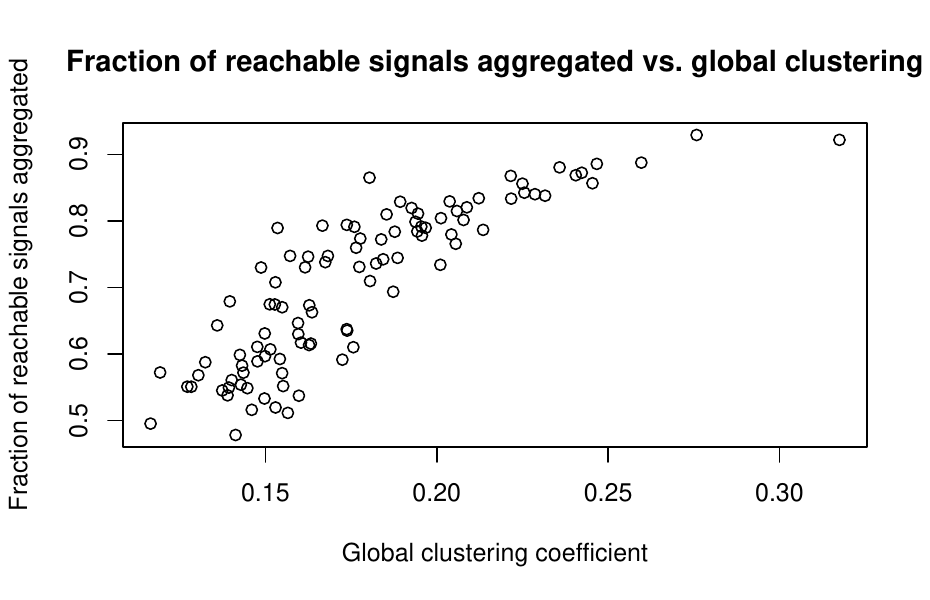}
\par\end{centering}
\caption{Correlation between between global clustering coefficient and two
measures of information aggregation for collegiate Facebook networks. }\label{fig:Correlation-between-between}

\end{figure}

Figure \ref{fig:Correlation-between-between} correlates the global clustering coefficient with the two measures of information
aggregation. We find a strong positive correlation between the global
clustering coefficient and both the fraction of potential signals
aggregated (correlation of 0.76) and the fraction of reachable signals
aggregated (correlation of 0.85). For example, the figure shows the
two outlier institutions in terms of high network clustering (Caltech
and Haverford College) also have among the highest levels of information
aggregation under both measures. The correlation between the clustering
coefficient and the fraction of reachable signals aggregated is stronger
than the correlation between the clustering coefficient and the fraction
of potential signals aggregated. This may be because more closely
knit communities help reduce confounding but also reduce the size
of students' indirect neighborhoods, so the effect on $r_{i}/i$ is
noisier.

\end{document}